\newcommand{\setdiff}{\! \setminus \!}
\theoremstyle{plain}
\newtheorem{thm}{Theorem}[section]
\newtheorem{cor}[thm]{Corollary}
\theoremstyle{definition}
\theoremstyle{remark}
\newtheorem{rem}[thm]{Remark}
\numberwithin{equation}{section}
\newcommand{\be}{\begin{equation}}
\newcommand{\ee}{\end{equation}}
\newcommand{\bea}{\begin{eqnarray}}
\newcommand{\eea}{\end{eqnarray}}
\newcommand{\beas}{\begin{eqnarray*}}
\newcommand{\eeas}{\end{eqnarray*}}
\theoremstyle{plain}
\theoremstyle{definition}
\newtheorem{prop}[thm]{Proposition}
\newtheorem{lem}[thm]{Lemma}
\newtheorem{ex}[thm]{Example}
\numberwithin{thm}{section}
\numberwithin{equation}{section}
\def\One{\mathbb{I}}
\def\trianglerg{\;\raisebox{-4mm}{\includegraphics[width=9mm]{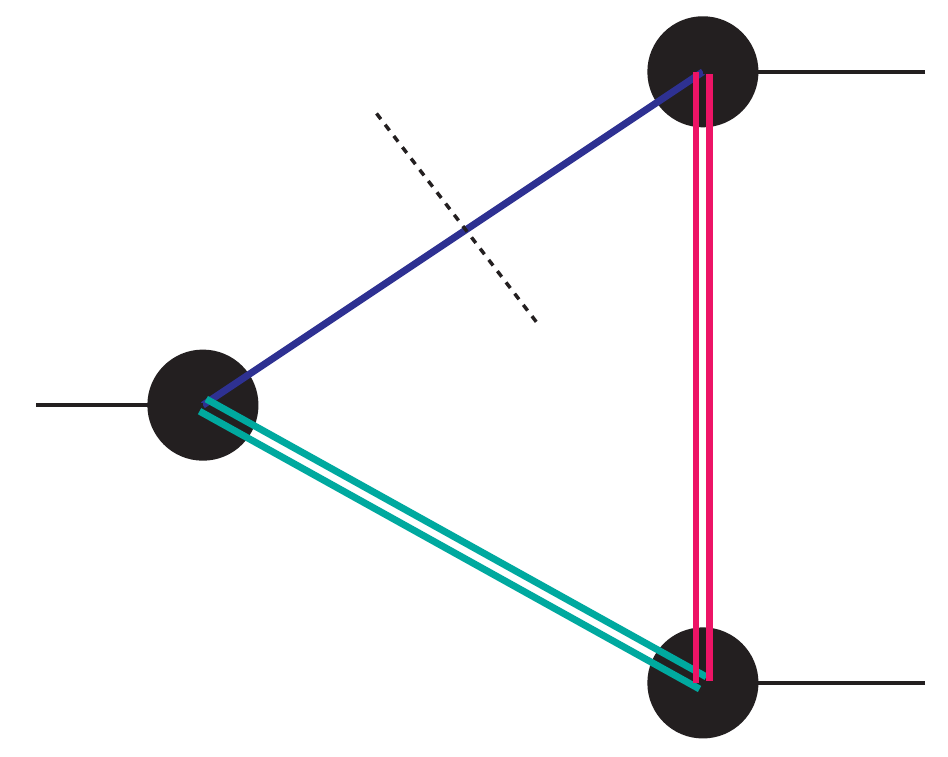}}\;}
\def\triangleg{\;\raisebox{-4mm}{\includegraphics[width=9mm]{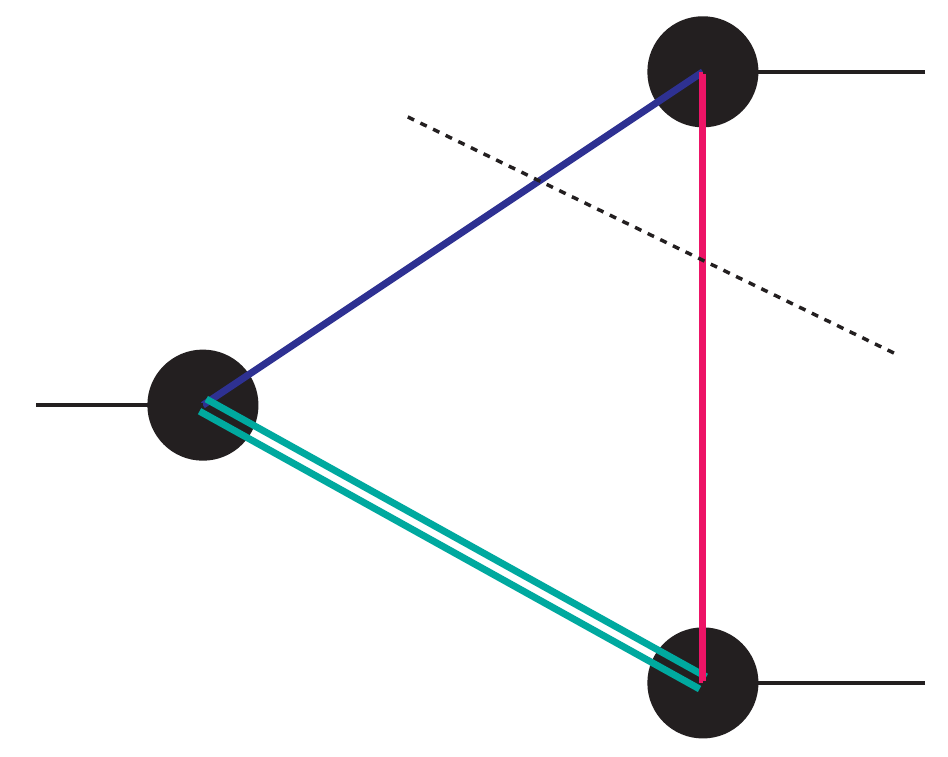}}\;}
\def\triangler{\;\raisebox{-4mm}{\includegraphics[width=9mm]{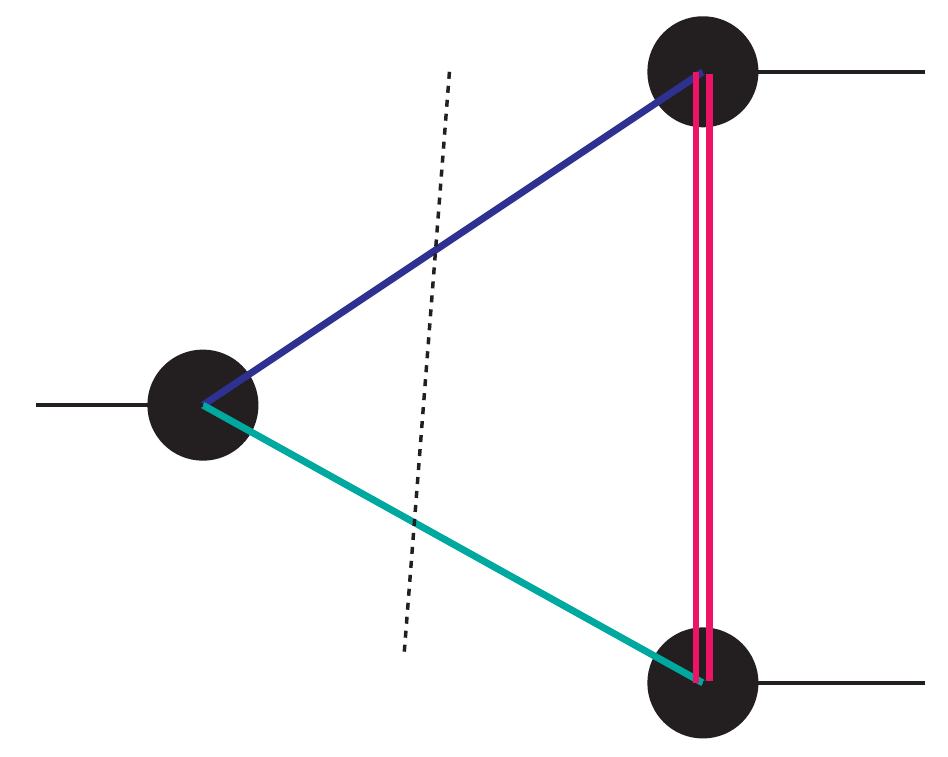}}\;}
\def\triangleb{\;\raisebox{-4mm}{\includegraphics[width=9mm]{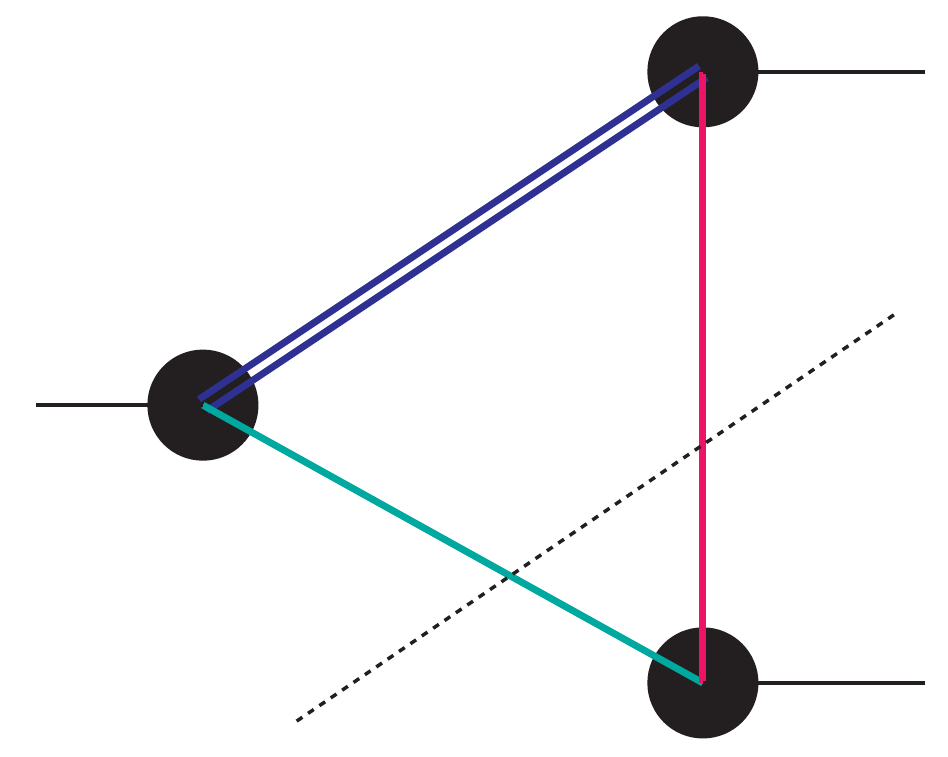}}\;}
\def\trianglebr{\;\raisebox{-4mm}{\includegraphics[width=9mm]{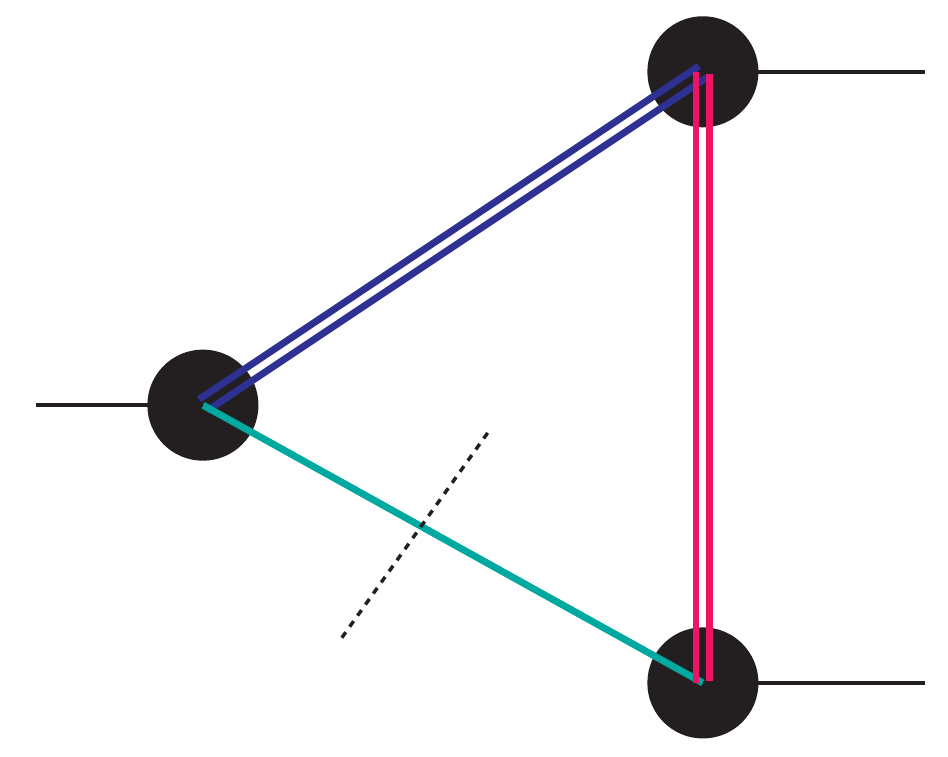}}\;}
\def\trianglegb{\;\raisebox{-4mm}{\includegraphics[width=9mm]{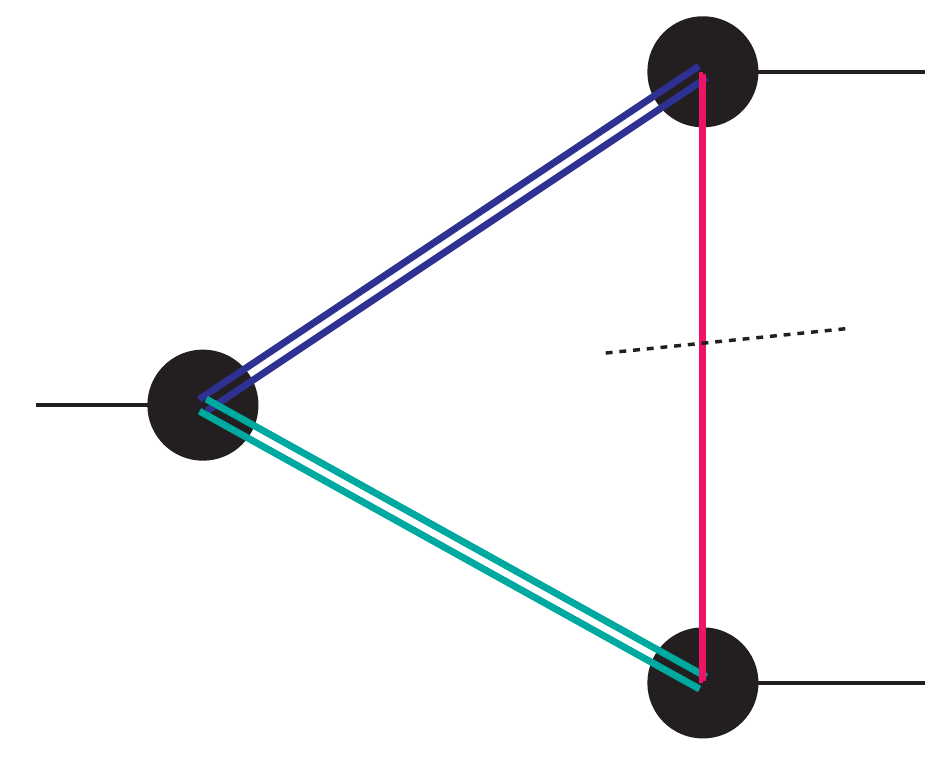}}\;}
\def\triangle{\;\raisebox{-4mm}{\includegraphics[width=9mm]{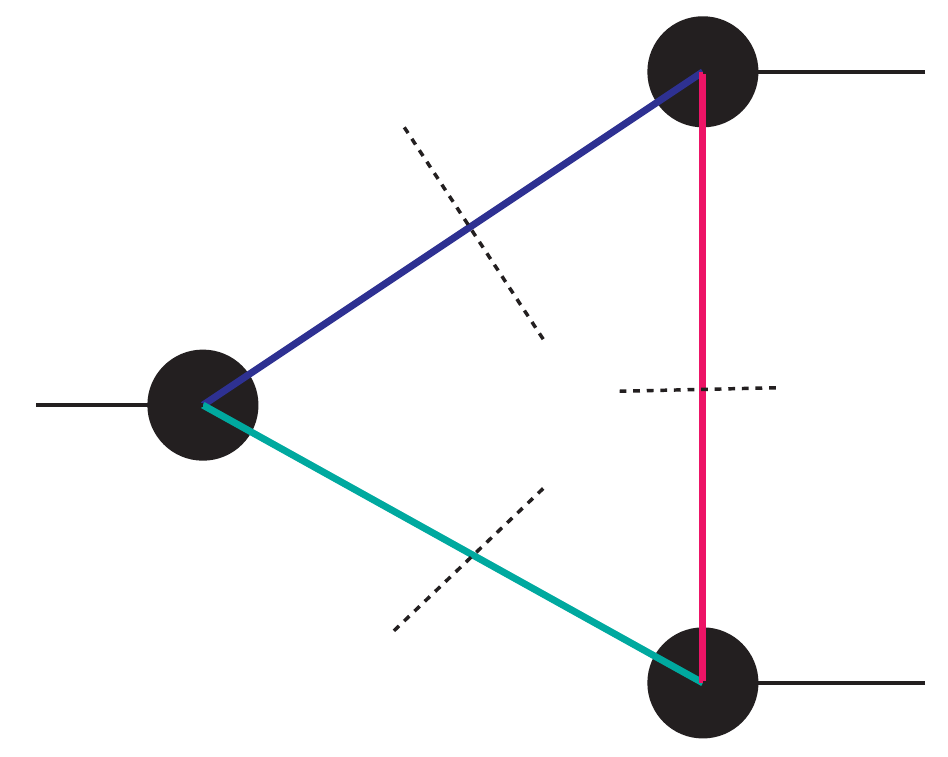}}\;}
\def\trianglebrg{\;\raisebox{-4mm}{\includegraphics[width=9mm]{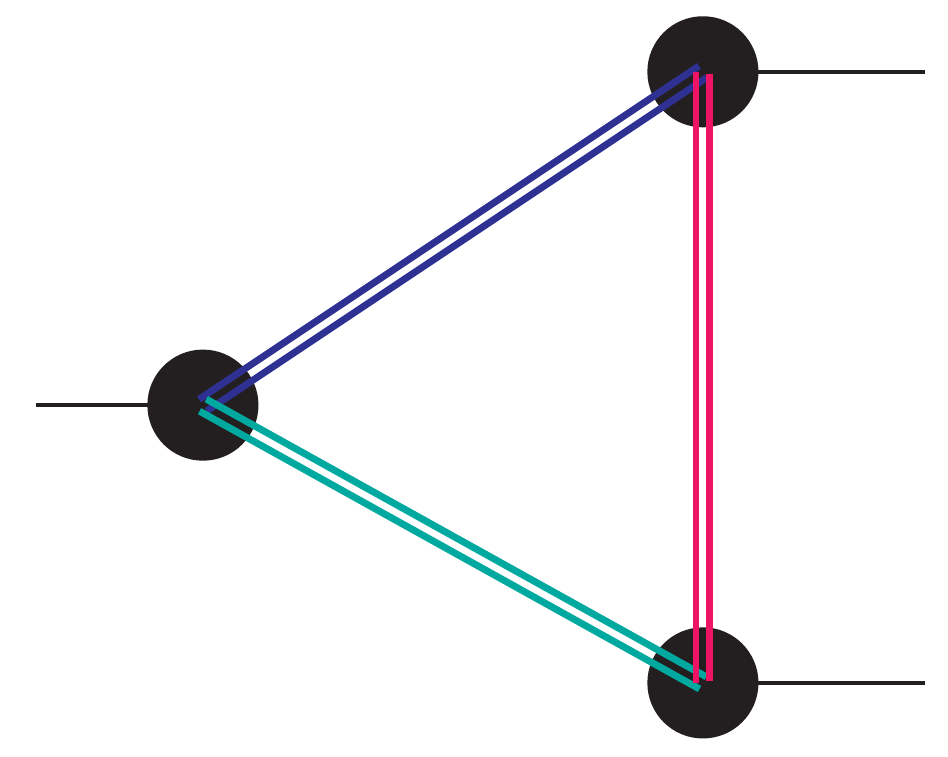}}\;}
\def\trianglebrr{\;\raisebox{-4mm}{\includegraphics[width=9mm]{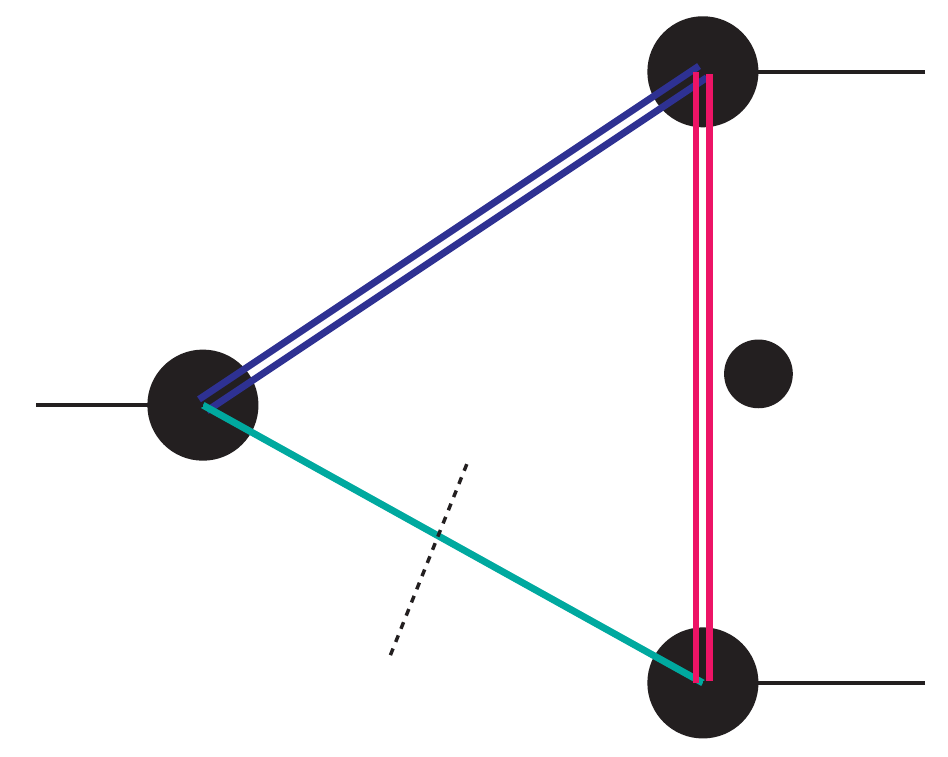}}\;}
\def\trianglebrrg{\;\raisebox{-4mm}{\includegraphics[width=9mm]{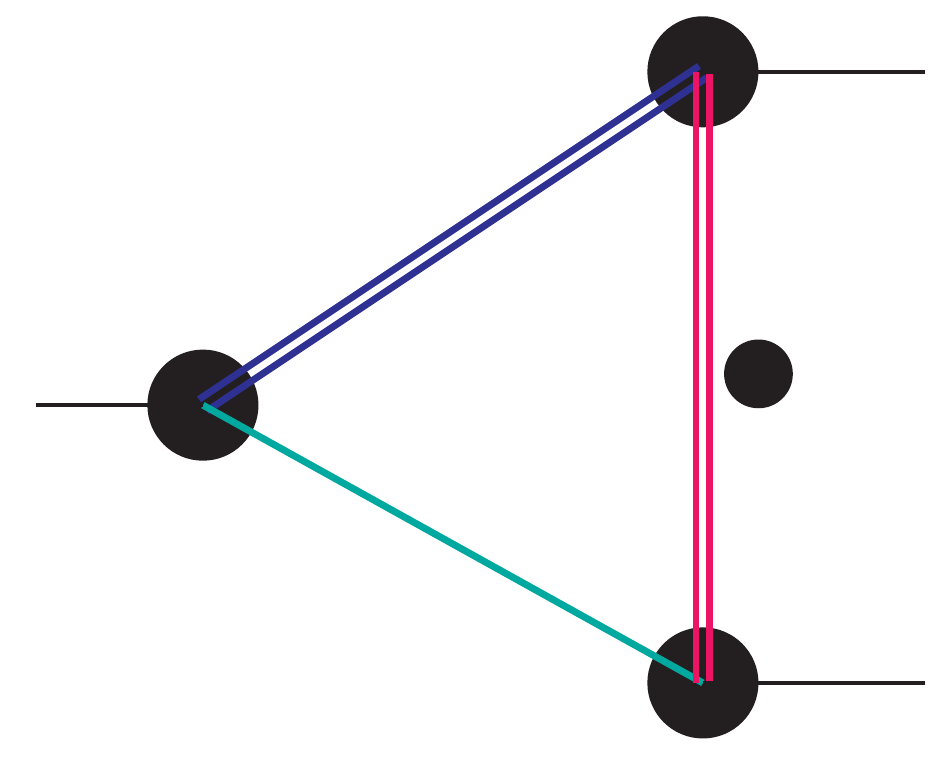}}\;}
\def\trianglebrb{\;\raisebox{-4mm}{\includegraphics[width=9mm]{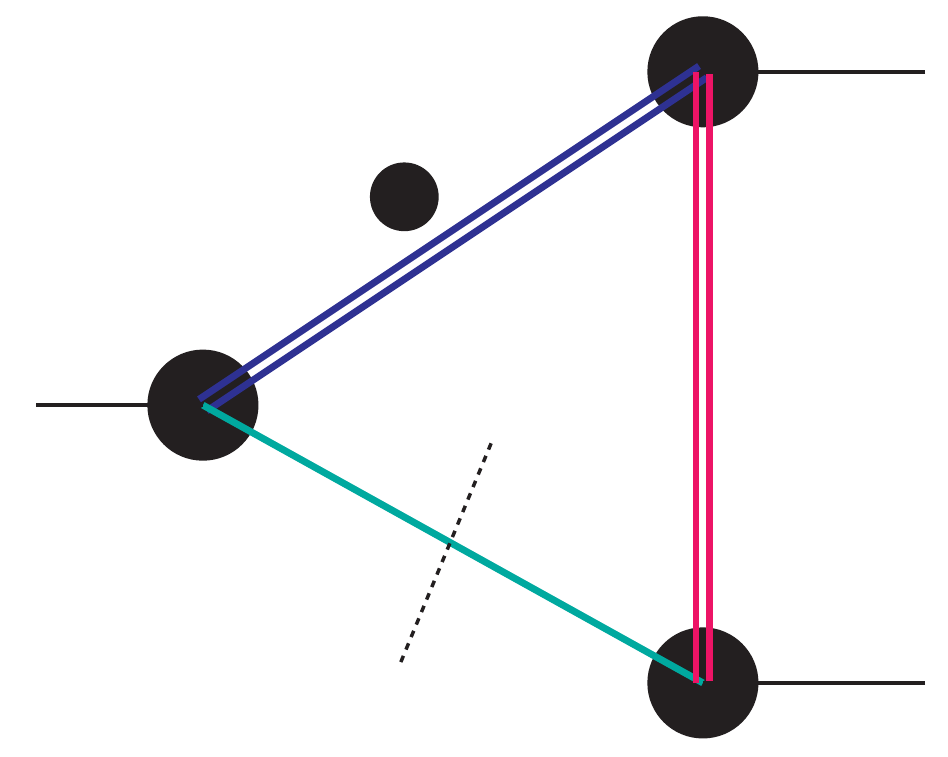}}\;}
\def\trianglergg{\;\raisebox{-4mm}{\includegraphics[width=9mm]{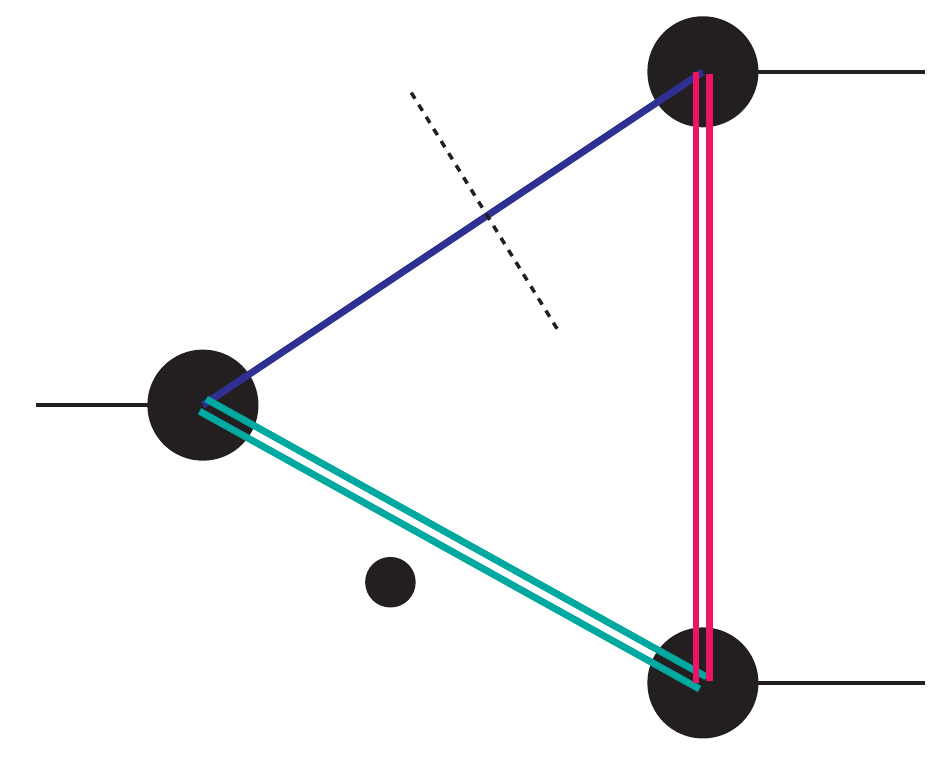}}\;}
\def\trianglerggb{\;\raisebox{-4mm}{\includegraphics[width=9mm]{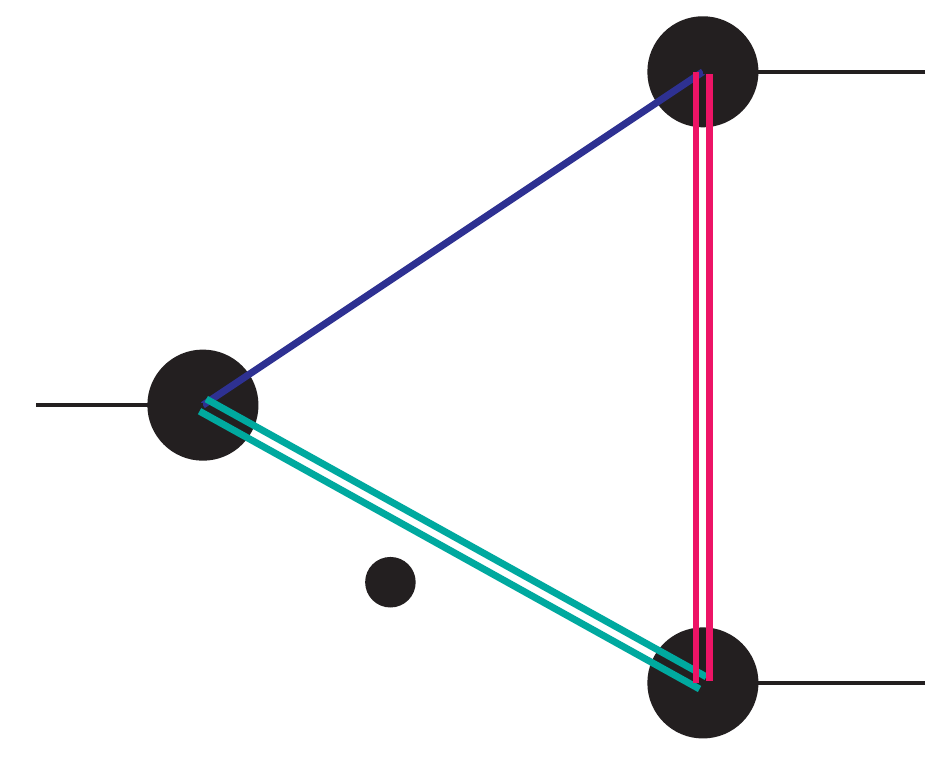}}\;}
\def\trianglergrb{\;\raisebox{-4mm}{\includegraphics[width=9mm]{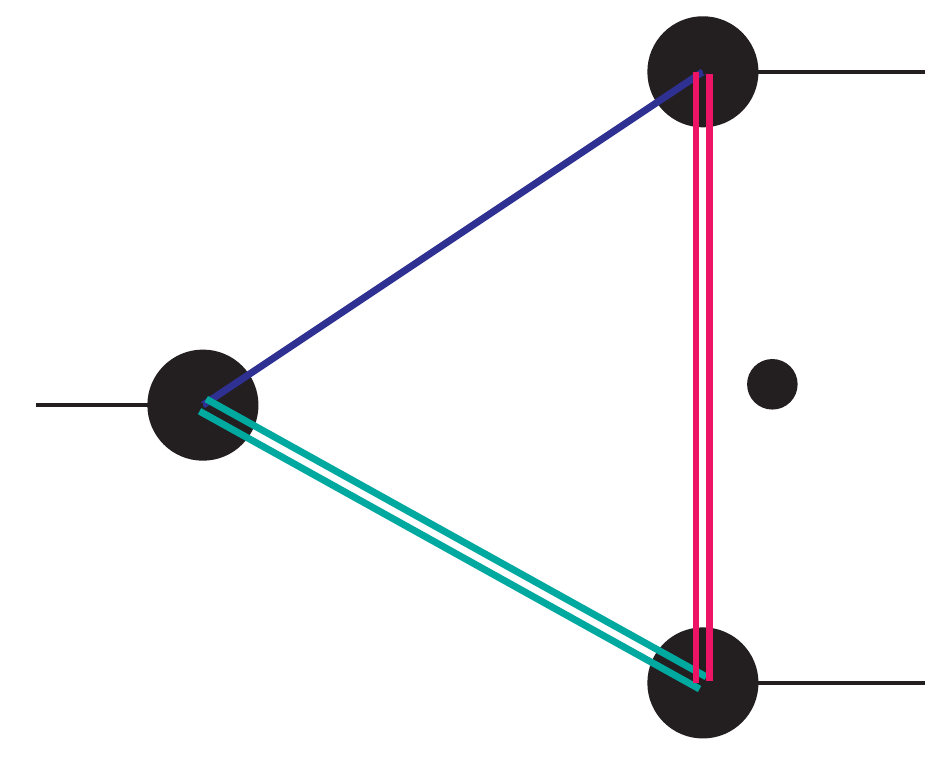}}\;}
\def\trianglergr{\;\raisebox{-4mm}{\includegraphics[width=9mm]{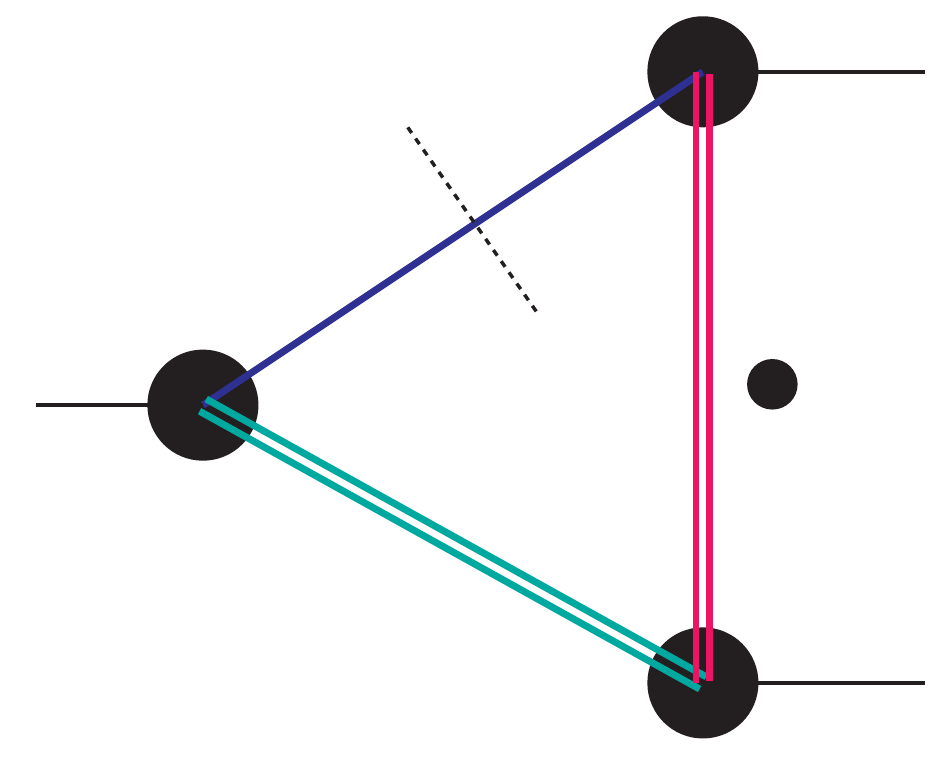}}\;}
\def\trianglegbg{\;\raisebox{-4mm}{\includegraphics[width=9mm]{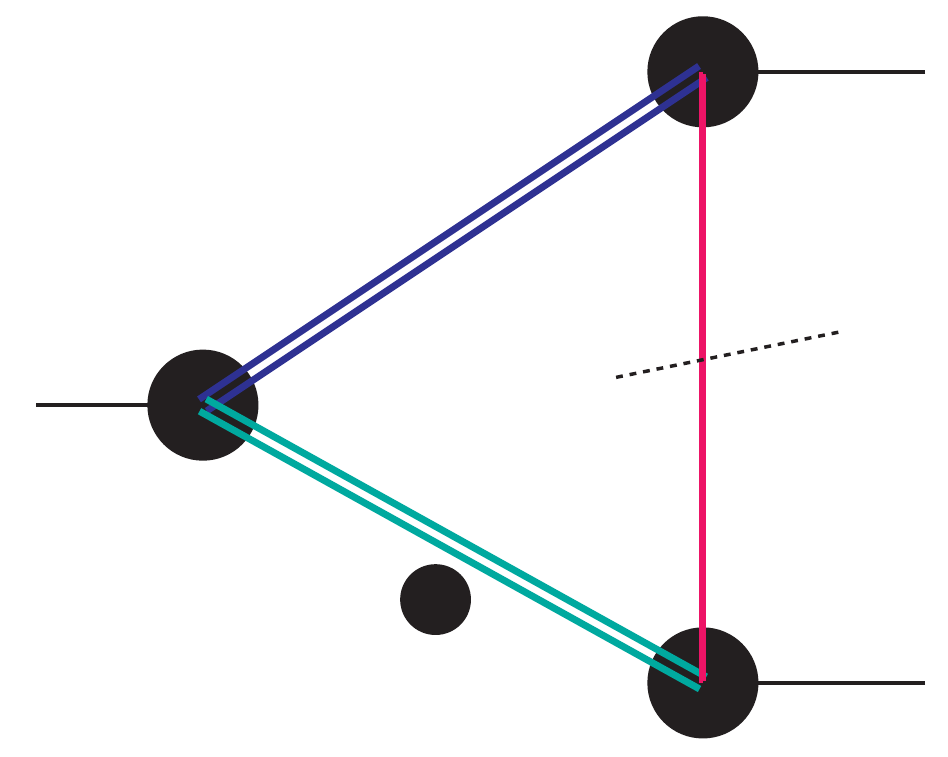}}\;}
\def\trianglegbgr{\;\raisebox{-4mm}{\includegraphics[width=9mm]{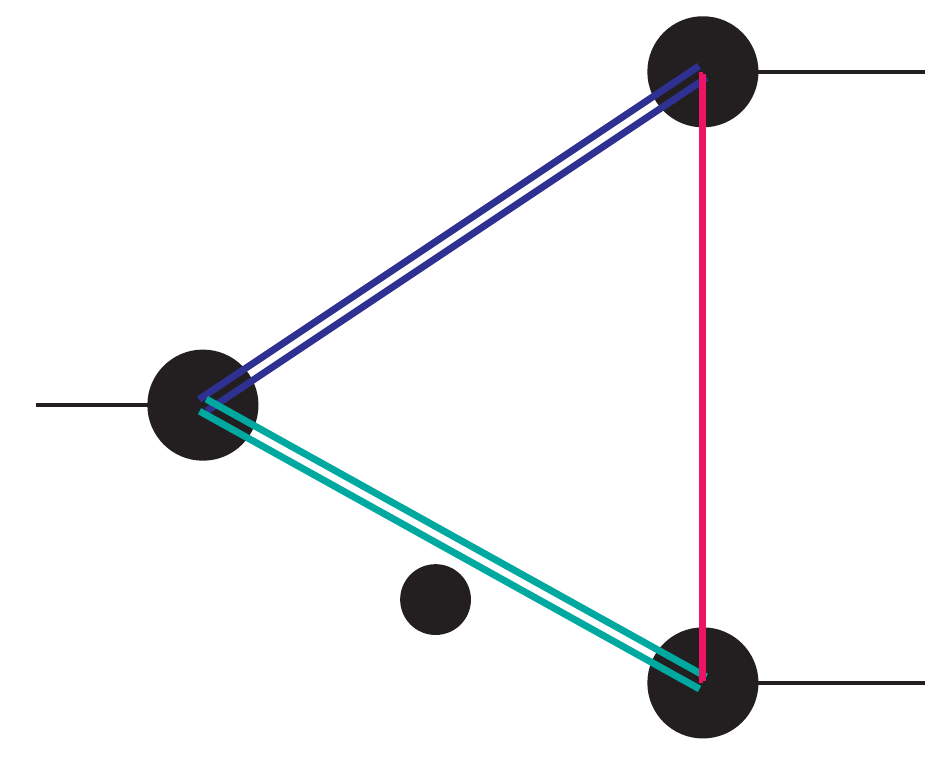}}\;}
\def\trianglegbbr{\;\raisebox{-4mm}{\includegraphics[width=9mm]{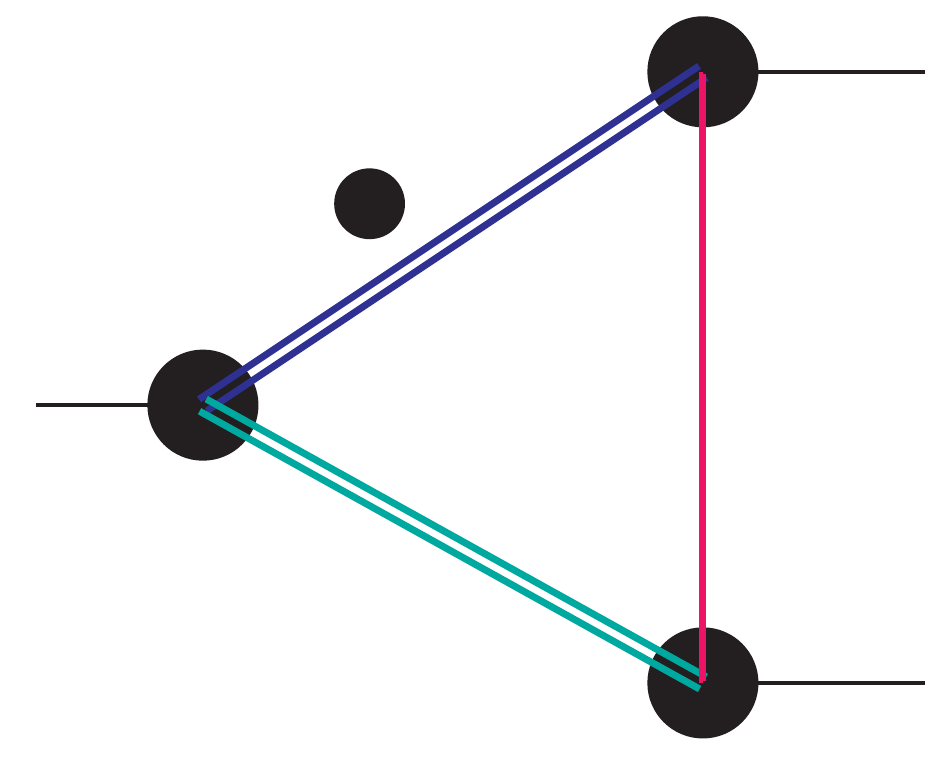}}\;}
\def\trianglegbb{\;\raisebox{-4mm}{\includegraphics[width=9mm]{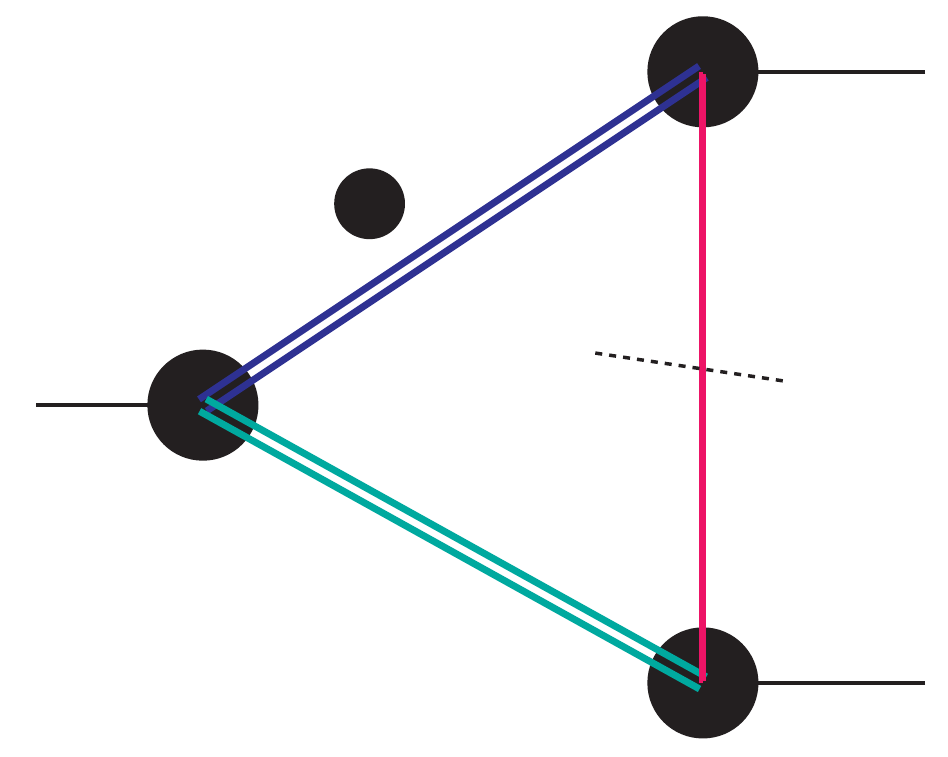}}\;}
\def\trianglebrbg{\;\raisebox{-4mm}{\includegraphics[width=9mm]{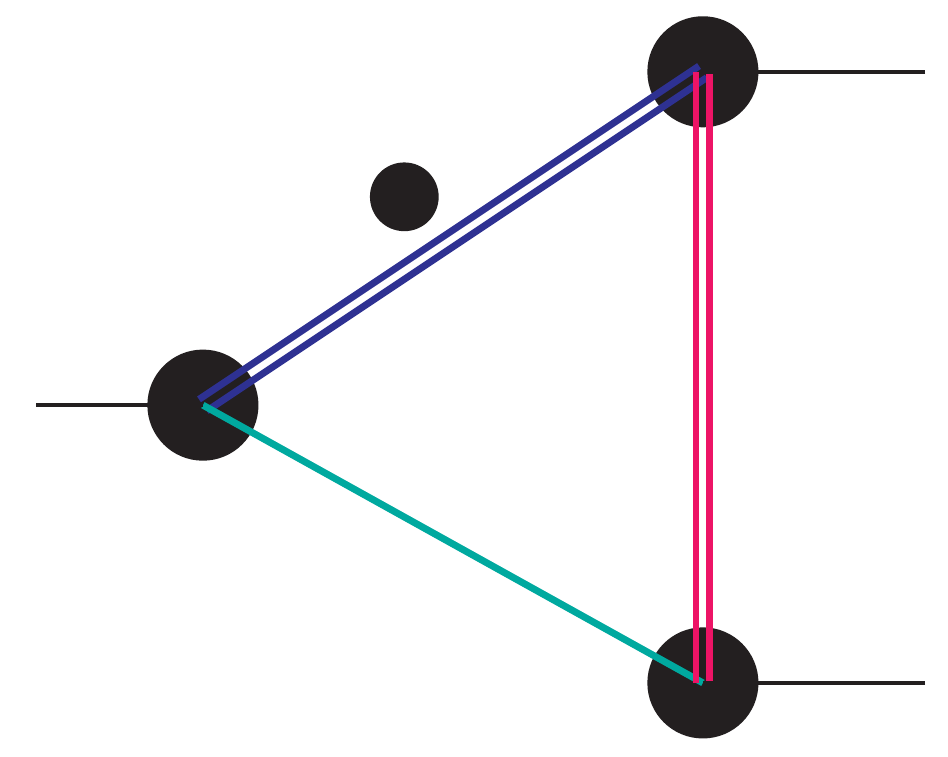}}\;}
\def\tadpoleb{\;\raisebox{-4mm}{\includegraphics[width=4mm]{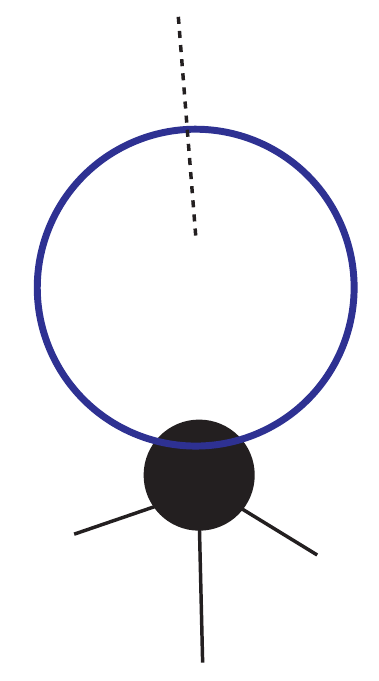}}\;}
\def\tadpoler{\;\raisebox{-4mm}{\includegraphics[width=4mm]{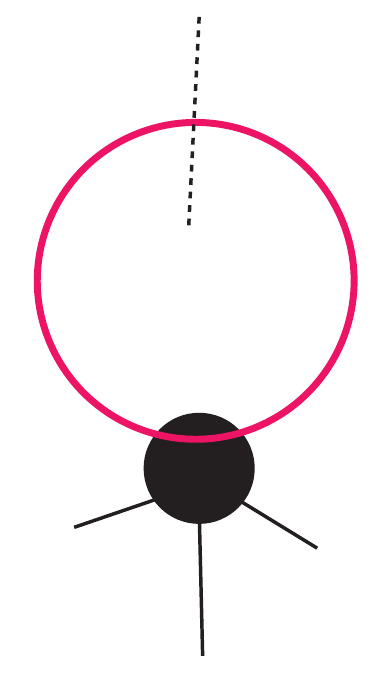}}\;}
\def\tadpoleg{\;\raisebox{-4mm}{\includegraphics[width=4mm]{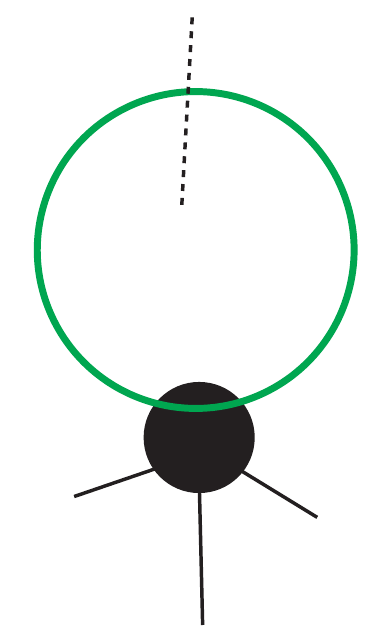}}\;}
\def\tadpolebb{\;\raisebox{-4mm}{\includegraphics[width=4mm]{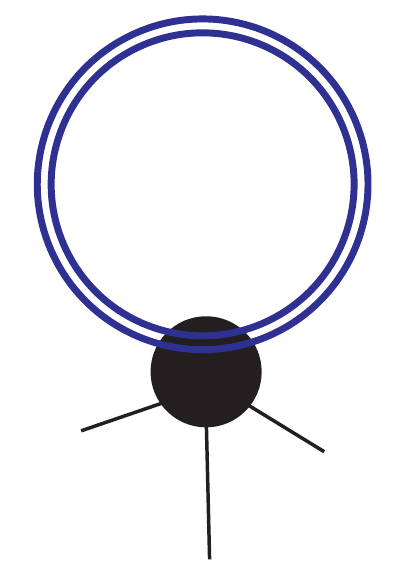}}\;}
\def\tadpolebbb{\;\raisebox{-4mm}{\includegraphics[width=4mm]{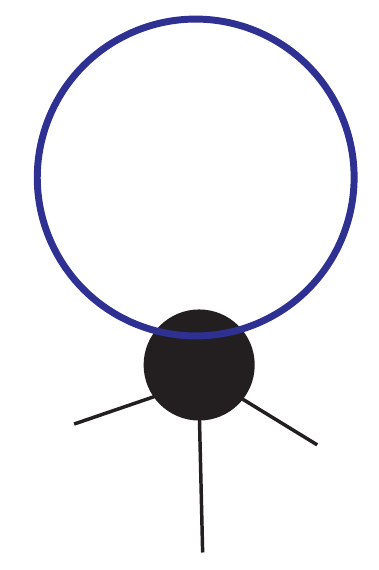}}\;}
\def\tadpolerr{\;\raisebox{-4mm}{\includegraphics[width=4mm]{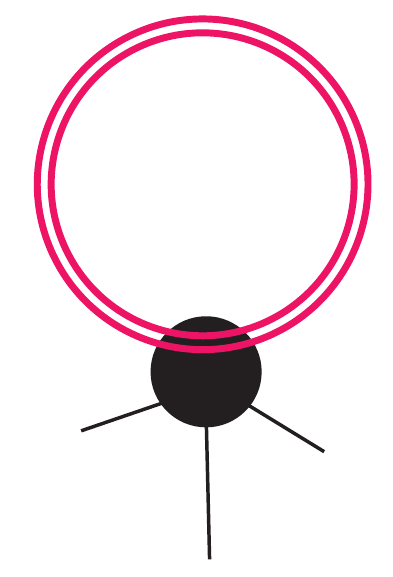}}\;}
\def\tadpolerrr{\;\raisebox{-4mm}{\includegraphics[width=4mm]{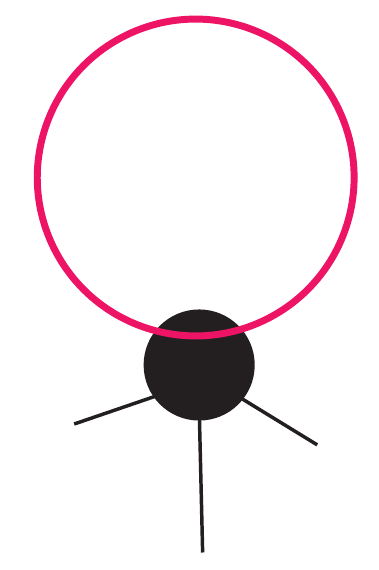}}\;}
\def\tadpolegg{\;\raisebox{-4mm}{\includegraphics[width=4mm]{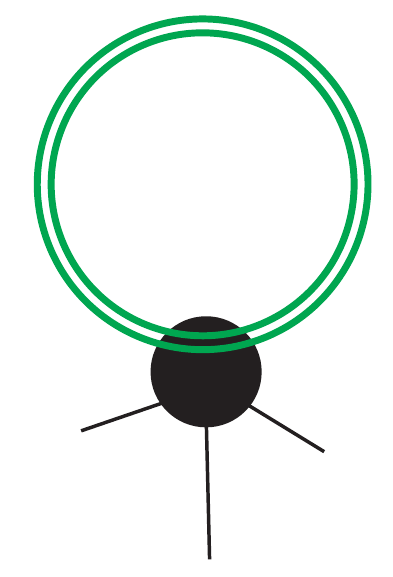}}\;}
\def\tadpoleggg{\;\raisebox{-4mm}{\includegraphics[width=4mm]{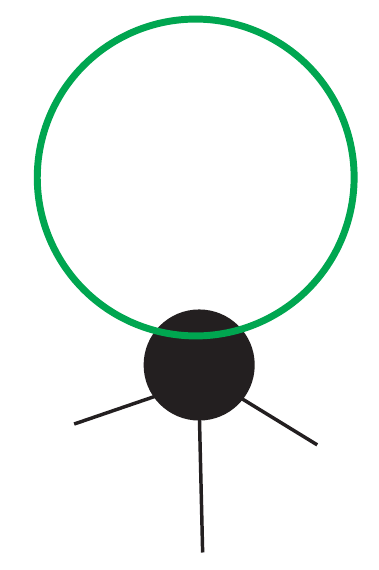}}\;}
\def\bubblegb{\;\raisebox{-5mm}{\includegraphics[width=7mm]{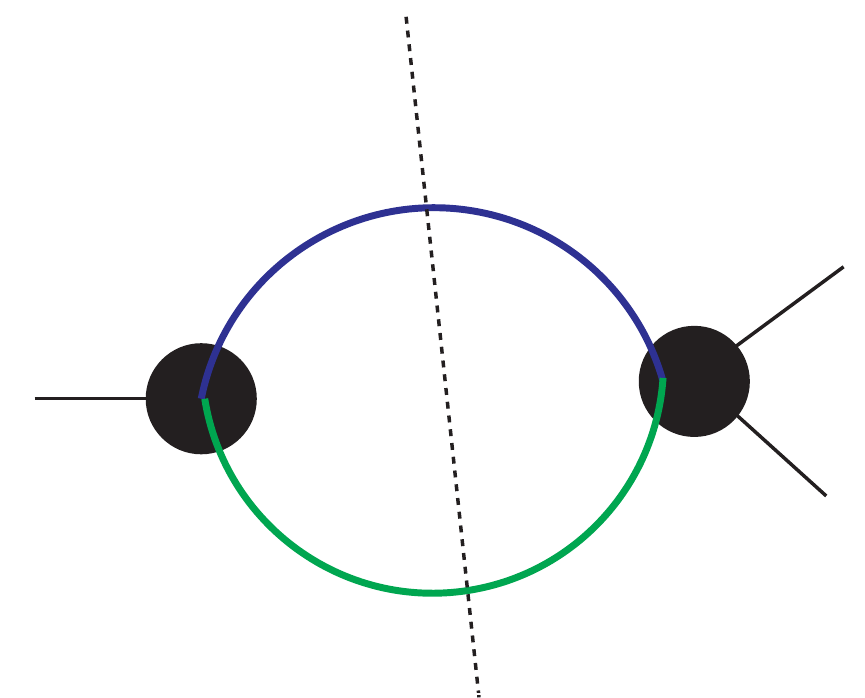}}\;}
\def\bubblegbgb{\;\raisebox{-5mm}{\includegraphics[width=7mm]{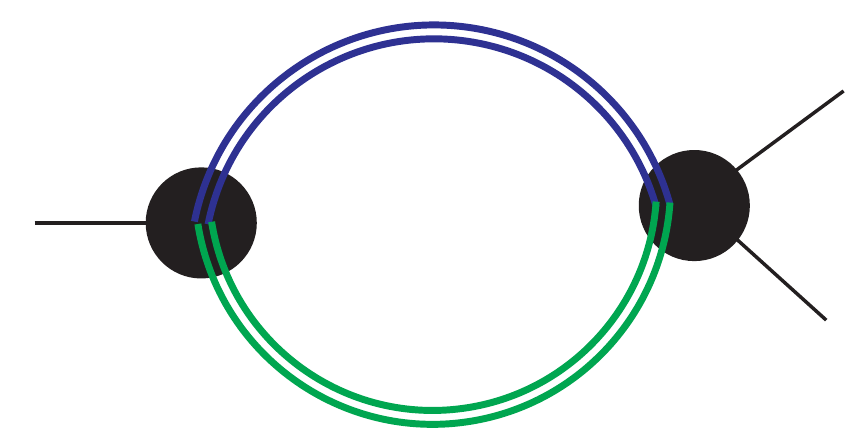}}\;}
\def\bubblegbgg{\;\raisebox{-5mm}{\includegraphics[width=7mm]{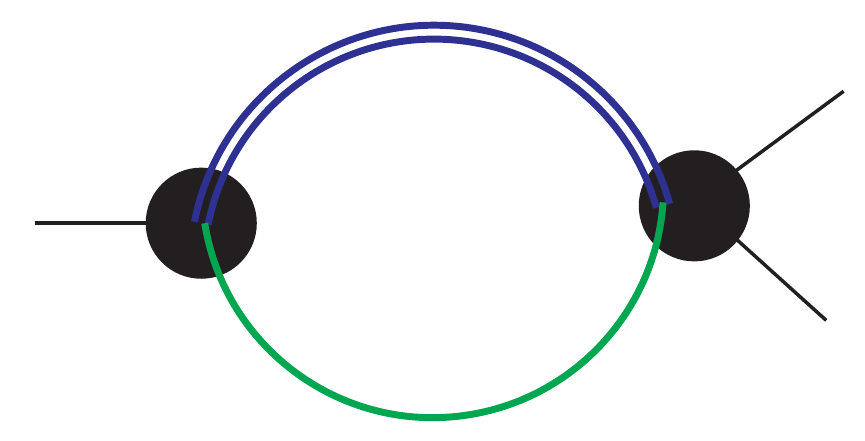}}\;}
\def\bubblegbbb{\;\raisebox{-5mm}{\includegraphics[width=7mm]{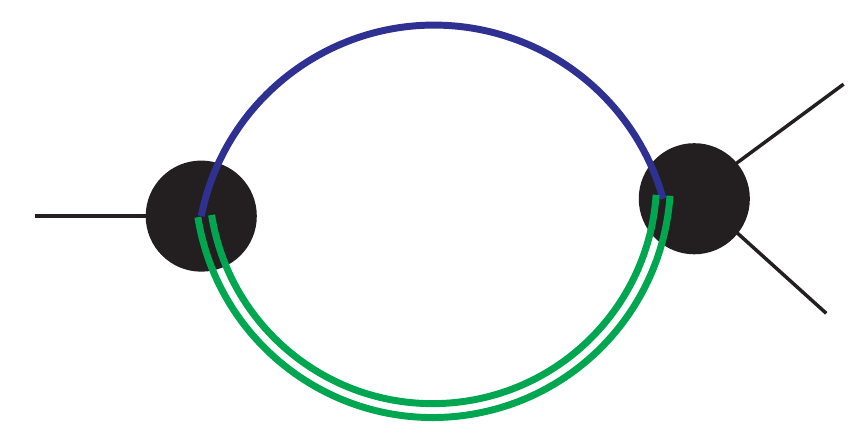}}\;}
\def\bubblegbg{\;\raisebox{-5mm}{\includegraphics[width=7mm]{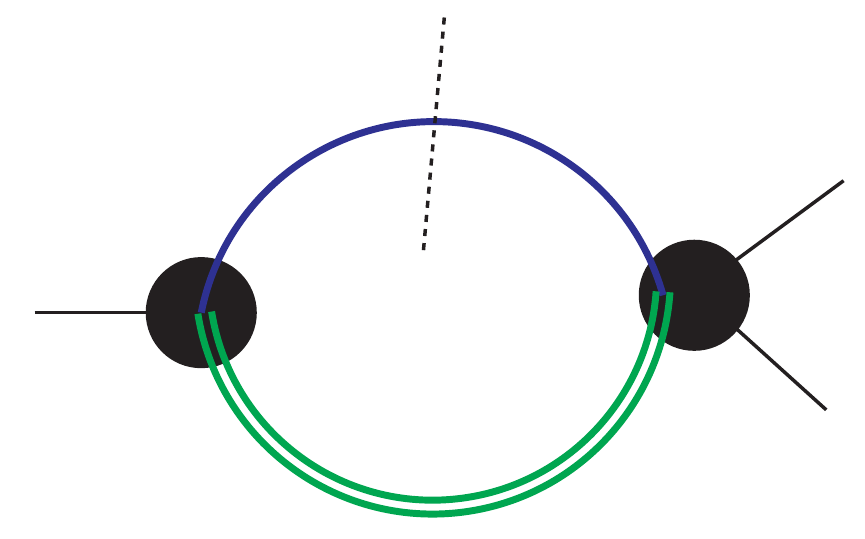}}\;}
\def\bubblegbb{\;\raisebox{-5mm}{\includegraphics[width=7mm]{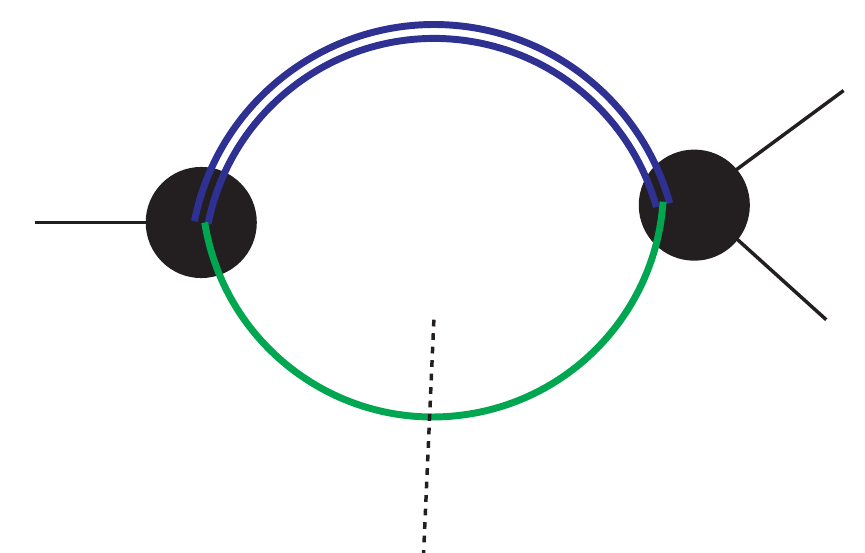}}\;}
\def\bubblegr{\;\raisebox{-4mm}{\includegraphics[width=7mm]{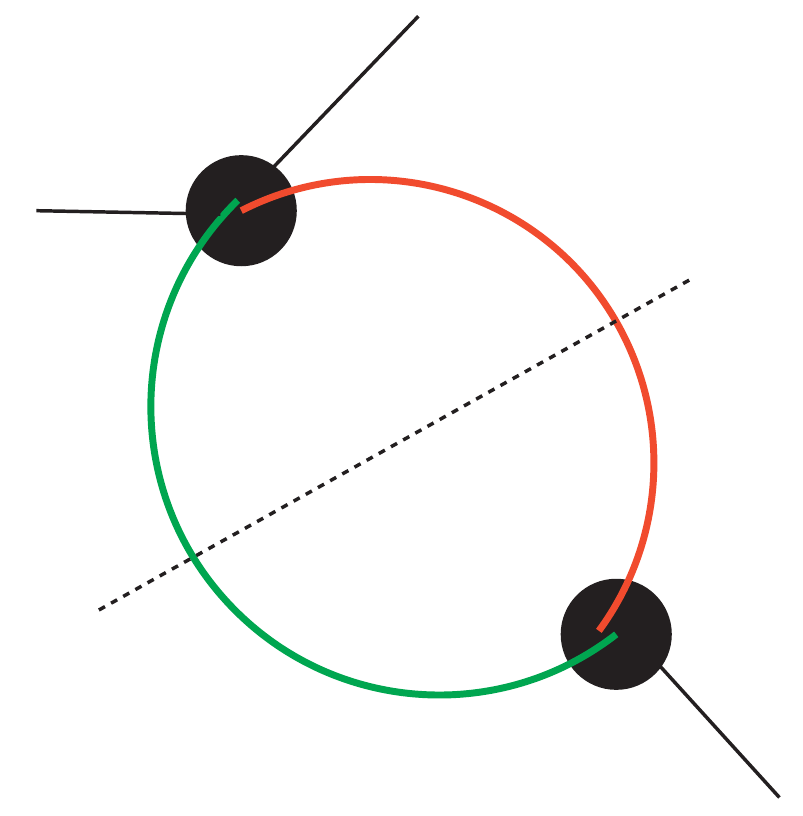}}\;}
\def\bubblegrgr{\;\raisebox{-4mm}{\includegraphics[width=7mm]{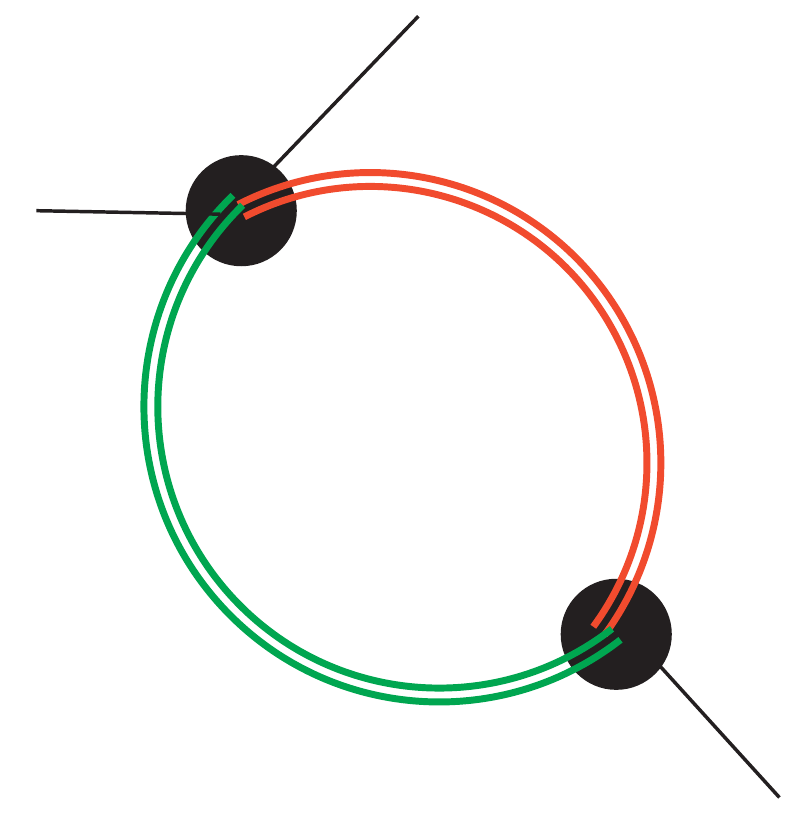}}\;}
\def\bubblegrgg{\;\raisebox{-4mm}{\includegraphics[width=7mm]{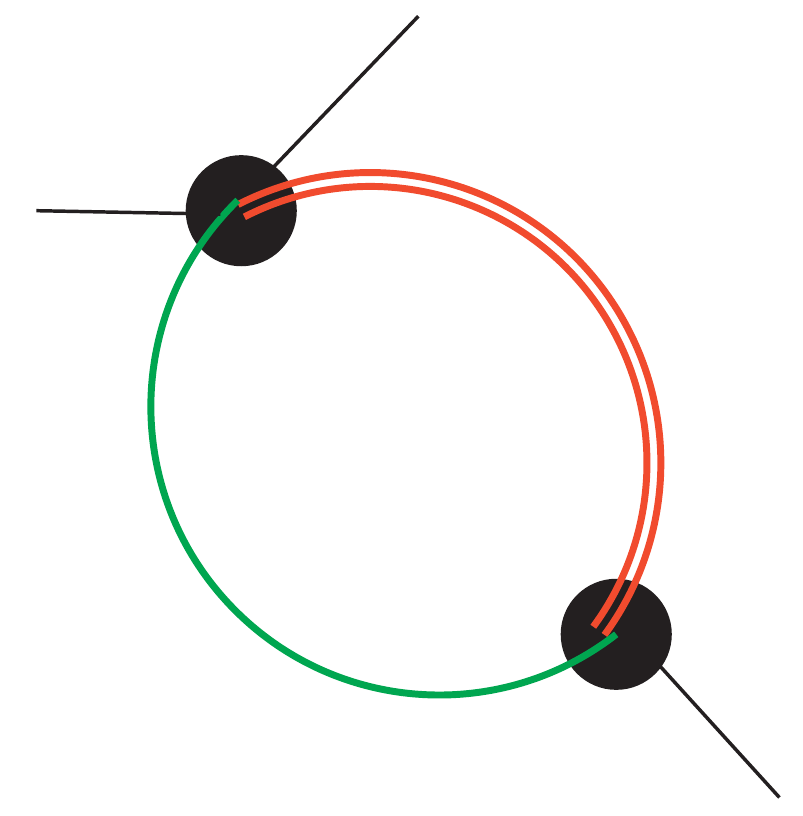}}\;}
\def\bubblegrg{\;\raisebox{-4mm}{\includegraphics[width=7mm]{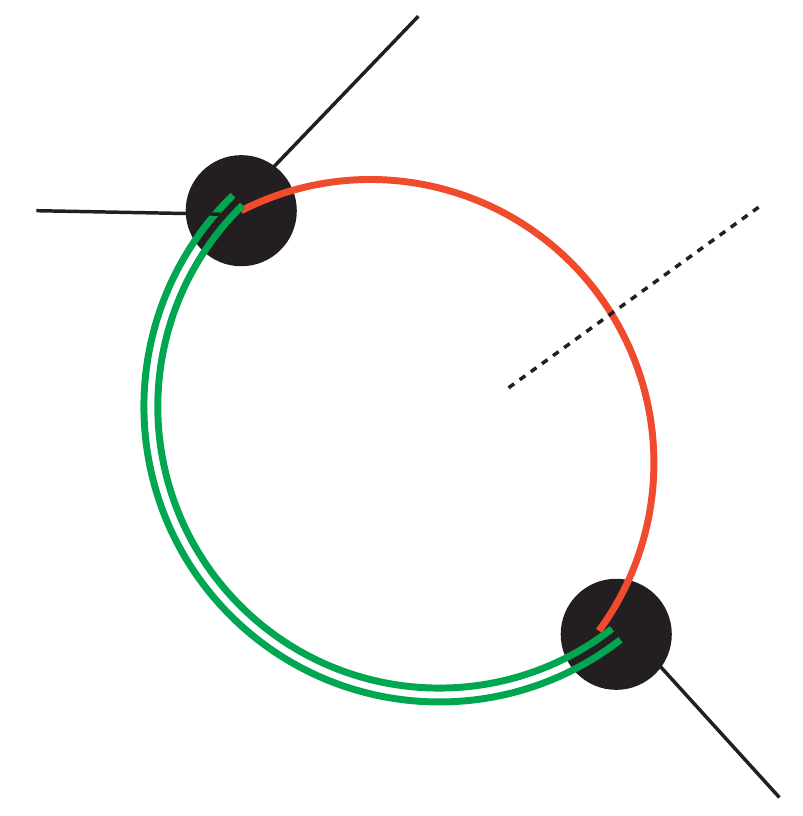}}\;}
\def\bubblegrgxr{\;\raisebox{-4mm}{\includegraphics[width=7mm]{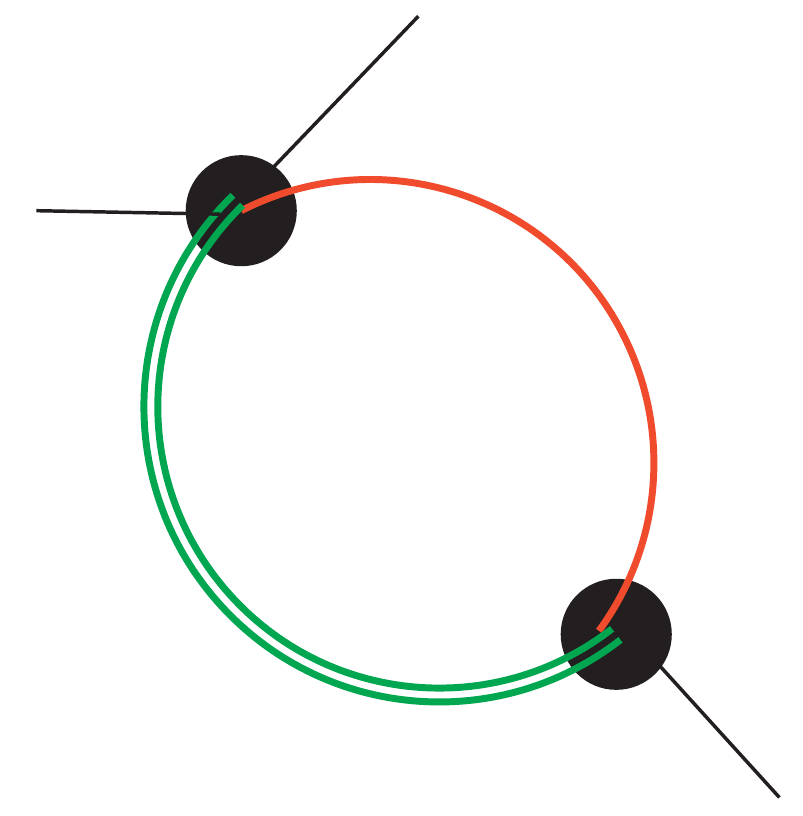}}\;}
\def\bubblegrr{\;\raisebox{-4mm}{\includegraphics[width=7mm]{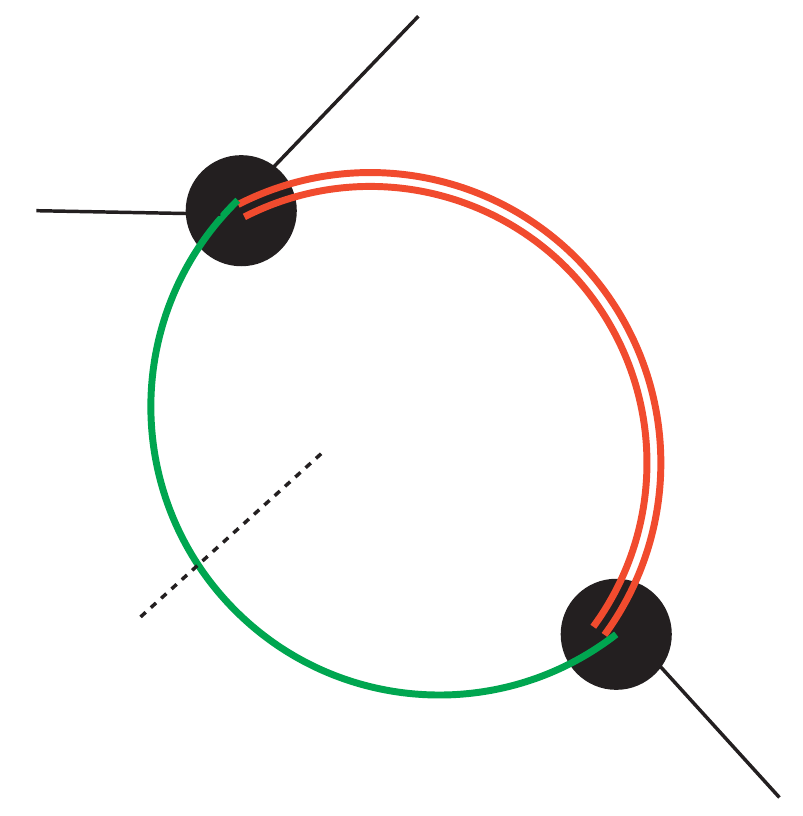}}\;}
\def\bubblebr{\;\raisebox{-4mm}{\includegraphics[width=7mm]{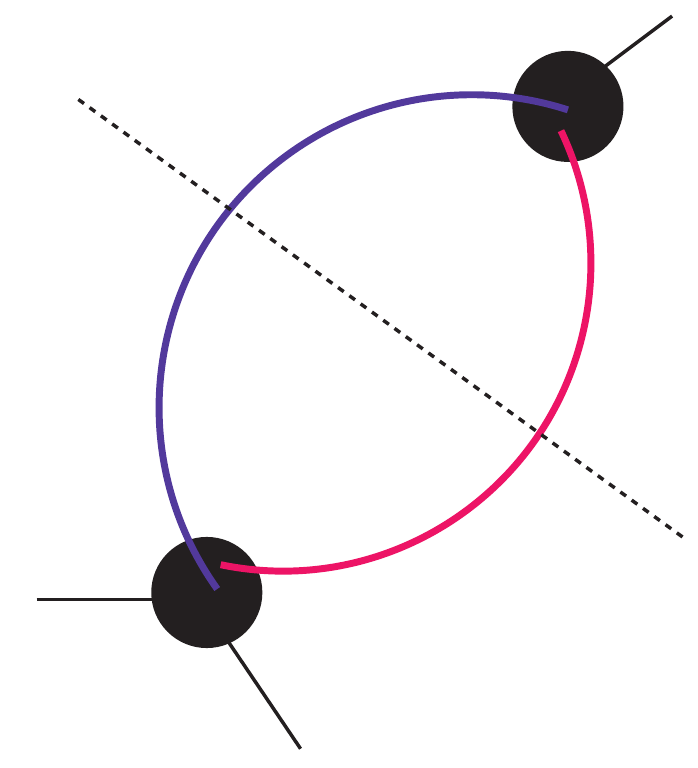}}\;}
\def\bubblebrbr{\;\raisebox{-4mm}{\includegraphics[width=7mm]{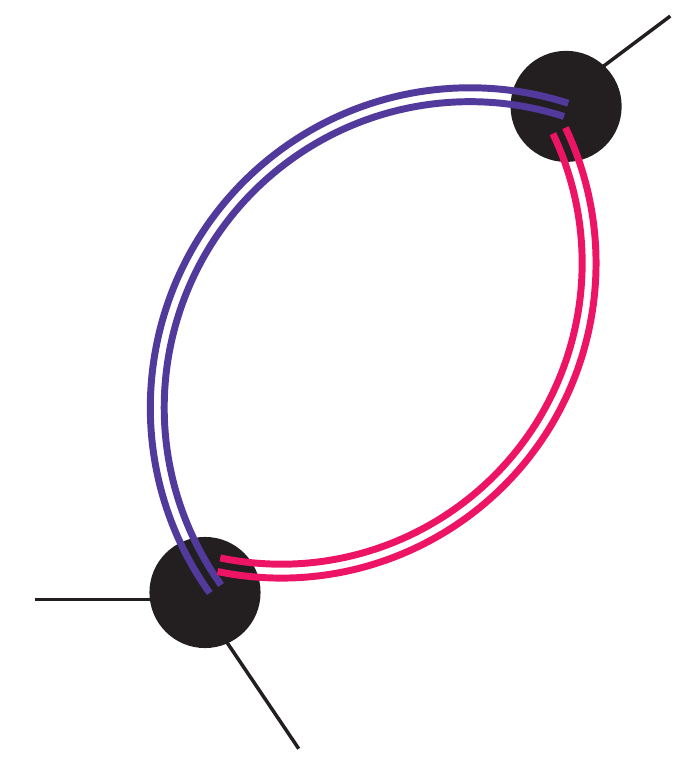}}\;}
\def\bubblebrbb{\;\raisebox{-4mm}{\includegraphics[width=7mm]{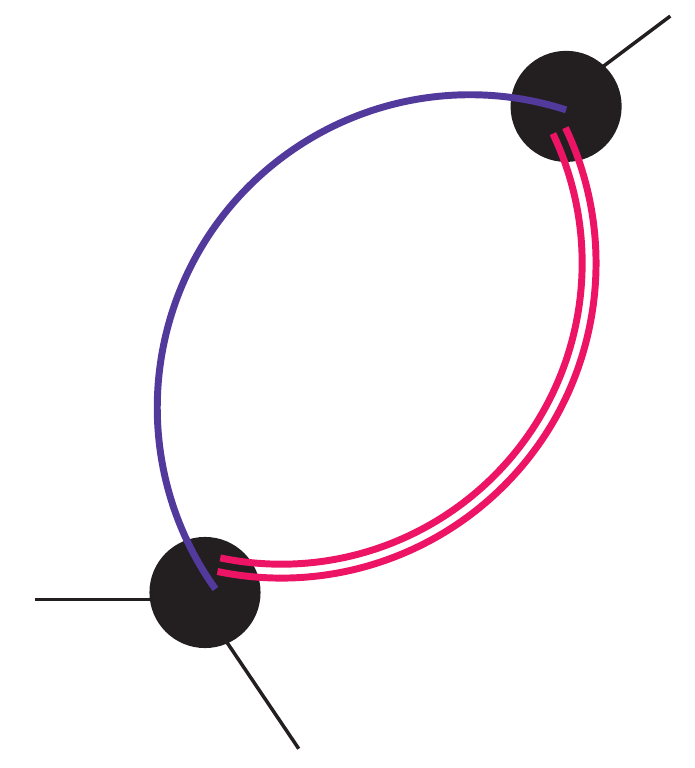}}\;}
\def\bubblebrb{\;\raisebox{-4mm}{\includegraphics[width=7mm]{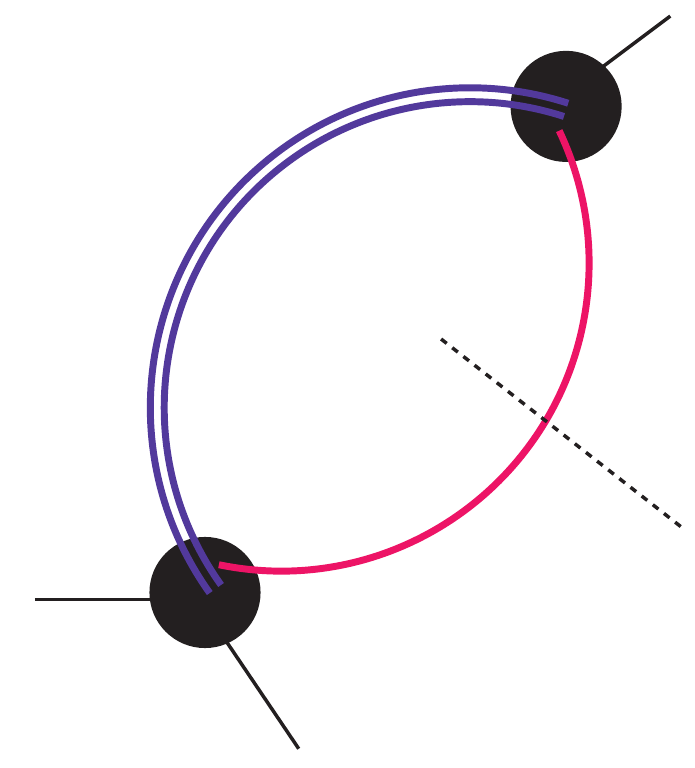}}\;}
\def\bubblebrr{\;\raisebox{-4mm}{\includegraphics[width=7mm]{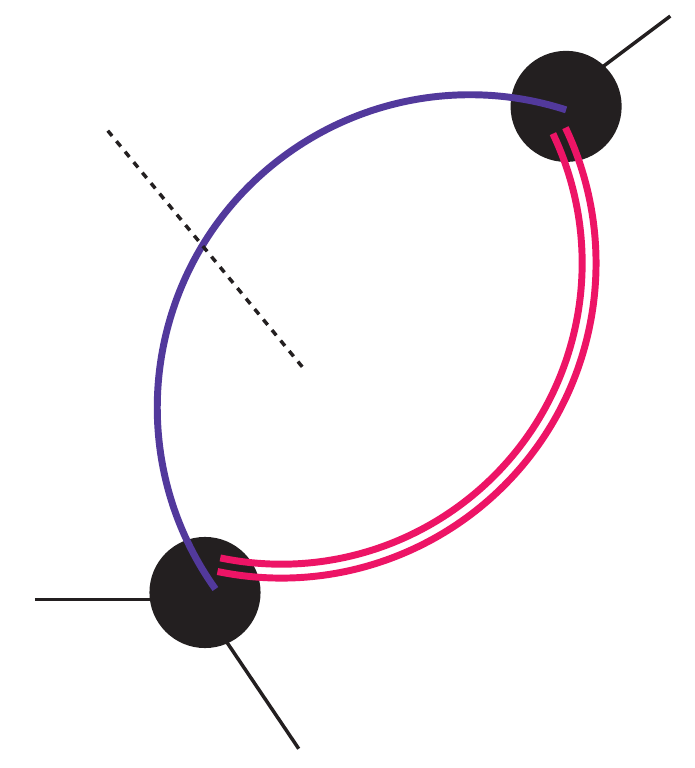}}\;}
\def\bubblebrrr{\;\raisebox{-4mm}{\includegraphics[width=7mm]{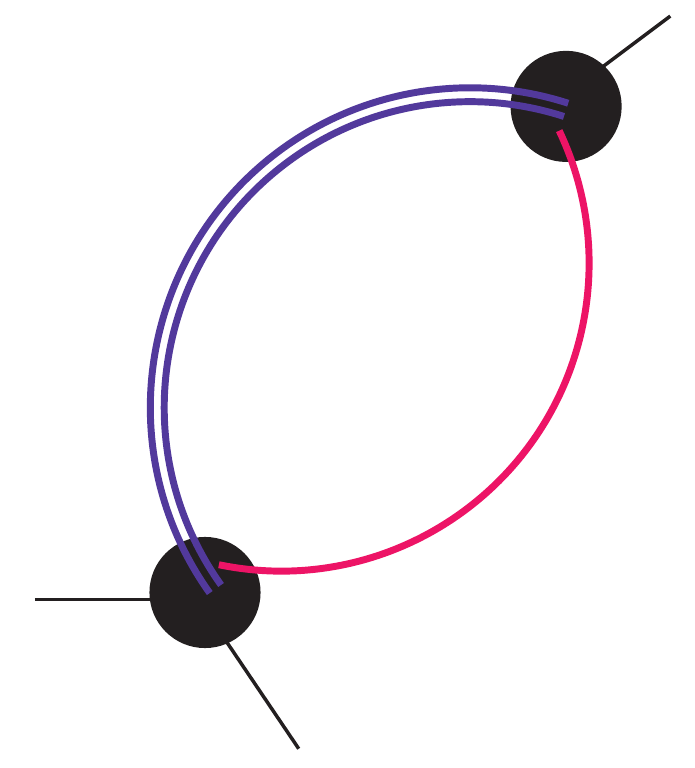}}\;}
\def\nnn{\;\raisebox{-3mm}{\includegraphics[width=8mm]{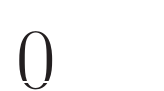}}\;}
\def\ooo{\;\raisebox{-3mm}{\includegraphics[width=8mm]{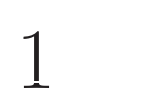}}\;}
\def\cycleswheel{\;\raisebox{-20mm}{\includegraphics[width=40mm]{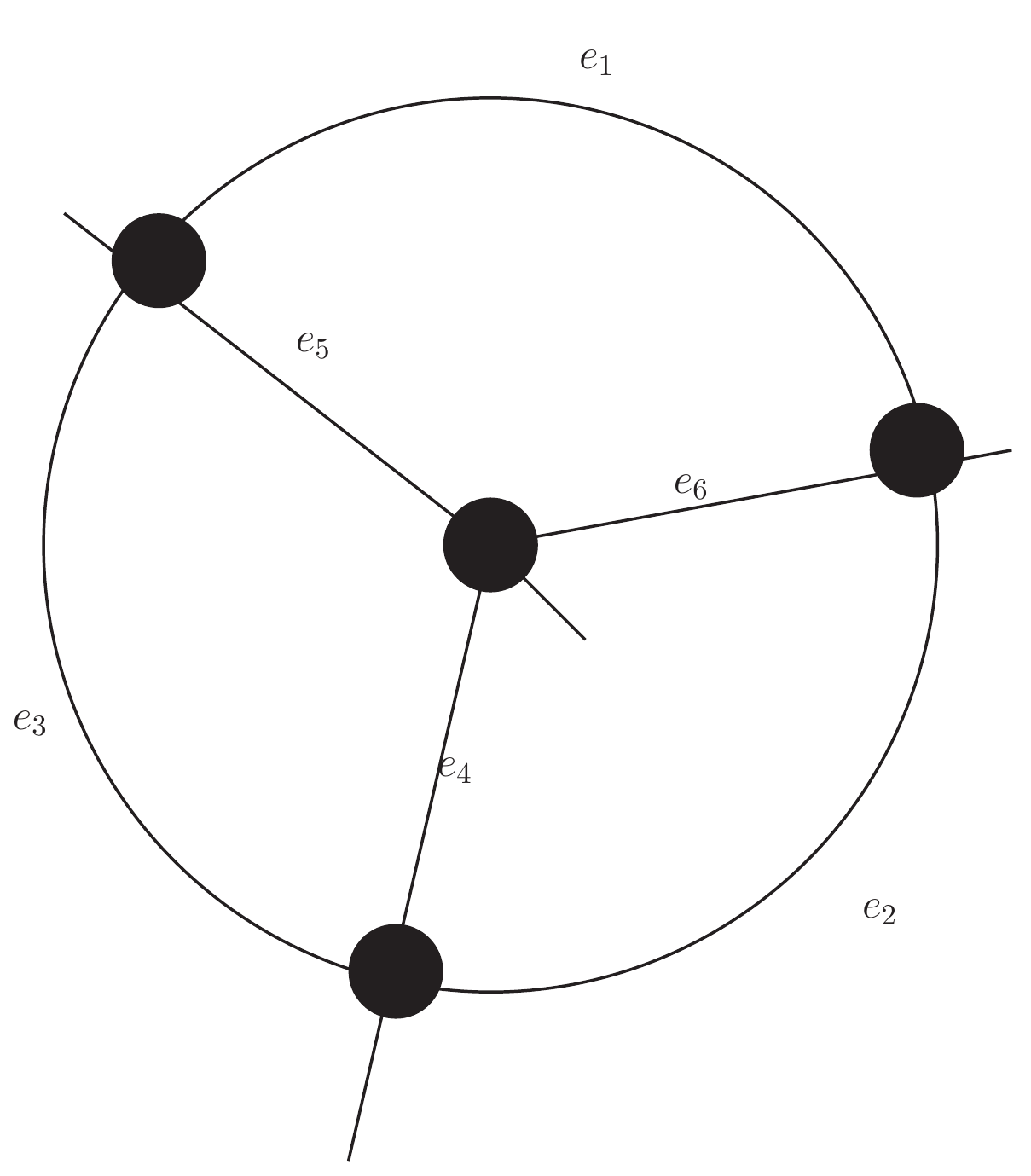}}\;}
\def\onedimcell{\;\raisebox{-4mm}{\includegraphics[width=100mm]{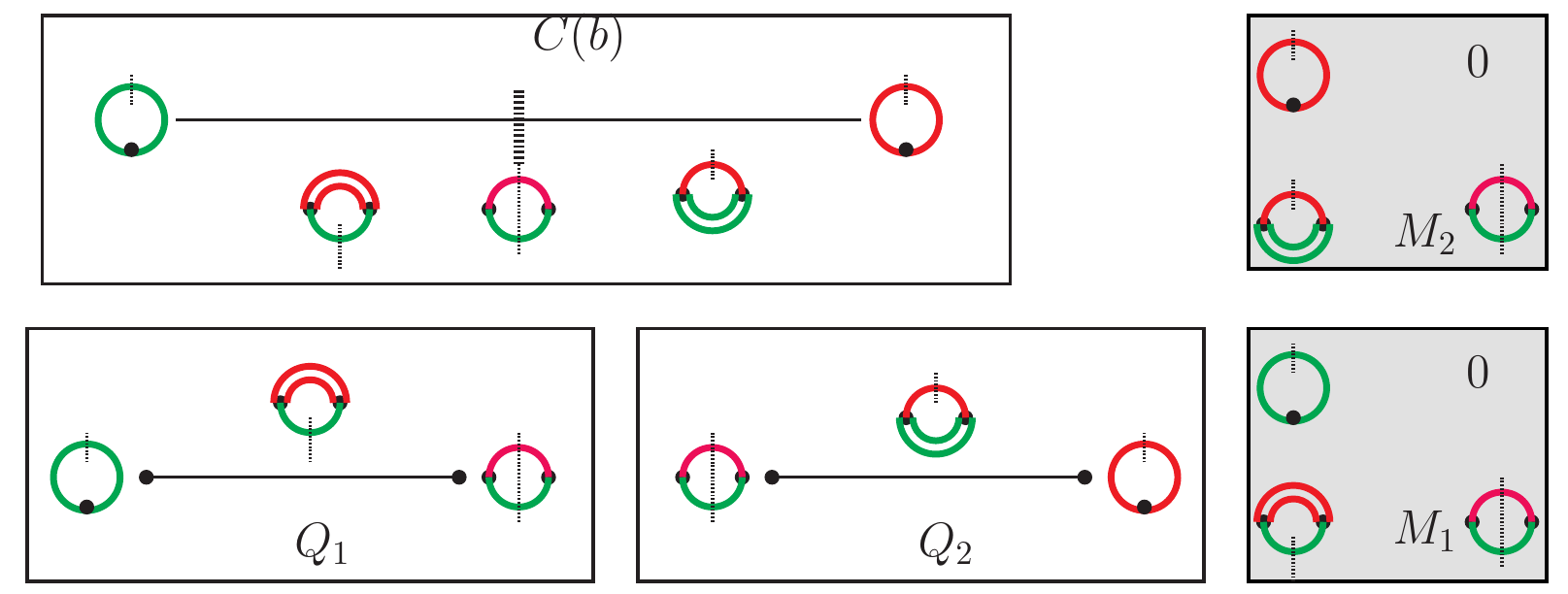}}\;}
\begin{document}

\title*{Outer Space as a combinatorial backbone for Cutkosky rules and coactions}
\titlerunning{Outer Space and Coactions} 
\author{Dirk Kreimer}
\institute{Depts.\ of Mathematics and Physics \at  Humboldt Univ., 10099 Berlin, Germany, \email{kreimer@math.hu-berlin.de}
}
%
%
\maketitle

\abstract{We consider a coaction which exists for any bridge-free graph. It is based on the cubical chain complex associated to any such graph by considering two boundary operations: shrinking edges or removing them.
Only if the number of spanning trees of a graph $G$ equals its number of internal edges we find that the graphical coaction $\Delta^G$ constructed here agrees with the coaction $\Delta_{\mathsf{Inc}}$ proposed by Britto and collaborators.
The graphs for which this is the case are one-loop graphs or their duals, multi-edge banana graphs. They provide the only examples discussed by Britto and collaborators so far. We call such graphs simple graphs.
The Dunce's cap graph is the first non-simple graph. The number of its spanning trees (five) exceeds the number of its edges (four). 
We compare the two coactions which indeed do not agree and discuss this result.
We also point out that for kinematic renormalization schemes the coaction $\Delta^G$ simplifies.}

\section{Introduction}
The notion of a coaction has gained prominence recently in the context of amplitude computations in high energy physics \cite{Brittoetal}.

This is motivated by the appearance of multiple polylogarithms and their elliptic cousins in such computations \cite{BlKr,BlKerr,Weinz}. For such functions the existence of such a coaction is known. Indeed Francis Brown gave a masterful account of is appearance and conceptual role  \cite{BrownI,BrownII} in particular also with regards to the small graphs principle, see for example \cite{BrownIII,Taskupovic}.

For physicists it is second nature to regard any Feynman integral computation as a manipulation on Feynman graphs. 

One hence wishes to identify coactions in combinatorial manipulations of Feynman graphs which are  in accordance with their appearance in the study of such polylogarithms.

A possible  approach is based on reverse engineering by pulling back the coaction structure of the functions into which Feynman graphs evaluate 
to the graphs themselves. This is an approach successfully employed by Britto et.al.\ \cite{Brittoetal}
and they conjecture a graphical coaction which by construction is correct 
for the known graphs amenable to computation. 

These are the simple graphs alluded to in the abstract above plus a few non-simple graphs with kinematics chosen such that a large number of terms
in their conjecured graphical coactions is bound to vanish.  As a consequence  they evaluate to multiple polylogarithms (MPLs) and one is again in safe terrain and can again pull back to the coaction on MPLs.

Here we introduce a coaction which exists independent of any physics consideration as a purely mathematical construct.

Its existence follows from known studies in graph complexes and graph homology \cite{CullerV,Kontsevich,CoHaKaVo}. We derive it in all detail.

We then compare the constructions of Britto et.al.\ and ours and show that the two constructions agree on simple graphs and on graphs which evaluate to mere MPLs.

Next we  discuss  differences for generic kinematics for non simple graphs and argue why the suggestion  of Britto et.al.\ \cite{Brittoetal} for a graphical coaction is bound to fail.

We also discuss simplifications apparent in kinematical renormalization 
schemes and relate the Steinmann relations \cite{Steinmann} to the structure of the cubical chain complex.

\section{Incidence Hopf algebras for (lower) triangular matrices}
Let us first define incidence Hopf algebras following Schmitt \cite{SchmittI,SchmittII}. Apart from changes in notation this material is similar to the presentation in appendix C.3 of  Britto et.al.\ \cite{Brittoetal}.

We start from a (partially) ordered set $P$ with partial order $\leq$.

For $x,y\in P$, $x\leq y$, consider the interval
\[
[x,y]=\{z\in P|x\leq z\leq y\}.
\]  
Let $P_2$ be the $\mathbb{Q}$-algebra generated by such intervals
\cite{SchmittI,Brittoetal} through multiplication as a free product
by disjoint union of intervals.

It gives rise to an incidence bialgebra $I_P$ upon setting
\[
\Delta([x,y])=\sum_{x\leq z\leq y}[x,z]\otimes [z,y],
\]
for the co-product $\Delta$
and 
\be\label{counit} 
\hat{\One}([x,y])=\delta_{x,y},
\ee
for the co-unit $\hat{\One}$,
where 
\[
\delta_{x,y}=1,\,x=y,\, \delta_{x,y}=0,\, \text{else}.
\]
Note that $\Delta([x,x])=[x,x]\otimes [x,x]$ is group-like.

Following Schmitt we can turn this bialgebra $I_P$ into a Hopf algebra 
$I_{\tilde{P}}$ by augmenting $I_P$ by multiplicative inverses $[x,x]^{-1}$ for group-like
$[x,x]$, for all $x\in P$.

The antipode $S:I_{\tilde{P}}\to I_{\tilde{P}}$, $S\circ S=\mathrm{id}$
is defined by $S([x,x])=[x,x]^{-1}$ and
\[
S([x,y])=\sum_{x=z_0\leq z_1\leq\ldots\leq z_k=y}(-1)^k\frac{1}{[x,x]}
\prod_{i=1}^k \frac{[z_{i-1},z_i]}{[z_i,z_i]}.
\]
\subsection{Example: lower triangular matrices}
As an example consider lower triangular $n\times n$ matrices $M$, $M_{i,j}=0,\,j\gneq i$.
$P$ is provided by the first $n$  integers and the intervals $[ji]$, $1\leq j\leq i\leq n$, are represented as $M_{i,j}$. 

As $M_{i,j}\in H_{GF}$, a Hopf algebra \cite{MarkoDirk},  the algebra structure of $P_2$ agrees with the algebra structure of $H_{GF}$ and is a free commutative algebra.
We have $\Delta(M_{i,j} M_{l,s})=\Delta(M_{i,j})\Delta(M_{l,s})$.

$I_P\equiv I_M$ gets a different bialgebra structure though.  
Instead of using the coproduct $\Delta_{GF}$ of $H_{GF}$ the coproduct is
\[
\Delta M_{i,j}=\sum_{k=j}^i M_{k,j}\otimes M_{i,k}. 
\]
Coassociativity of this map is obvious.
\beas
(\Delta\otimes\mathrm{id})\Delta(M_{j,k}) & = & \sum_{h,i}M_{h,k}\otimes M_{i,h}\otimes M_{j,i},\\
(\mathrm{id}\otimes\Delta)\Delta(M_{j,k}) & = & \sum_{h,i}M_{i,k}\otimes M_{h,i}\otimes M_{j,h},
\eeas
and the two expressions on the right obviously agree.

Consider the $\mathbb{
Q}$-vectorspace $V_1$ generated by elements $M_{i,1}$, $i\gneq 1$. Let $\rho_\Delta:
V_1\to V_1\otimes I_{\tilde{P}}$, 
\[
\rho_{\Delta}(M_{i,1})=\sum_{k=2}^{i}
M_{k,1}\otimes M_{i,k},
\]
be the restriction of $\Delta$ to $V_1$.

Then coassociativity of $\Delta$ delivers
\[
(\mathrm{id}\otimes\Delta)\rho_\Delta=(\rho_\Delta\otimes\mathrm{id})\rho_\Delta,
\]
and we also have by Eq.(\ref{counit}) that $(\mathrm{id}\otimes \hat{\One})\rho_\Delta=\mathrm{id}$.
We conclude
\begin{prop}
$\rho_\Delta$ is a coaction on $V_1$.
\end{prop}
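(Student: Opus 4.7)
The plan is to directly verify the two defining axioms of a right coaction: coassociativity $(\mathrm{id}\otimes \Delta)\rho_\Delta=(\rho_\Delta\otimes\mathrm{id})\rho_\Delta$ and the counit property $(\mathrm{id}\otimes \hat{\One})\rho_\Delta=\mathrm{id}_{V_1}$. Both are essentially announced in the paragraph preceding the proposition; what remains is the index bookkeeping plus a check that the truncation $k\geq 2$ in the definition of $\rho_\Delta$ is compatible with coassociativity.

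First I would record well-definedness of $\rho_\Delta:V_1\to V_1\otimes I_{\tilde{P}}$: on a generator $M_{i,1}$ with $i\geq 2$, each summand $M_{k,1}\otimes M_{i,k}$ of $\rho_\Delta(M_{i,1})$ has $k\geq 2$, so the left factor lies in $V_1$ and the right factor in $I_{\tilde{P}}$. Extend linearly and the map lands in the right target.

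For coassociativity, apply each side to a generator $M_{i,1}$ and expand once using the definition of $\rho_\Delta$ and once using the incidence coproduct $\Delta M_{i,k}=\sum_{h=k}^{i} M_{h,k}\otimes M_{i,h}$. The left-hand side becomes
\[
(\mathrm{id}\otimes\Delta)\rho_\Delta(M_{i,1})=\sum_{k=2}^{i}\sum_{h=k}^{i} M_{k,1}\otimes M_{h,k}\otimes M_{i,h},
\]
while the right-hand side becomes
\[
(\rho_\Delta\otimes\mathrm{id})\rho_\Delta(M_{i,1})=\sum_{h=2}^{i}\sum_{k=2}^{h} M_{k,1}\otimes M_{h,k}\otimes M_{i,h}.
\]
Both triple sums are indexed by the same set $\{(k,h):2\leq k\leq h\leq i\}$, so equality holds. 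The counit axiom is then immediate from Eq.~(\ref{counit}): since $\hat{\One}(M_{i,k})=\delta_{i,k}$, only the $k=i$ summand of $\rho_\Delta(M_{i,1})$ survives, yielding $M_{i,1}$.

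The only subtlety — and perhaps the main obstacle if any — is to see why removing the $k=1$ summand from the definition of $\rho_\Delta$ does not break coassociativity. The would-be omitted piece $M_{1,1}\otimes M_{i,1}$ is supported on the group-like element $M_{1,1}\notin V_1$, and the symmetric lower bound $h\geq 2$ in the iterated coaction on the right removes the matching combinatorial configuration on the other side. Hence the truncation is consistent with the ambient coassociativity of $\Delta$, and beyond this there is no real difficulty: the proof reduces to the two index computations displayed above.
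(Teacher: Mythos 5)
Your proof is correct and follows the same route as the paper, which simply asserts that coassociativity of $\Delta$ together with Eq.~(2.1) delivers the two coaction axioms; you merely make the index bookkeeping explicit and confirm that both iterated sums run over $\{(k,h): 2\leq k\leq h\leq i\}$, which also settles the point about the omitted $k=1$ term. Nothing further is needed.
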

Note that we get such a Hopf algebra and coaction for any chosen lower triangular matrix $M$. We write $\Delta^M$ whenever necessary.

It is useful to define matrices $M_\One$
\[
(M_\One)_{i,j}=M_{i,j}/M_{i,i},
\]
which has unit entries along the diagonal and the diagonal matrix $M_{\mathsf{D}}$
\[
(M_{\mathsf{D}})_{i,j}=0,\,i\not= j,(M_{\mathsf{D}})_{i,i}=M_{i,i}. 
\]
Then,
\be\label{factimag}
(M_{\mathsf{D}})\times M_\One=M.
\ee

For lower triangular matrices there are two maps which are natural to consider: shifting to the row above or to the column to the right.

So consider the map 
\[
m_r:\,M_{i,j}\to M_{i-1,j},
\]
where we set $M_{0,j}=0$
and the map 
\[
m_c:\,M_{i,j}\to M_{i,j+1},
\]
where we set $M_{n,n+1}=0$.
\begin{prop}
We have
\be 
(\mathrm{id}\otimes m_r)\circ \Delta=\Delta\circ m_r,
\ee
and
\be 
(m_c\otimes \mathrm{id})\circ \Delta=\Delta\circ m_c.
\ee
\end{prop}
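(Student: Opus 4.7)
The plan is to verify both identities by direct computation on a generator $M_{i,j}$, using the explicit formula $\Delta M_{i,j}=\sum_{k=j}^{i}M_{k,j}\otimes M_{i,k}$ together with the lower-triangular convention $M_{p,q}=0$ whenever $q>p$ (and the boundary conventions $M_{0,j}=0$, $M_{n,n+1}=0$ stated in the paper). Since $\Delta$ is multiplicative and $m_r,m_c$ are defined on generators, it is enough to compare both sides on a single $M_{i,j}$ and then extend by linearity.

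For the row-shift identity, I would compute the right-hand side as $\Delta\!\circ\! m_r(M_{i,j})=\Delta(M_{i-1,j})=\sum_{k=j}^{i-1}M_{k,j}\otimes M_{i-1,k}$, and the left-hand side as $(\mathrm{id}\otimes m_r)\!\circ\!\Delta(M_{i,j})=\sum_{k=j}^{i}M_{k,j}\otimes M_{i-1,k}$. The two sums differ only in the range of $k$: the left has one extra term, at $k=i$, namely $M_{i,j}\otimes M_{i-1,i}$. But $M_{i-1,i}=0$ since the matrix is lower triangular, so this term drops out and the two sides coincide. The case $i=1$ is covered by $M_{0,j}=0$, giving $0$ on both sides.

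For the column-shift identity, an analogous calculation gives $\Delta\!\circ\! m_c(M_{i,j})=\Delta(M_{i,j+1})=\sum_{k=j+1}^{i}M_{k,j+1}\otimes M_{i,k}$ on one side and $(m_c\otimes\mathrm{id})\!\circ\!\Delta(M_{i,j})=\sum_{k=j}^{i}M_{k,j+1}\otimes M_{i,k}$ on the other. The discrepancy is exactly the $k=j$ term $M_{j,j+1}\otimes M_{i,j}$, which vanishes because $M_{j,j+1}=0$. The boundary case $j=n$ is handled by $M_{n,n+1}=0$.

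The proof is therefore a bookkeeping exercise rather than a genuine theorem, and I do not foresee any substantial obstacle. The only point that requires attention is recognising that the ``extra'' boundary terms introduced by shifting one end of the summation range are precisely the off-triangular entries that are zero by fiat; once this is observed, the compatibility of $m_r$ and $m_c$ with $\Delta$ is immediate on generators and hence on all of $I_{\tilde{P}}$.
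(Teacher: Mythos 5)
Your proof is correct and follows essentially the same route as the paper: both verify the identities directly on a generator $M_{i,j}$ from the explicit coproduct formula, the point being that $\Delta$ preserves the row index on the right tensor factor and the column index on the left one, while $m_r$ and $m_c$ shift exactly those indices. You are in fact slightly more explicit than the paper's one-line argument, since you isolate the boundary terms $M_{i,j}\otimes M_{i-1,i}$ and $M_{j,j+1}\otimes M_{i,j}$ created by the mismatch in summation range and note that they vanish by lower triangularity --- a point the paper only surfaces in its worked example (``as $m_r(j)=0$'', ``as $m_c(b)=0$'').
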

\begin{proof}
$m_r$: $\Delta$ maps entries from the $j$-th row to entries in the $j$-th row
on the rhs of the tensorproduct, and $m_r$ shifts $j\to j-1$ on both sides of the equation.\\
$m_c$: $\Delta$ maps entries from the $k$-th column to entries in the $k$-th column 
on the lhs of the tensorproduct, and $m_c$ shifts $k\to k+1$ on both sides of the equation.
\end{proof}
\begin{ex}
Let us consider an example for all the above:
\be\label{matrixexa}
M=\left(
\begin{array}{ccccccc}
1 & & 0 & & 0  & & 0\\
a & & b & & 0 & & 0\\
c & & d & & e & & 0\\
f & & g & & h & & j\\
\end{array}\right).
\ee
Then,
\be
M_\One=\left(
\begin{array}{ccccccc}
1 & & 0 & & 0  & & 0\\
a/b & & 1 & & 0 & & 0\\
c/e & & d/e & & 1 & & 0\\
f/j & & g/j & & h/j & & 1\\
\end{array}\right),
\ee
and
\be
M_{\mathsf{D}}=\left(
\begin{array}{ccccccc}
1 & & 0 & & 0  & & 0\\
0 & & b & & 0 & & 0\\
0 & & 0 & & e & & 0\\
0 & & 0 & & 0 & & j\\
\end{array}\right).
\ee
We have 
\[
\rho_\Delta(f)=f\otimes j+ c\otimes h+a\otimes g,
\]
when we regard $f$ as an element of the $\mathbb{Q}$-vectorspace $V_1$
spanned by $a,c,f$. Also $\Delta(g)=g\otimes j+d\otimes h+b\otimes g$,
where $g\in I_M\setminus V_1$, and $I_M$ is the $\mathbb{Q}$-vectorspace spanned by 
the entries of $M$.
 
Furthermore
\[
\Delta(m_r(g))=\Delta(d)=d\otimes e + b\otimes d=(\mathrm{id}\otimes m_r)
(g\otimes j+d\otimes h+b\otimes g),
\] 
as $m_r(j)=0$,
and
\[
\Delta(m_c(g))=\Delta(h)=h\otimes j + e\otimes h=(m_c\otimes \mathrm{id})
(g\otimes j+d\otimes h+b\otimes g),
\] 
as $m_c(b)=0$.
\end{ex}
Let us introduce some more terminology.
For any lower triangular $n\times n$ matrix $M$ let us call the entries $M_{i,j}$ the Galois correspondents of $M$, $M_{i,j}\in \mathsf{Gal}(M)$.
We regard $\Delta$ as a map  
\[
\mathsf{Gal}(M)\to \mathsf{Gal}(M)\otimes \mathsf{Gal}(M).
\]

For several say $k$ such matrices $M_i$, $1\leq i\leq k$,  each of them giving rise to a coproduct and coaction $\Delta^i\equiv \Delta^{M_i}$ we associate the set $\mathsf{Gal}_k:=\cup_{i=1}^k \mathsf{Gal}(M_i)$.
The union is not a disjoint union as a single Galois correspondent 
can be contained in various sets $\mathsf{Gal}(M_i)$ of such correspondents simultaneously.

We then define for all $x\in \mathsf{Gal}_k$,
\[
\Delta(x)=\sum_{j=1}^k\Delta^j(x),
\]
where we set $\Delta^j(x)=0$ $\forall x\not\in \mathsf{Gal}(M_j)$.

In fact there is a matrix $M$ which we can assign to
$\mathsf{Gal}_k$. Of particular interest to us is the case where the entries in the upper left and lower right corner are all equal:  $(M_i)_{1,1}=(M_j)_{1,1}$
and $(M_i)_{n,n}=(M_j)_{n,n}$, $\forall i,j$. 

The generic construction  is an obvious iteration of the following example on two matrices.
\begin{ex}
Assume $M_1,M_2$ are lower $k\times k$ square matrices.

So the $M_i^B$ below are lower $(k-2)\times (k-2)$ square matrices, while
the $M_i^C$ are $(k-2)\times 1$ column matrices, the $M_i^R$ are $1\times (k-2)$ row matrices, $i\in \{1,2\}$. 
\[
M_1=\left(
\begin{array}{c|ccccccc|c}
1 & & 0 & & \sim  & & 0 & & 0\\
\hline
. & & . & & . & & . & & 0\\
M_1^C & & . & & M_1^B & & . & & \wr\\
. & & . & & . & & . & & 0\\
\hline
g & & . & & M_1^R & & . & & c\\
\end{array}\right),
\]
\[
M_2=\left(
\begin{array}{c|ccccccc|c}
1 & & 0 & & \sim  & & 0 & & 0\\
\hline
. & & . & & . & & . & & 0\\
M_2^C & & . & & M_2^B & & . & & \wr\\
. & & . & & . & & . & & 0\\
\hline
h & & . & & M_2^R & & . & & c\\
\end{array}\right),
\]
Then,
\[
M=\left(
\begin{array}{c|cccccc|ccccccc|c}
1 & & 0 & & \sim  & & 0 & & 0 & & \sim & & 0 & & 0\\
\hline
. & & . & & . & & . & & \sim & & \sim & & \sim & & 0\\
M_1^C & & . & & M_1^B & & . & & \sim & & 0 & & \sim & & \wr\\
. & & . & & . & & . & & \sim & & \sim & & \sim & & 0\\
\hline
. & & \sim & & \sim & & \sim & & . & & . & & . & & 0\\
M_2^C & & \sim & & 0 & & \sim & & . & & M_2^B & & . & & \wr\\
. & & \sim & & \sim & & \sim & & . & & . & & . & & 0\\
\hline
g+h & & . & & M_1^R & & . & & . & & M_2^R & & . & & c\\
\end{array}\right),
\]
And indeed immediately checks
\[
\Delta^M=
\Delta^{M_1}+\Delta^{M_2}.
\]
\end{ex}

Furthermore note that for any entry $M_{ij}$ in a $n\times n$ matrix $M$ there exists a lower triangular $(i-j+1)\times (i-j+1)$ matrix 
$M^{ij}$,
with
\[
\Delta^M(x)=\Delta^{M^{ij}}(x),\,\forall x\in \mathsf{Gal}(M^{ij}).
\]
Here the lowest leftmost entry of the matrix $M^{ij}$ is $M_{i,j}$.
\section{Lower triangular matrices from the cubical chain complex}
Lower triangular matrices derived from the cubical chain complex played a prominent role already in \cite{BlKrCut}.  We refine their construction here to derive a graphical coaction.
 
A most prominent role in the study of Feynman graphs $G$ is played by their $|G|$ independent loops. They provide the basis for the subsequent loop integrations of Feynman integrals. 

Assume given a bridgefree graph $G$ together with a spanning tree $T$ for it constituting a pair $(G,T)\equiv G_T\in H_{GF}$.

We can route all external momentum flow through edges $e\in E_T$ of the spanning tree. The remaining $|G|=e_G-e_T$ edges $e_i$ generate a basis $L_{G_T}$ for the cycle space of $G$,
\[
L_{G_T}=\{\cup_{i=1}^{|G|}l_i\},
\]
where each $l_i$ is a cycle of edges given by a pair $(e_i,p_i)$ where $e_i\not\in E_T$ and $p_i\subseteq E_T$ is the unique path of edges connecting source and target of $e_i$. 

The Feynman integral is independent of the choice of $T$ as long as $T$ is a spanning tree of $G$, $T\in\mathcal{T}(G)$.
\begin{ex}
Here is the wheel with three spokes $w_3$:
\[
w_3=\cycleswheel
\]
Its cycles are 
\beas
\{(e_1,e_2,e_3),(e_1,e_5,e_6),(e_2,e_4,e_6),(e_3,e_5,e_4),(e_2,e_3,e_5,e_6),(e_6,e_4,e_3,e_1),\\(e_4,e_5,e_1,e_2)\}.
\eeas
 It has $16=|\mathcal{T}(w_3)|$ spanning trees $T$ on three edges. Choose $T=(e_1,e_6,e_4)$.
$e_2,e_3,e_5$ generate a basis for the cycles $l_1:=(e_2,p_2)$, $l_2:=(e_3,p_3)$ and $l_3:=(e_5,p_5)$, and $p_2=(e_4,e_6)$, $p_3=(e_4,e_6,e_1)$ and $p_5=(e_1,e_6)$.
\end{ex}
 
This setup suggests to study Culler--Vogtmann {\em Outer Space} \cite{CullerV}.
It assigns a $k$-dimensional cell $C(G)$ to any graph $G$  on
$k+1$ edges and the Feynman integral becomes an integral over the volume over this cell. 

This is evident in parametric space where we can identify the edge length $A_e$ of an edge $e$ with the parametric variable. 

The renormalized Feynman form (see \cite{BrownKreimer} for notation
and for other than log-divergent singularities in renormalization)
\[
\mathsf{Int}_R (G)(q,p)=\sum_{F}(-1)^{|F|}\frac{\ln\left(\frac{\Phi(G/F)\psi(F)+\Phi_0(F)\psi(G/F)}{\Phi_0(G/F)\psi(F)+\Phi_0(F)\psi(G/F)}\right)}{\psi^2_G}\Omega_G
\]
as provided by the Symanzik polynomials $\Phi,\psi$ gives the volume form for $C(G)$. The sum is over the forests of $G$ as demanded by renormalization  and we get
\[
\Phi_R(G)(q)=\int_{C(G)} \mathsf{Int}_R(G).
\]
Here, $q\in \mathbf{Q}_G$ is a vector spanned by Lorentz invariants in $\mathbf{Q}_G$  as provided by the external momentum vectors of $G$ and $p\in\mathbb{P}_G$ is a point in the projective space spanned by positive real edge variables.

A sum over all vacuum Feynman graphs then integrates Outer Space (OS). 
A sum over all bridgefree graphs with $n$ loops and  a given number $s$  of external legs  sums the corresponding classes of graphs $X_{n,s}$  with $s$ fixed marked points  and $n$ loops \cite{CoHaKaVo}. 
 
The codimension $k$ boundaries of cells $C(G)$ are cells themselves which are assigned to reduced graphs where $k$ edges of $T$ and therefore of $G$ shrink.

All codimension one boundaries appear for example as
\[
C(G)\to C(G/e),\,e\in E_G.
\]
The lowest dimensional boundary is of codimension $e_T$ and the graph assigned to it is the rose $G/E_T$.

When studying $C(G)$ and the cells apparent as codimension $k$ boundaries a prominent role its played by the barycenters of all these cells.
\begin{ex}
Consider the simplest cell $C(b)$, a one-dimensional line for the bubble graph $b$ on two edges with different masses (green or red lines)  
with two 1-edge spanning trees indicated by double edges.
\[
\onedimcell
\]
The codimension one ends are zero-dimensional cells to which a tadpole graph is associated. At the barycentric middle of the cell we have the graph with its two vertices as spanning forest, and the two internal edges on-shell. The barycentric middle is determined by 
\[
A_rm_r=A_gm_g
\]
as a point $p\in\mathbb{P}_b\equiv \mathbb{P}^1$ which determines the ratio $p_b=A_r/A_g=m_g/m_r$, and there is $q_b\in \mathbf{Q}_b\sim\mathbb{R}$ (generated by $q^2$ with $q_b:\,q^2=(m_1+m_2)^2$),
so that $(q_b,p_b)$ determines a threshold divisor in $\mathsf{Int}_R(b)(q,p)$.

The cell $C(b)$ gives by the two spanning trees rise to two one-dimensional unit cubes $Q_1\sim [0,1]$, $Q_2\sim [0,1]$. Approaching zero, the graph shrinks to a tadpole, approaching one it approaches the leading singularity 
\[
\mathsf{Int}_R(b)(q,p),
\]
corresponding to the barycenter and the evaluation of $\Phi_R(b)$ at  $q^2=(m_1+m_2)^2$. As each cube is one-dimensional it gives rise to a single 
lower triangular Hodge matrix as indicated.
\end{ex}

For each cell $C(G)$ with associated (reduced) graph $g$ the barycenter corresponds to the leading singular graph $g_F$, where $F=\cup_{v\in V_g}$,
so that all internal edges of $g$ are on the mass-shell.

Such barycenters  define paths from the barycenter of $C(G)$ 
through barycenters of lower and lower dimensional hypersurfaces until we reach the barycenter of  codimension $e_T$ cells. 

The collection of all these paths defines the {\em spine of Outer Space} as a deformation retract of OS \cite{CullerV}.

The barycenters  then provide the coordinates in parametric space of the threshold divisors which generate monodromy of Feynman amplitudes. 

Physicsal thresholds are determined by solving a variational problem \cite{BlKrCut} determining  the minimal kinematical configuration so as to make the discriminant of the second Symanzik polynomial $\Phi$  vanish for the associated leading singular graph.

The spine determines a set of $|\mathcal{T}(G)|$ $e_T$-dimensional cubes
and with it $|\mathcal{T}(G)|\times e_T!$ paths from the midpoint of $C(G)$ to the rose $G/E_T$. Each such path defines a lower triangular matrix correponding 
to the $e_T!$ simplices into which a cube decomposes.

We will thus now turn to the cubical chain complex for Feynman graphs \cite{MarkoDirk,BlKrCut} where a pair $G_T$  of a graph $G$ and a spanning tree $T$ for it gives rise
to a $e_T$-cube.

Any $e_T$-cube gives rise to a natural cell decomposition into
$e_T!$ simplices and therefore generates $e_T!$ lower triangular matrices
corresponding to the $e_T!$ possible orderings of the edges.
Fig.(\ref{trianglecanvas}) is instructive.
\begin{figure}[H]
\includegraphics[height=66mm]{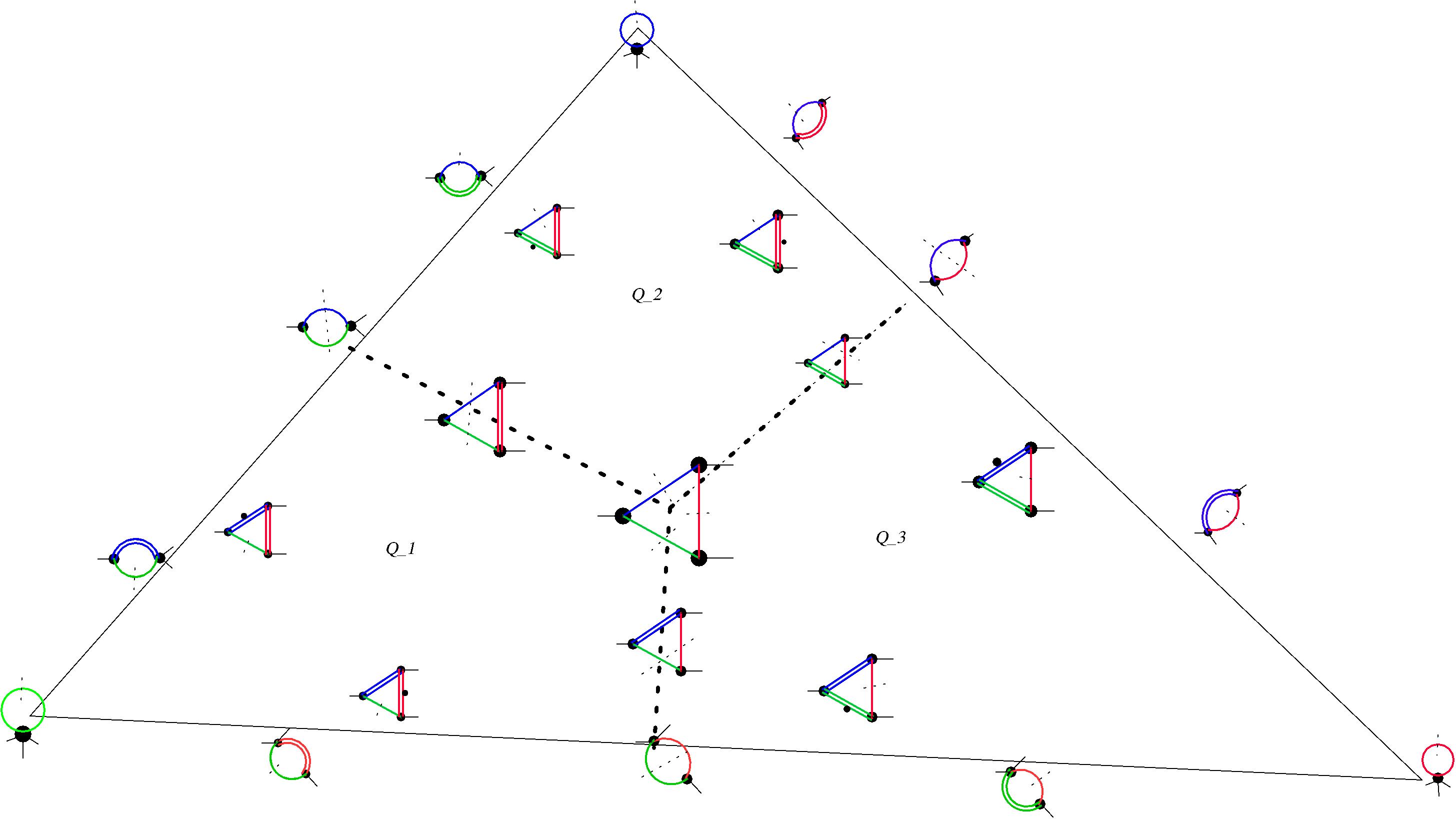}
\caption{A two dimensional cell $C(t)$  (a 2-simplex, so itself a triangle) in outer space for the triangle graph $t$ on three different masses (indicated by colored edges). On-shell edges are thin and marked by a hashed line, off-shell edges are double lines, a dot orders the two-edge spanning trees with the dotted edge the longer one. For this simple graph the spine gives a simplicial decomposition of $C(t)$ into three open 2-cubes $Q_1,Q_2,Q_3$.}
\label{trianglecanvas}
\end{figure}
Note that the cell $C(t)$ as well as the cubes $Q_i$ are two dimensional and in fact the cell $C(t)$ can be dissected in $|\mathcal{T}(t)|=3$ open cubes so that $C(t)=\overline{\amalg_i Q_i}$ is the completion of their union. This is in fact typical for one-loop graphs:
\[
C(G)=\bar{X},\,X=\amalg_{j=1}^{|\mathcal{T}(G)|} Q_j,\,1\leq j\leq |\mathcal{T}(G)|
\]
and $\bar{X}\setminus X$ is the spine of $G$. This is very different for generic graphs where $\mathsf{dim}(C(G))=\mathsf{dim}(Q_j)=-1+|G|$ \cite{MarkoDirk}.

Let us study one cube say for the spanning tree on blue and red edges, so the cube $Q_1$ containing $\tadpoleg$.

It provides nine cells: a two-dimensional square, four one-dimensional edges, and four zero-dimensional corners.
\[
Q^{(1)}=
\begin{array}{cc|cccccc|ccc}
\bubblegb & & & & \triangler  & & & & \triangle & \\
\hline
& & & & & & & & & \\
& & & & & & & & & \\
\bubblegbb & & & & \trianglebr & & & & \triangleb & \\
& & & & & & & & & \\
& & & & & & & & & \\
\hline
\tadpoleg & & & &  \bubblegrr & & & & \bubblegr &\\
\end{array}
\]
The three graphs in the anti-diagonal from the lowest left corner (the origin of the cube) to the upper right corner appear in both Hodge matrices $M^{(1)}$ and $M^{(2)}$ defined below and associated to this cube. 

In such $n$-cubes, the $n!$  paths from the rose (the origin of the cube) to the leading singular graph (diagonally opposed for the main diagonal) share their origin and their endpoint and do not intersect otherwise. This reflects the Steinmann relations \cite{Steinmann} as threshold divisors do not overlap between graphs and sectors assigned to different paths. An example:
\begin{ex}
We list all six matrices $M^{(i)}$, $1\leq i\leq 6$ for the triangle graph.
{
\[
M^{(1)}=\left(
\begin{array}{cc||cc|cc|c}
\ooo & & \nnn & & \nnn  & & \nnn\\
\hline
\hline
\tadpoleggg & & \tadpoleg & & \nnn  & & \nnn\\
\hline
\bubblegbgg & & \bubblegbb & & \bubblegb & & \nnn\\
\hline
\trianglebrbg & & \trianglebrb & & \triangler & & \triangle\\
\end{array}\right),
\]
for sectors
\[
\left(
\begin{array}{cc|cc}
1 & & \emptyset\\
\hline
\hline
\tadpoleggg & & A_g>0\\
\hline
\bubblegbgg & & A_g>A_b>0\\
\hline
\trianglebrbg & & A_g>A_b>A_r>0\\
\end{array}\right),
\]
\[
M^{(2)}=\left(
\begin{array}{cc||cc|cc|c}
\ooo & & \nnn & & \nnn  & & \nnn\\
\hline
\hline
\tadpoleggg & & \tadpoleg & & \nnn  & & \nnn\\
\hline
\bubblegrgg & & \bubblegrr & & \bubblegr & & \nnn\\
\hline
\trianglebrrg & & \trianglebrr & & \triangleb & & \triangle\\
\end{array}\right),
\]
for sectors
\[
\left(
\begin{array}{cc|cc}
1 & & \emptyset\\
\hline
\hline
\tadpoleggg & & A_g>0\\
\hline
\bubblegrgg & & A_g>A_r>0\\
\hline
\trianglebrrg & & A_g>A_r>A_b>0\\
\end{array}\right).
\]
As $A_r>A_b$ and $A_b>A_r$ do not coexist, both sectors give monodromy in different 
regions of $\mathbf{Q}_t$.
\[
M^{(3)}=\left(
\begin{array}{cc||cc|cc|c}
\ooo & & \nnn & & \nnn  & & \nnn\\
\hline
\hline
\tadpolerrr & & \tadpoler & & \nnn  & & \nnn\\
\hline
\bubblegrgxr & & \bubblegrg & & \bubblegr & & \nnn\\
\hline
\trianglegbgr & & \trianglegbg & & \triangleb & & \triangle\\
\end{array}\right),
\]
for sectors
\[
\left(
\begin{array}{cc|cc}
1 & & \emptyset\\
\hline
\hline
\tadpolerrr & & A_r>0\\
\hline
\bubblegrgxr & & A_r>A_g>0\\
\hline
\trianglegbgr & & A_r>A_g>A_b>0\\
\end{array}\right),
\]
\[
M^{(4)}=\left(
\begin{array}{cc||cc|cc|c}
\ooo & & \nnn & & \nnn  & & \nnn\\
\hline
\hline
\tadpolerrr & & \tadpoler & & \nnn  & & \nnn\\
\hline
\bubblebrrr & & \bubblebrb & & \bubblebr & & \nnn\\
\hline
\trianglegbbr & & \trianglegbb & & \triangleg & & \triangle\\
\end{array}\right),
\]
for sectors
\[
\left(
\begin{array}{cc|cc}
1 & & \emptyset\\
\hline
\hline
\tadpolerrr & & A_r>0\\
\hline
\bubblebrrr & & A_r>A_b>0\\
\hline
\trianglegbbr & & A_r>A_b>A_g>0\\
\end{array}\right),
\]
\[
M^{(5)}=\left(
\begin{array}{cc||cc|cc|c}
\ooo & & \nnn & & \nnn  & & \nnn\\
\hline
\hline
\tadpolebbb & & \tadpoleb & & \nnn  & & \nnn\\
\hline
\bubblebrbb & & \bubblebrr & & \bubblebr & & \nnn\\
\hline
\trianglergrb & & \trianglergr & & \triangleg & & \triangle\\
\end{array}\right),
\]
for sectors
\[
\left(
\begin{array}{cc|cc}
1 & & \emptyset\\
\hline
\hline
\tadpolebbb & & A_b>0\\
\hline
\bubblebrbb & & A_b>A_r>0\\
\hline
\trianglergrb & & A_b>A_r>A_g>0\\
\end{array}\right),
\]
\[
M^{(6)}=\left(
\begin{array}{cc||cc|cc|c}
\ooo & & \nnn & & \nnn  & & \nnn\\
\hline
\hline
\tadpolebbb & & \tadpoleb & & \nnn  & & \nnn\\
\hline
\bubblegbbb & & \bubblegbg & & \bubblegb & & \nnn\\
\hline
\trianglerggb & & \trianglergg & & \triangler & & \triangle\\
\end{array}\right),
\]
for sectors
\[
\left(
\begin{array}{cc|cc}
1 & & \emptyset\\
\hline
\hline
\tadpolebbb & & A_b>0\\
\hline
\bubblegbbb & & A_b>A_g>0\\
\hline
\trianglerggb & & A_b>A_g>A_r>0\\
\end{array}\right).
\]
}
Note that each such Hodge matrix is well-defined in its sector.

We define 
\[
\Delta^M=\sum_{j=1}^6 \Delta^{M^{(j)}},
\]
where we find $M$ as
\[
\left(
\begin{array}{cc||cc|cc|cc|cc|cc|cc|c}
\ooo & & \nnn & & \nnn  & & \nnn & & \nnn & & \nnn & & \nnn & & \nnn\\
\hline
\hline
\tadpolegg & & \tadpoleg & & \nnn  & & \nnn  & & \nnn  & & \nnn  & & \nnn  & & \nnn\\
\hline
\tadpolerr & & \nnn & & \tadpoler & & \nnn & & \nnn & & \nnn & & \nnn & & \nnn\\
\hline
\tadpolebb & & \nnn & & \nnn & & \tadpoleb & & \nnn & & \nnn & & \nnn & & \nnn\\
\hline
\bubblegrgr & & \bubblegrr & & \bubblegrg  & & \nnn & & \bubblegr & & \nnn & & \nnn & & \nnn\\
\hline
\bubblebrbr & & \nnn & & \bubblebrb & & \bubblebrr & & \nnn & & \bubblebr & & \nnn & & \nnn\\
\hline
\bubblegbgb & & \bubblegbb & & \nnn & & \bubblegbg & & \nnn & & \nnn & & \bubblegb & & \nnn\\
\hline
\trianglebrg & & \trianglebr & & \trianglegb & & \trianglerg & & \triangleb & & \triangleg & & \triangler & & \triangle\\
\end{array}\right).
\]
Let us work out a few coactions.
\bea
\Delta^M\left(\trianglebrg\right) & = & \trianglebrg\otimes \triangle+\bubblegbgb\otimes \triangler\nonumber\\
 & &\nonumber\\ & + & \bubblebrbr\otimes \triangleg
+ \bubblegrgr\otimes \triangleb\nonumber\\
 & &\nonumber\\ & + & \tadpolebb \otimes \trianglerg+\tadpolerr\otimes \trianglegb\nonumber\\
  & &\nonumber\\
 & + & \tadpolegg\otimes\trianglebr.\nonumber
\eea
Note that $\Delta^M(M_{i,i})=M_{i,i}\otimes M_{i,i}$ is group-like, so
for example
\[
\Delta^M\left(\triangle\right)=\triangle\otimes\triangle.
\]
Furthermore the tadpole graphs fulfill
\[
\Delta^M\left(\tadpolegg\right)=\tadpolegg\otimes\tadpoleg,
\]
and similar for blue, red.
\end{ex}

This construction is generic and constructs a graphical coaction for any graph with any number of loops and legs. 

For example a $n$-loop vertex graph $G$ in $\Phi^4$ theory has $2n$ edges and all spanning trees $T\in\mathcal{T}(G)$ have $n$ edges.

We get $|\mathcal{T}(G)|\times n!$ Hodge matrices $M(G,T_\mathfrak{o})$,
where  the number of spanning trees
\[
|\mathcal{T}(G)|=\psi(1,\ldots,1),
\]
is given through the first Symanzik polynomial evaluated at unit arguments.

The required graphical coaction $\Delta^G$ then comes as above by summing the individual coactions $\Delta^{M(G,T_\mathfrak{o})}$ for $M(G,T_\mathfrak{o})$ which corresponds to a construction of a matrix $M=M^G$ from all the matrices $M(G,T_{\mathfrak{o}})$.

The situation simplifies if we use kinematical renormalization schems which set tadpole graphs to zero and use (see \cite{MarkoDirk})
\be\label{spacelikeFR}
\Phi_R(G)=\sum_{T\in\mathcal{T}(G)}\vec{\Phi}_R((G,T)).
\ee
Here on the rhs we use Feynman rules $\vec{\Phi}_R((G,T))$ integrating the spacelike parts of loop momenta after the energy components $k_{i,0}$ have been integrated out as residue integrals. These residure integrals generated the sum over spanning trees on the right \cite{MarkoDirk}.

This allows to  erase the leftmost column and uppermost row in the matrices $M^{(i)}$ and  we get six $3\times 3$ matrices $N^{(i)}$ which we can combine to a matrix 
$N$ as follows:
\[
N=\left(
\begin{array}{cc|cc|cc|cc|c}
\left(\tadpoleg+\tadpoler+\tadpoleb\right)\sim 0 & & 0  & & 0  & & 0  & & 0\\
\hline
\underbrace{\bubblegrr +\bubblegrg}_{\bubblegrgr}  & & \bubblegr & & \nnn & & \nnn & & \nnn\\
\hline
\underbrace{\bubblebrb + \bubblebrr}_{\bubblebrbr} & & \nnn & & \bubblebr & & \nnn & & \nnn\\
\hline
\underbrace{\bubblegbb + \bubblegbg}_{\bubblegbgb} & & \nnn & & \nnn & & \bubblegb & & \nnn\\
\hline
\underbrace{\trianglebr +\trianglegb + \trianglerg}_{\trianglebrg} & & \triangleb & & \triangleg & & \triangler & & \triangle\\
\end{array}\right).
\]
Here we could write
\[
\trianglebr +\trianglegb + \trianglerg=\trianglebrg,
\]
as the sum over residues when doing the $dk_{i,0}$ (contour) integrations for any graph $G$ pairs off with the spanning trees of $G$ by Eq.(\ref{spacelikeFR}). 
Here an entry $(G,F)\in H_{GF}$ in the matrix is shorthand for $\Phi_R((G,F))$.

Note that the corresponding coaction $\Delta^N$ is utterly based on Cutkosky graphs:
\beas
\Delta^N\left(\trianglebrg\right) & = & \trianglebrg\otimes \triangle+
\bubblegbgb\otimes \triangler\\
 & &\\
 & + & \bubblebrbr\otimes \triangleg+\bubblegrgr\otimes\triangleb.\\
  & &
\eeas 
Also,
\[
\Delta^N\left(\bubblegbgb\right)=\bubblegbgb\otimes \bubblegb,
\]
and so on. This is particularly useful when using kinematical renormalization schemes where indeed any tadpole vanishes.

There is much more information in our matrices (where we understand that entries are evaluated by renormalized Feynman rules)
\[
N^{(1)}=\left(
\begin{array}{ccccc}
\tadpoleg & & \nnn  & & \nnn\\
\uparrow\pi & & & & \\ 
\bubblegbb & \rightleftharpoons_\mathrm{disp}^\mathrm{Var} & \bubblegb & & \nnn\\
\uparrow\pi & & \uparrow\pi & & \\
\trianglebrb & \rightleftharpoons_\mathrm{disp}^\mathrm{Var} & \triangler & \rightleftharpoons_\mathrm{disp}^\mathrm{Var} & \triangle\\
\end{array}\right).
\]
Some properties:
\begin{itemize}
\item The boundary $d$ of the cubical chain complex \cite{rational} and its action on a graph $G_F\in H_{GF}$ is realized on $\Phi_R(G_F)$ as indicated for $N^{(1)}$ above. 
\[
d=d_0+d_1,\, d\circ d=d_0\circ d_0=d_1\circ d_1=0,
\]
goes to the right: $\mathrm{Var}(\Phi_R(G_F))= \Phi_R(d_0(G_F))$, and up: $\pi \circ \Phi_R(G_F)= \Phi_R(d_1(G_F))$,
corresponding to $m_r$ and $m_c$ in the coaction.
\item Any variation induces a transition in the columns $C_i\to C_{i+1}$ by putting an edge $e$ with
quadric $Q(e)$  on the mass-shell. Therefore Hodge matrices.
\item This determines a point in a fiber determined by the zero locus $Q(e)=0$ 
of the quadric $Q(e)$ assigned to edge $e$. $\pi$ is the corresponding projection 
onto a base space provided by the reduced graph. It also determines  a sequence of iterated integrals associated to the order $\mathfrak{o}$ in either parametric or quadric Feynman rules. The next Sec.(\ref{trianglecomp}) gives an example.
\item Any row $R_{i+1}$ is a fibration over $R_i$ by a one-dimensional fiber.
For example the $z$-integral in Eq.(\ref{Coverbeta}) is an integral over such a one-dimensional fiber.
\item   Boundaries of the dispersion integral are provided by the leading singularities stored in $M_{\mathsf{D}}$.
\end{itemize}
\subsection{The triangle graph}\label{trianglecomp}
Consider the one-loop triangle with vertices $\{A,B,C\}$ and edges 
\[
\{(A,B),(B,C),(C,A)\},
\]
and quadrics (in this example we use both $p,q$ to indicate 4-momenta as we are not invokung parametric variables) :
$$P_{AB}=k_0^2-k_1^2-k_2^2-k_3^2-M_1,$$ 
$$P_{BC}=(k_0+q_0)^2-k_1^2-k_2^2-k_3^2-M_2,$$ 
$$P_{CA}=(k_0-p_0)^2-(k_1)^2-(k_2)^2-(k_3-p_3)^2-M_3.$$
Here, we Lorentz transformed into the rest frame of the external Lorentz 4-vector $q=(q_0,0,0,0)^T$, and oriented the space like part of $p=(p_0,\vec{p})^T$ in the 3-direction: $\vec{p}=(0,0,p_3)^T$.

Using $q_0=\sqrt{q^2}$, $q_0p_0=q_\mu p^\mu\equiv q.p$, $\vec{p}\cdot\vec{p}=\frac{q.p^2-p.pq.q}{q^2}$, we can express everything in covariant form whenever we want to.

We consider first the two quadrics $P_{AB},P_{BC}$ which intersect in $\mathbb{C}^4$.

The real locus we want to integrate is $\mathbb{R}^4$, and we split this as $\mathbb{R}\times\mathbb{R}^3$,
and the latter three dimensional real space we consider in spherical variables as $\mathbb{R}_+\times S^1\times [-1,1]$,
by going to coordinates  $k_1=\sqrt{s}\sin\phi\sin\theta$,$k_2=\sqrt{s}\cos\phi\sin\theta$,
$k_3=\sqrt{s}\cos\theta$, $s=k_1^2+k_2^2+k_3 ^2$, $z=\cos\theta$.

We have 
$$P_{AB}=k_0^2-s-M_1,$$
$$P_{BC}=(k_0+q_0)^2-s-M_2.$$
So we learn say $s=k_0^2-M_1$ from the first
and $$k_0=k_r:=\frac{M_2-M_1-q_0^2}{2q_0}$$ from the second,
so we set
$$s_r
:=\frac{M_2^2+M_1^2+(q_0^2)^2-2(M_1M_2+q_0^2M_1+q_0^2M_2)}{4q_0^2}.$$

The integral over the real locus transforms to 
$$\int_{\mathbb{R}^4}d^4k\to \frac{1}{2}\int_{\mathbb{R}}\int_{\mathbb{R}_+}\sqrt{s} \delta_+(P_{AB})\delta_+(P_{BC})dk_0ds\times \int_0^{2\pi}\int_{-1}^1 d\phi \delta_+(P_{CA})dz.$$
We consider $k_0,s$ to be base space coordinates, while $P_{CA}$ also depends on the fibre coordinate $z=\cos\theta$. Nothing depends on $\phi$ (for the one-loop box it would).

Integrating in the base and integrating also $\phi$ trivially in the fibre  gives
$$\frac{1}{2} \frac{\sqrt{s_r}}{2q_0}2\pi \int_{-1}^1 \delta_+(P_{CA}(s=s_r,k_0=k_r))dz.$$

For $P_{CA}$ we have 
\be\label{alphabeta}
P_{CA}=(k_r-p_0)^2-s_r -\vec{p}\cdot\vec{p}-2|\vec{p}|\sqrt{s_r}z-M_3=:\alpha+\beta z.
\ee
Integrating the fibre gives a very simple expression (the Jacobian of the $\delta$-function is $1/(2\sqrt{s_r}|\vec{p}|)$, and we are left with
the Omn\`es factor\footnote{For any 4-vector $r$ we have $r^2=r_0^2-\vec{r}\cdot\vec{r}$. Let $q$ be a time-like 4-vector, $p$ an arbitrary 4-vector.
Then, $(q\cdot p^2-q^2 p^2)/q^2=\lambda(q^2,p^2,(q+p)^2)/4q^2$
and in the rest frame of $q$, $(q\cdot p^2-q^2 p^2)/q^2=\vec{p}\cdot \vec{p}$ where $\lambda(a,b,c)=a^2+b^2+c^2-2(ab+bc+ca)$, as always.}
\be\label{Omnes}
\frac{\pi}{4|\vec{p}|q_0}=\frac{\pi}{2\sqrt{\lambda(q^2,p^2,(q+p)^2)}}
=\Phi_R\left(\triangle\right).
\ee

This contributes as long as the fibre variable 
\be\label{fiberz}
z=\frac{(k_r-p_0)^2-s_r -\vec{p}\cdot\vec{p}-M_3}{2|\vec{p}|\sqrt{s_r}}
\ee
 lies in the range $(-1,1)$.
This is just the condition that the three quadrics intersect.

An anomalous threshold below the normal theshold appears when $(m_1-m_2)^2<q^2<(m_1+m_2)^2$.

On the other hand, when we leave the propagator $P_{CA}$ uncut,
we have the integral
$$\frac{1}{2} \frac{\sqrt{s_r}}{2q_0}2\pi \int_{-1}^1 \frac{1}{P_{CA}}_{(s=s_r,k_0=k_r)}dz.$$
This delivers a result as foreseen by $S$-Matrix theory \cite{Polkinghorne,ELOP}.

The two $\delta_+$-functions constrain the $k_0$- and $t$-variables, so that the remaining integrals are over the compact domain $S^2$. Here the fiber is provided by the one-dimensional $z$-integral and the compactum $C_{G/E_F}$ is the two-dimensional $S^2$ while for $C_G$ it is the one-dimensional $S^1$.  

As the integrand does not depend on $\phi$, this gives a result of the form
\bea\label{Coverbeta} 
\Phi_R\left(\triangler\right) & = & 2\pi C \underbrace{\int_{-1}^1 \frac{1}{\alpha+\beta z}dz}_{=:J_{CA}}=2\pi \frac{C}{\beta}\ln\frac{\alpha+\beta}{\alpha-\beta}\\ & = &\frac{1}{2}\underbrace{\mathrm{Var}(\Phi_R(b_2))}_{\Phi_R\left(\bubblegbgb\right)}\times J_{CA},\nonumber
\eea
where $C=\sqrt{s_r}/2q_0$ is intimitaly related to $\mathrm{Var}(\Phi_R(b_2))$
for $b_2$ the reduced triangle graph (the bubble), and the factor $1/2$ here is $
\mathrm{Vol}(S^1)/\mathrm{Vol}(S^2)$.

Here, $\alpha$ and $\beta$ are given through (see Eq.(\ref{alphabeta}))
$l_1\equiv \vec{p}^2=\lambda(q^2,p^2,(p+q)^2)/4q^2$ and $l_2:=s_r=\lambda(q^2,M_1,M_2)/4q^2$ as
\[
\alpha=(k_r-p_0)^2-l_2 -l_1-M_3,\,\beta=2\sqrt{l_1l_2}.
\]
Note that
\[
\frac{C}{\beta}=\frac{1}{\sqrt{\lambda(q^2,p^2,(q+p)^2)}}=\frac{1}{2q_0|\vec{p}|},
\]
in Eq.(\ref{Coverbeta}) is proportional to the Omn\`es factor Eq.(\ref{Omnes}).

In summary, there is a Landau singularity in the reduced graph in which we shrink $P_{CA}$. It is  located at 
\[
q_0^2=s_{normal}=(\sqrt{M_1}+\sqrt{M_2})^2=s_{\left(\bubblegb\right)}.
\]
It corresponds to the threshold divisor defined by the intersection\\ $(P_{AB}=0)\cap (P_{BC}=0)$ at the point
\[
\left(q^2=(\sqrt{M_1}+\sqrt{M_2})^2, \underbrace{\frac{A_1}{A_2}=\frac{\sqrt{M_2}}{\sqrt{M_1}}}_{{\text{barycenter}}\left(\bubblegb\right)}\right)
\]

 This is not a Landau singularity
when we unshrink $P_{CA}$ though. A (leading) Landau singularity appears
in the triangle when we also intersect the previous divisor with the locus $(P_{CA}=0)$.

It has a location which can be computed from the parametric approach.   
One finds
\bea\label{sanom} 
q_0^2 & = & s_{anom}  =  (\sqrt{M_1}+\sqrt{M_2})^2+\nonumber\\
& & +\frac{4M_3(\sqrt{\lambda_2}\sqrt{M_1}-\sqrt{\lambda_1}\sqrt{M_2})^2-\left(\sqrt{\lambda_1}(p^2-M_2-M_3)\right.}{4M_3\sqrt{\lambda_1}\sqrt{\lambda_2}}\nonumber\\
 & & +\frac{\left.\sqrt{\lambda_2}((p+q)^2-M_1-M_3)\right)^2}{4M_3\sqrt{\lambda_1}\sqrt{\lambda_2}}\nonumber\\
 & = & s_{\left(\triangle\right)},\nonumber  
\eea
with $\lambda_1=\lambda(p^2,M_2,M_3)$ and $\lambda_2=\lambda((p+q)^2,M_1,M_3)$. 

Eq.(\ref{Coverbeta}) above is the promised result: the leading singularity of the reduced graph $t/P_{CA}$ and the non-leading singularity of $t$ have the same location and both involve $\mathrm{Var}(\Phi_R(b_2))$ and the non-leading singularity of $t$ factorizes into the (fibre) amplitude $J_{CA}\times \mathrm{Var}(\Phi_R(b_2))$. 

This gives rise to a cycle which is a generator in the cohomology
of the cubical chain complex as Fig.(\ref{cycle}) demonstrates \cite{MarkoDirk}.
\begin{figure}[H]
\includegraphics[width=12cm]{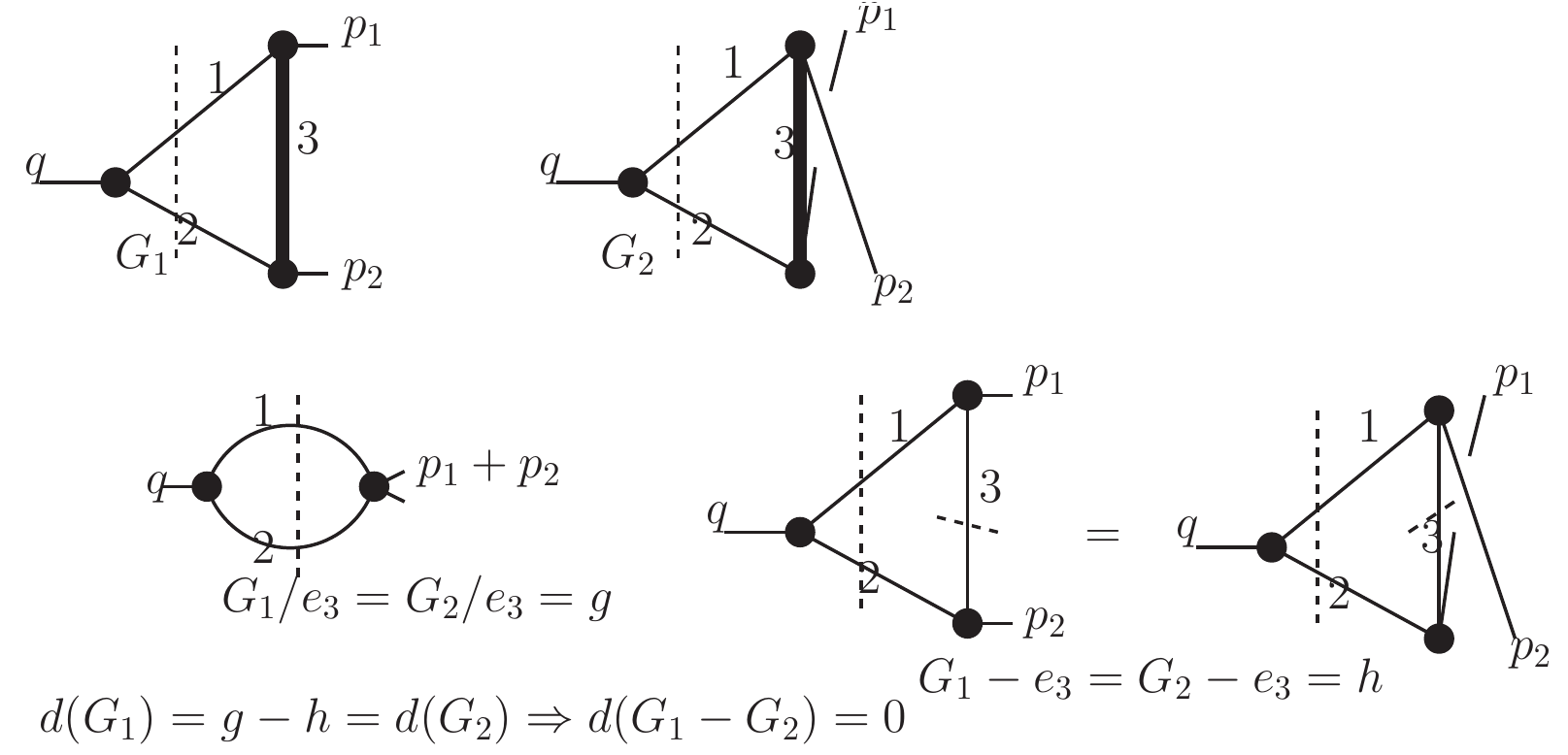}.
\caption{The two Cutkosky  triangle graphs $G_1,G_2$  are distinguished by a permutation of external edges $p_1,p_2$. Edges $e_1,e_2$ are on-shell, $e_3$ is off-shell and hence in the forest. Shrinking or removing it delivers in both cases the same reduced ($g$) or leading ($h$) graph. As a result we get a cycle $d(G_1-G_2)=0$. Obviously there is no $X$ such that $dX=G_1-G_2$.}
\label{cycle}
\end{figure}
As for dispersion, we get a result effectively mapping $C_3\to C_2\to C_1$:
\beas
\Phi_R\left(\trianglebrg\right) & = & \int_{s_{\left(\triangle\right)}}^{s_{\left(\bubblegb\right)}}\frac{\Phi_R\left(\triangle\right)}{s-x}dx\\
& + & \int_{s_{\left(\bubblegb\right)}}^{\infty}\frac{\Phi_R\left(\triangler\right)}{s-x}dx.
\eeas
The situation is very similar for the Dunce's cap graph $dc$.
Again we have spanning trees of length two and monodromy generated from partitioning its three vertices in all possible ways by cuts.  

Look first at a single term for a chosen ordered spanning tree:
\begin{figure}[H]
\includegraphics[height=40mm]{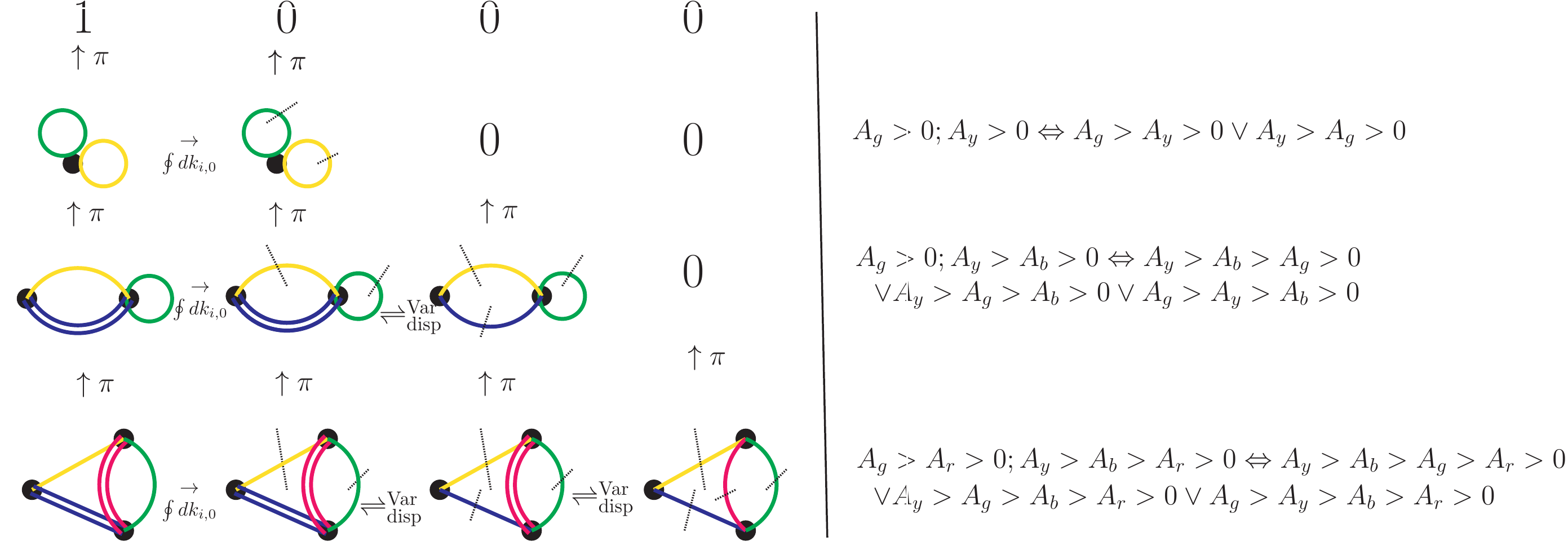}
\caption{Dispersion in the Dunce's cap. Here the order is blue before red, so red shrinks first and blue is cut first.
Note that due to the presence of more than one loop, choosing a spanning tree (blue, red: the thick double edges) and an order does not single out a single sector as it would in the one-loop case. Here we get three sectors. See the right column.
Summing over trees and orders correctly delivers all $24$ sectors from the ten ordered spanning trees partitioning them as $24=3+2+3+2+3+2+3+2+2+2$ as we see below.}
\label{dccoexDisp}
\end{figure}
\subsection{Summing up}
We use Eq.(\ref{spacelikeFR}) where $\vec{\Phi}_R$ has integrated out all energy integrals $\oint \prod_{i=1}^{|G|} dk_{i,0}$
by contour integrations closing in the upper halfplane.

This leads to a graphical coaction:
\begin{thm}
\[
\Delta^G=\sum_{(T,\mathfrak{o})\sim G}\Delta^{G_{T_\mathfrak{o}}}(g),
\]
defines a graphical coaction for all $g\in \mathsf{Gal}(M^G)$. For kinematical renormalization schemes 
it can be written as a coaction on Cutkosky graphs.
\end{thm}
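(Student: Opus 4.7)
The plan is to reduce the claim to the single-matrix results of the previous section, using the geometry of Outer Space to supply the required family of lower triangular matrices.

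First, for each pair $(T,\mathfrak{o})$ consisting of a spanning tree $T\in\mathcal{T}(G)$ and a total ordering $\mathfrak{o}$ of $E_T$, the simplicial decomposition of the cube $Q_T\subset C(G)$ picks out one $e_T$-simplex in the spine. Reading off its ordered vertices together with the intermediate graphs produced by successively shrinking forest edges and/or putting internal edges on-shell yields a lower triangular $(e_T+1)\times(e_T+1)$ matrix $G_{T_\mathfrak{o}}$, with the rose $G/E_T$ in the upper-left corner and the fully on-shell graph $G_{F_{\max}}$ in the lower-right corner. The proposition establishing that $\rho_\Delta$ is a coaction, together with the counit identity Eq.(\ref{counit}), then shows that each $\Delta^{G_{T_\mathfrak{o}}}$ is a coaction on the first column of $G_{T_\mathfrak{o}}$.

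Second, I would observe that every matrix $G_{T_\mathfrak{o}}$ shares the same group-like upper-left entry and the same lower-right entry $G_{F_{\max}}$, since having every internal edge on-shell is an intrinsic property of $G$ independent of $(T,\mathfrak{o})$. The iterated version of the two-matrix combination procedure from the earlier example therefore assembles the full family $\{G_{T_\mathfrak{o}}\}$ into a single matrix $M^G$ satisfying
\begin{equation}
\Delta^{M^G}=\sum_{(T,\mathfrak{o})\sim G}\Delta^{G_{T_\mathfrak{o}}},
\end{equation}
which is the displayed formula for $\Delta^G$. Coassociativity and compatibility with the counit descend from their termwise validity for each $\Delta^{G_{T_\mathfrak{o}}}$, provided one checks that the union $\mathsf{Gal}(M^G)=\bigcup_{(T,\mathfrak{o})}\mathsf{Gal}(G_{T_\mathfrak{o}})$ is treated with the correct Galois-correspondent bookkeeping so that entries shared between several matrices contribute coherently. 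This is exactly the structure the Galois-correspondent formalism was built to support.

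Third, for kinematical renormalization schemes every tadpole graph evaluates to zero, so the tadpole row and column of $M^G$ can be deleted to yield a reduced matrix $N^G$, as in the passage $M\to N$ of the triangle example. The key input for the second assertion is Eq.(\ref{spacelikeFR}): after the energy contour integrations $\oint\prod_{i}dk_{i,0}$ in the upper half-plane, summing $\vec{\Phi}_R((G,T))$ over spanning trees reproduces $\Phi_R(G)$ and identifies each first-column entry of $N^G$ with a Cutkosky graph. Consequently $\Delta^G$ restricted to this scheme is manifestly a coaction on Cutkosky graphs.

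The main obstacle will be the combinatorial bookkeeping in assembling the family $\{G_{T_\mathfrak{o}}\}$ into $M^G$: one must verify that graphs appearing as shared lower-dimensional faces of several simplices contribute to the sum of coactions with the correct multiplicity. The geometric input that makes this work is the fact, already noted in connection with the Steinmann relations, that in each $e_T$-cube the $e_T!$ paths from the rose to $G_{F_{\max}}$ share only their endpoints and otherwise do not intersect. Once this is translated into the statement that overlapping Galois correspondents assemble into the matrix-combination construction of the earlier example, coassociativity for $\Delta^G$ follows from its termwise validity without further work.
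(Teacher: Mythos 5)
Your proposal follows the paper's own route: each ordered spanning tree yields a lower triangular matrix from the simplex decomposition of its cube, the incidence-coalgebra proposition makes each $\Delta^{G_{T_\mathfrak{o}}}$ a coaction on the first column, the common corners (the rose $G/E_T$ and the leading singular graph) let the two-matrix assembly example iterate to a single matrix $M^G$ with $\Delta^{M^G}=\sum_{(T,\mathfrak{o})}\Delta^{G_{T_\mathfrak{o}}}$, and tadpole vanishing together with Eq.~(\ref{spacelikeFR}) gives the Cutkosky form, exactly as in the paper's construction and its Appendices 2 and 3. One wording caveat: coassociativity does not literally ``descend from termwise validity'' (a sum of coassociative maps need not be coassociative in general); it holds because $M^G$ is itself a single lower triangular matrix to which the general proposition applies --- which is precisely what your assembly step supplies, so the argument stands once phrased that way.
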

\begin{cor}
Assume the number of spanning trees equals the number of edges of a graph,  $|\mathcal{T}(G)|=e_G$ which is true for one-loop graphs and their duals, banana graphs. We call them simple graphs (in blunt ignorance of the analytical complexity of an $n$-edge banana graph, $n\geq 3$).
Then
\[
\sum_{(T,\mathfrak{o})\sim G}\Delta^{G_{T_\mathfrak{o}}}(G)=\Delta_{\mathsf{Inc}}(G),
\]
where $\Delta_{\mathsf{Inc}}$ is the incidence Hopf algebra and coaction used by Britto et.\ al.\ \cite{Brittoetal}.
\end{cor}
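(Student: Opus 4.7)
The plan is to verify the identity by showing that, for a simple graph $G$, the combined matrix $M^G$ assembled from the ordered-spanning-tree matrices $M(G,T_\mathfrak{o})$ coincides with the incidence matrix on the Boolean lattice $\mathcal{B}(E_G)$ used by Britto et al. Simple graphs fall into two families to handle: a one-loop graph $G$ with edges $\{e_1,\ldots,e_n\}$, whose spanning trees are the co-edges $T_i=E_G\setminus\{e_i\}$; and its $n$-banana dual, whose spanning trees are the $n$ single edges. In both cases $|\mathcal{T}(G)|=e_G=n$.

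I would first unpack the left-hand side for a one-loop graph. There are $n$ spanning trees of size $n-1$, each contributing $(n-1)!$ lower-triangular matrices from the simplicial decomposition of the cube $Q_{(G,T)}$. Iterating the two-matrix fusion construction of the Example at the end of Section~2 merges these $n!$ matrices into a single triangular matrix $M^G$ indexed by subsets $F\subseteq E_G$: the row labelled by $F$ records the contracted graph $G/F$, while the entry in row $F$ and column $F'\supseteq F$ is the Cutkosky-decorated graph obtained by contracting the edges in $F$ and cutting those in $F'\setminus F$. The coherence property $\Delta^M(x)=\Delta^{M^{ij}}(x)$ of Section~2 ensures that different ordered spanning trees refining the same chain $F\subseteq F'$ contribute consistently to the same entry of $M^G$ without overcounting.

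Second, I would identify the right-hand side as Schmitt's incidence coaction on $\mathcal{B}(E_G)$. For simple graphs, the condition $|\mathcal{T}(G)|=e_G$ forces the assignment $F\mapsto G/F$ to be injective on $2^{E_G}$, so Britto et al.'s poset of cut/contracted graphs is exactly the Boolean lattice, and $\Delta_{\mathsf{Inc}}(G)=\sum_{F\subseteq E_G}[\emptyset,F]\otimes[F,E_G]$, with $[F,E_G]=G/F$ and $[\emptyset,F]$ the Cutkosky graph supported on $F$. These entries populate the same triangular matrix as $M^G$, so the equality of coactions follows from the fact that both are the column-restriction $\rho_\Delta$ of the same incidence coproduct, applied to the entry $G$ in the leftmost column.

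The $n$-banana case follows by swapping the roles of shrinking and deletion: the cubes are one-dimensional, spanning trees are single edges, and the fusion construction again produces the incidence matrix of $\mathcal{B}(E_G)$, which is self-dual. I expect the main obstacle to be the multiplicity bookkeeping in the first step: one must check that the $n!$ orderings contribute exactly one copy of each entry of $M^G$, and that the diagonal group-like entries (roses at the origin, leading-singular graphs at the opposite corner) match on both sides so that no spurious tensor terms arise. Once this is in place, applying Eq.~(\ref{spacelikeFR}) to convert Cutkosky entries to their $\vec{\Phi}_R$ values identifies the two coactions as restrictions of a common triangular-matrix coproduct.
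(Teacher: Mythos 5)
Your strategy---fuse the $e_G\times(e_G-1)!=e_G!$ ordered-spanning-tree matrices into one matrix $M^G$ indexed by edge subsets and match it against Schmitt's incidence coproduct on the Boolean lattice of $E_G$---is the route the paper takes (it carries this out explicitly for the triangle, where the six $M^{(j)}$ fuse into the $8\times 8$ matrix $M$ whose first column and last row run over the $2^3$ subsets of edges). But the reason you give for why the relevant poset is the full Boolean lattice is not the operative one. That ``$|\mathcal{T}(G)|=e_G$ forces $F\mapsto G/F$ to be injective on $2^{E_G}$'' is true but inert: injectivity of contraction holds for essentially any labelled graph and does not separate simple from non-simple graphs. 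What actually matters is that for a one-loop graph (and dually for a banana) \emph{every} proper subset of $E_G$ is a forest, equivalently every $(e_G-1)$-subset is a spanning tree, so every maximal chain $\emptyset\subsetneq F_1\subsetneq\cdots\subsetneq E_G$ is realized by some ordered spanning tree and the cubes $Q_j$ tile $C(G)$. This is precisely what fails for the Dunce's cap, cf.\ Fig.(\ref{duncetreeillegal}): the green--red pair is a $2$-subset that is not a spanning tree, so an entire interval of the Boolean lattice has no cube behind it, which is the content of the very next corollary. Without this observation your identification of the left-hand side with the Boolean incidence matrix is unsupported.

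Second, the step you defer---``one must check that the $e_G!$ orderings contribute exactly one copy of each entry of $M^G$''---is where the corollary actually lives, and as phrased it is false: an entry of the fused matrix is not a single copy but a \emph{sum} of sector-restricted evaluations coming from the several ordered trees whose chains pass through that subset (for the triangle, the fully cut two-edge reduced bubble in the first column of $M$ is the sum of its two sector contributions, one from $M^{(1)}$ and one from $M^{(6)}$; this is the $g+h$ corner entry in the two-matrix fusion example of Section~2). The correct statement is that for each subset $X$ the ordered spanning trees $T_{\mathfrak{o}}$ with $E_G\setminus X\subseteq E_T$ partition the integration domain into disjoint sectors, so that by Eq.(\ref{spacelikeFR}) and the sector-counting lemma of Appendix~3 ($e_G!=\sum_{T,\mathfrak{o}}n(T_{\mathfrak{o}})$, with equality of $e_G!/(e_T!\,|G|!)$ and $spt(G)$ exactly in the simple case) these pieces reassemble into the single term $(G_X,\emptyset)\otimes(G,X)$ of $\Delta_{\mathsf{Inc}}$. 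You name the right ingredient in your last sentence but do not execute it, so what you have is a correct plan whose decisive combinatorial step is still open.
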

Here,\[
\Delta_{\mathsf{Inc}}(G)=\Delta_{\mathsf{Inc}}(G,\emptyset)=
\sum_{\emptyset\leq X\leq E_G\atop X\not= \emptyset}(G_X,\emptyset)\otimes (G,X),
\]
in their notation. $G_X$ has all edges $e$ contracted, $e\not\in X$.
In fact one-loop graphs evaluate to dilogs \cite{BlKr} and hence provide the first examples to pull back the coaction from such functions to graphs.

There are non simple graphs in dedicated kinematics (massless internal edges, lightlike external momenta) where $\Delta_{\mathsf{Inc}}$ agrees
with $\Delta^G$ as well, but not in a generic situation:
\begin{cor}
The first non simple graph is the Dunce's cap graph $dc$ with two loops and three vertices.
It has four edges and five spanning trees.
$\Delta_{\mathsf{Inc}}(dc)\not=\Delta^{dc}(dc)$.
\end{cor}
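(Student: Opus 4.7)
I would first confirm the structural facts for $dc$. Its three vertices $\{A,B,C\}$ and four edges --- two parallel edges $e_1,e_2$ between $A$ and $B$, plus $e_3=(B,C)$ and $e_4=(C,A)$ --- give loop number $L=2$. Spanning trees pick two edges reaching all three vertices, so direct enumeration yields $\{e_1,e_3\}, \{e_1,e_4\}, \{e_2,e_3\}, \{e_2,e_4\}, \{e_3,e_4\}$, while the pair $\{e_1,e_2\}$ is the unique 2-subset failing to reach $C$. Thus $|\mathcal{T}(dc)|=5\neq 4=e_{dc}$ and $dc$ is not simple.

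It suffices to exhibit one edge subset $X$ whose $\Delta_{\mathsf{Inc}}$-summand has no match in $\Delta^{dc}(dc)$. Take $X=\{e_3,e_4\}$: contracting $e_1$ merges $A,B$ and contracting the resulting self-loop $e_2$ removes it, so $dc_X=b_2$ is the two-edge bubble on $e_3,e_4$, and $\Delta_{\mathsf{Inc}}(dc)$ contains the summand
\[
(b_2,\emptyset)\otimes(dc,\{e_3,e_4\}).
\]

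By the combined Hodge matrix construction of Section 3, the right-hand tensor factors in $\Delta^{dc}(dc)$ are the decorations of $dc$ occurring in the bottom row of $M^{dc}$; these arise from spine paths through the cubes $(dc,T_{\mathfrak{o}})$ for the five spanning trees and their orderings. In any such spine path the two loop edges are un-cut \emph{before} the tree edges, so the unique size-two cut configuration along the path is the chosen spanning tree $T$ itself. As $T$ ranges over $\mathcal{T}(dc)$ the size-two cut subsets occurring in $M^{dc}$ are exactly the five spanning trees, and the non-tree pair $\{e_1,e_2\}$ never appears. Dualising, the forest $E_{dc}\setminus\{e_1,e_2\}=\{e_3,e_4\}$ never appears as the right-hand entry of a tensor in $\Delta^{dc}(dc)$; the summand of $\Delta_{\mathsf{Inc}}(dc)$ exhibited above is therefore absent from $\Delta^{dc}(dc)$, and the two coactions disagree.

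The principal obstacle is rigorously justifying the ordering claim that in every spine path the loop edges are un-cut before the tree edges, whence the size-two cut configuration is always $T$. This is dictated by the cubical chain complex: the cube $Q(dc,T)$ is indexed only by the tree edges, while the loop edges accompany the leading singular state and are released en bloc in the first column transition of the Hodge matrix. A finite case check across the five spanning trees then suffices, exactly parallel to the triangle computation of the preceding subsection. The argument also illuminates the preceding corollary: for simple graphs every 2-subset of edges is a spanning tree (banana graphs) or the complement of one (one-loop graphs), so the size-two cut subsets exhaust all 2-subsets and no forest is missed; the Dunce's cap is the smallest graph where this exhaustion fails.
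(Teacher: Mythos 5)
Your structural analysis of $dc$ (three vertices, the doubled edge $e_1,e_2$, five spanning trees, the unique failing pair $\{e_1,e_2\}$) is correct, and the term you single out --- the summand of $\Delta_{\mathsf{Inc}}(dc)$ indexed by $X=\{e_3,e_4\}$, whose right-hand factor has $e_3,e_4$ cut and the parallel pair off shell --- is exactly the configuration the paper isolates in Fig.~\ref{duncetreeillegal}: the pair $\{e_1,e_2\}$ cannot play the role of the spanning tree that the cubical construction requires. So your approach is the paper's.

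However, the step that is supposed to certify absence is internally inconsistent. In the matrices $M(dc,T_{\mathfrak{o}})$ the forest (off-shell) part of every bottom-row entry is a subset of the chosen spanning tree $T$, so the loop edges $E_{dc}\setminus T$ are cut in \emph{every} right-hand factor, and the unique size-two cut configuration along a path is the complement $E_{dc}\setminus T$, not $T$ itself. Your assertion that ``the size-two cut subsets occurring in $M^{dc}$ are exactly the five spanning trees'' is therefore false: the set that actually occurs is $\{E_{dc}\setminus T:\,T\in\mathcal{T}(dc)\}$, which \emph{contains} the non-tree pair $\{e_1,e_2\}$ (arising from $T=\{e_3,e_4\}$) and \emph{omits} the tree $\{e_3,e_4\}$. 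The subsequent ``dualising'' step then contradicts your own premise outright --- if the size-two cut sets were the spanning trees, then $\{e_3,e_4\}$, being a spanning tree, would appear as one. Your conclusion survives only because the two complementation errors cancel: the correct statement is that a 2-subset $Y$ occurs as a cut set iff $E_{dc}\setminus Y$ is a spanning tree, and $\{e_3,e_4\}$ is the unique 2-subset failing this test, precisely because $\{e_1,e_2\}$ is not a spanning tree. Replacing your ordering claim by this one (the uncut edges of every right-hand factor form a sub-forest of the chosen spanning tree) closes the gap; it would also repair the closing remark about simple graphs, which as literally stated (every 2-subset is a spanning tree or the complement of one) fails already for one-loop graphs on more than three edges.
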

Similar for all other non simple graphs in generic kinematics.
\subsection{The Dunce's cap}
\begin{ex}
Let us work out the Dunce's cap. We start with Fig.(\ref{TetraDunce}).
\begin{figure}[H]
\includegraphics[width=12cm]{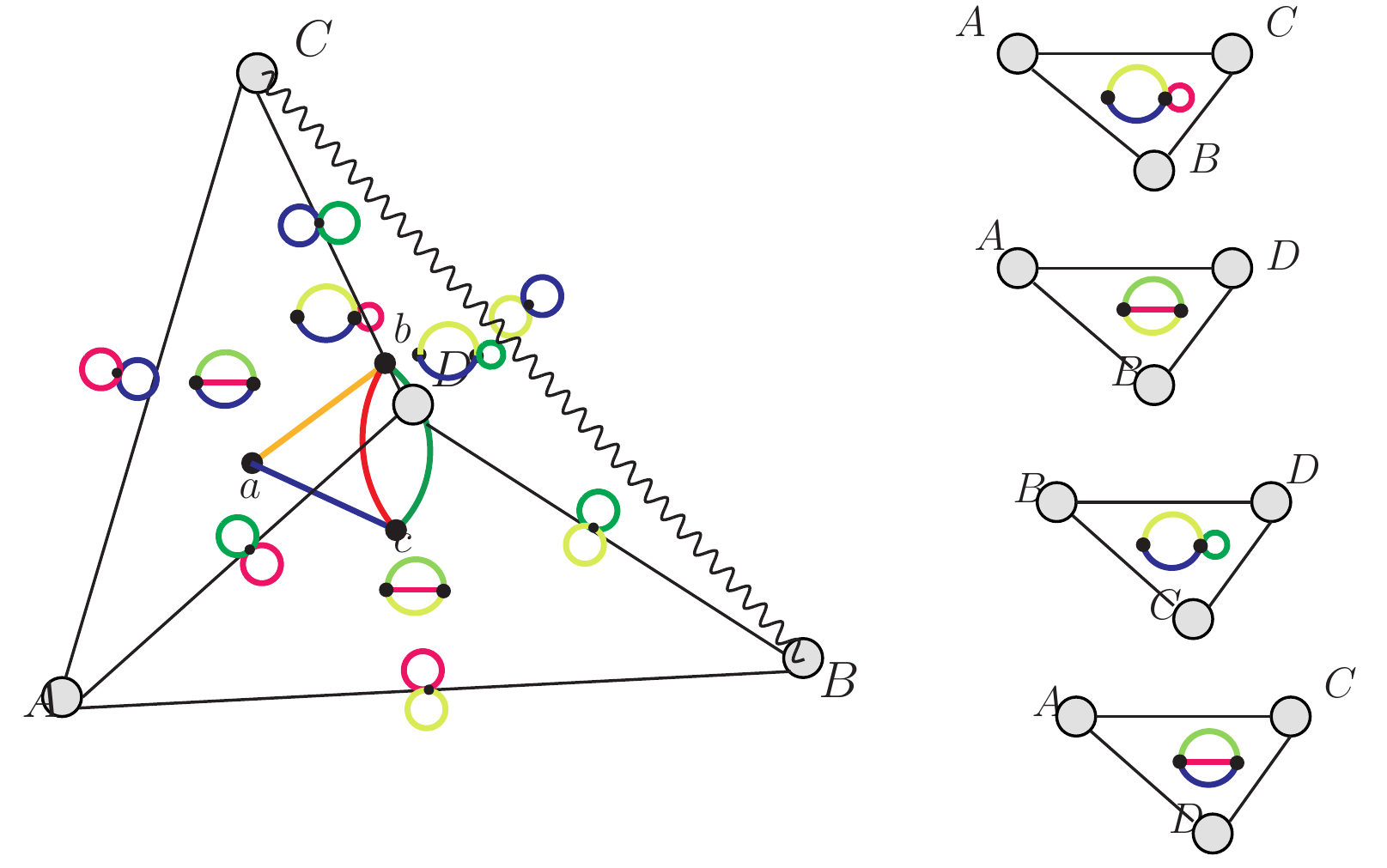}
\caption{The Dunce's cap $dc$ and its cell $C(dc)$, a tetrahedron. We also indicate the four triangular cells which are its co-dimension one hypersurfaces. It is a graph on four edges, its cell in OS is thus the three-dimensional tetrahedron $C(dc)$. Its spine gives rise to five two-dimensional cubes $Q_i$ which can not provide a triangulation of $C(dc)$. Instead $C(dc)$ gives rise to a fibration of the cubes 
$Q_i$. The spine is a union of ten paths. Six of them give rise to two sectors, and four of them to three sectors, adding up to the $24$ sectors in $C(dc)$. Renormalization makes the extra sector in the latter four paths well-defined.}
\label{TetraDunce}
\end{figure}
There are $6=(\genfrac{}{}{0pt}{}{4}{2})$ choices 
for two out of four edges. One of these does not form a possible basis for two loops in the graph, the other five choices determine the five spanning trees of the graph
as in Fig.(\ref{duncetrees}). Correspondingly the co-dimension two edge $BC$ is not part of the cell of the Dunce's cap, nor are the four corners.
\begin{figure}[H]
\includegraphics[width=10cm]{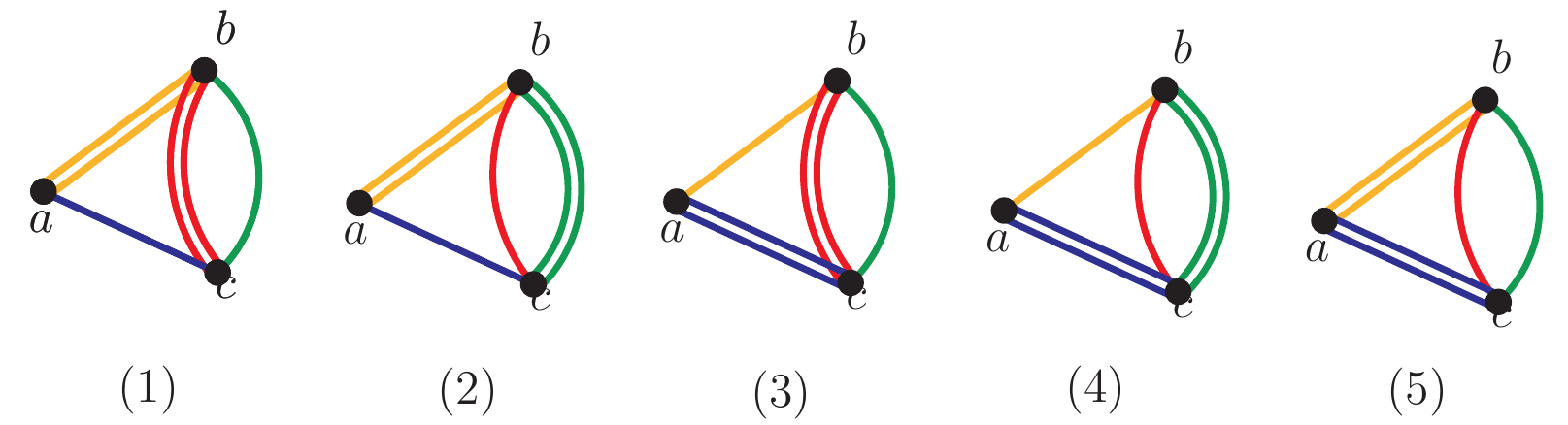}
\caption{The five spanning trees of the Dunce's cap $dc$. They give rise to five cubes $Q_i$ and ten matrices $M(dc,T_{\mathfrak{o}})$ . Spanning trees are on two edges so we get two possible orders and hence ten matrices. Integrating the energy variables indeed generates residues $\sum_T \vec{\Phi}_R(G_T)$ for those trees.}
\label{duncetrees}
\end{figure}
We give one cube as an example in Fig.(\ref{dccube}).
\begin{figure}[H]
\includegraphics[width=10cm]{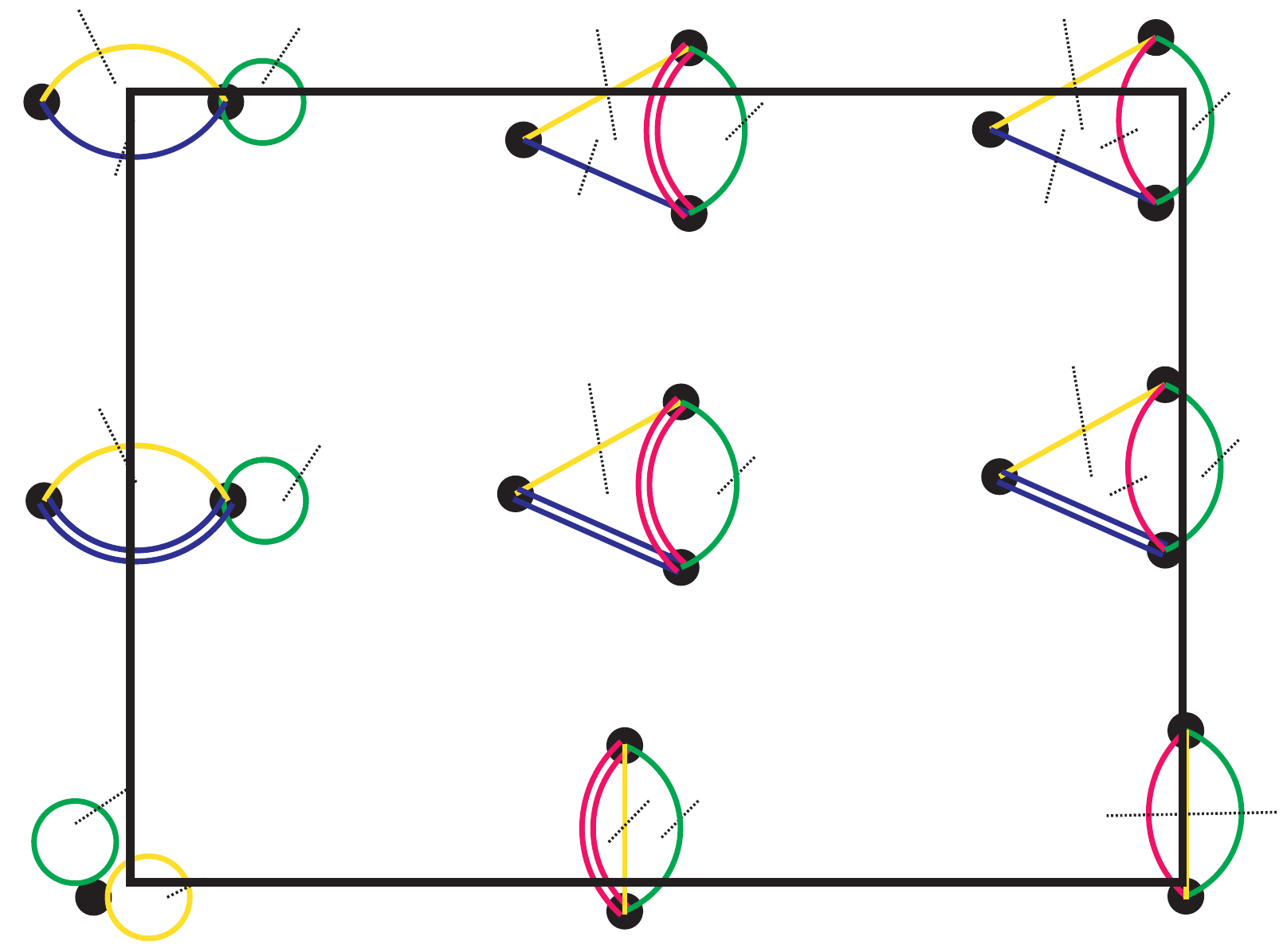}
\caption{A cube $Q$ for the graph $dc$. It gives rise to two matrices $M^{(j)}$.
Note that it contains the six entries of $M^{(3)}$. Note that in all nine entries of the cube graphs are evaluated at $A_gm_g=A_ym_y$ and the cube  describes a codimension 1 surface of $\mathbb{P}^3=\mathbb{P}_{dc}$. The one-dimensional fibre which has the cube as base is given by the variable $A_y/A_g$. }
\label{dccube}
\end{figure}

\begin{figure}[H]
\includegraphics[width=2cm]{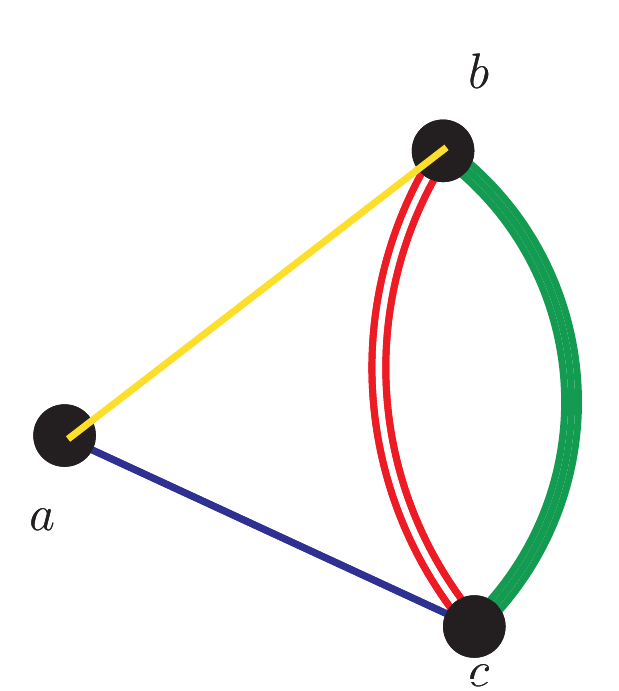}
\caption{This is illegal. The green and red edge do not form a spanning tree. Correspondingly there is neither matrix nor residue assigned to this configuration and hence for the nonsimple Dunce's cap graph the coaction $\Delta_{\mathsf{Inc}}$  of \cite{Brittoetal} (which includes this graph) deviates from the structure of a cubical complex.}
\label{duncetreeillegal}
\end{figure}
\end{ex}
For example for the spanning tree $T_3$ with order blue before red so that we shrink red first we find the matrix $M^{(3)}$ given in Fig.(\ref{dccoex}).
\begin{figure}[H]
\includegraphics[width=11cm]{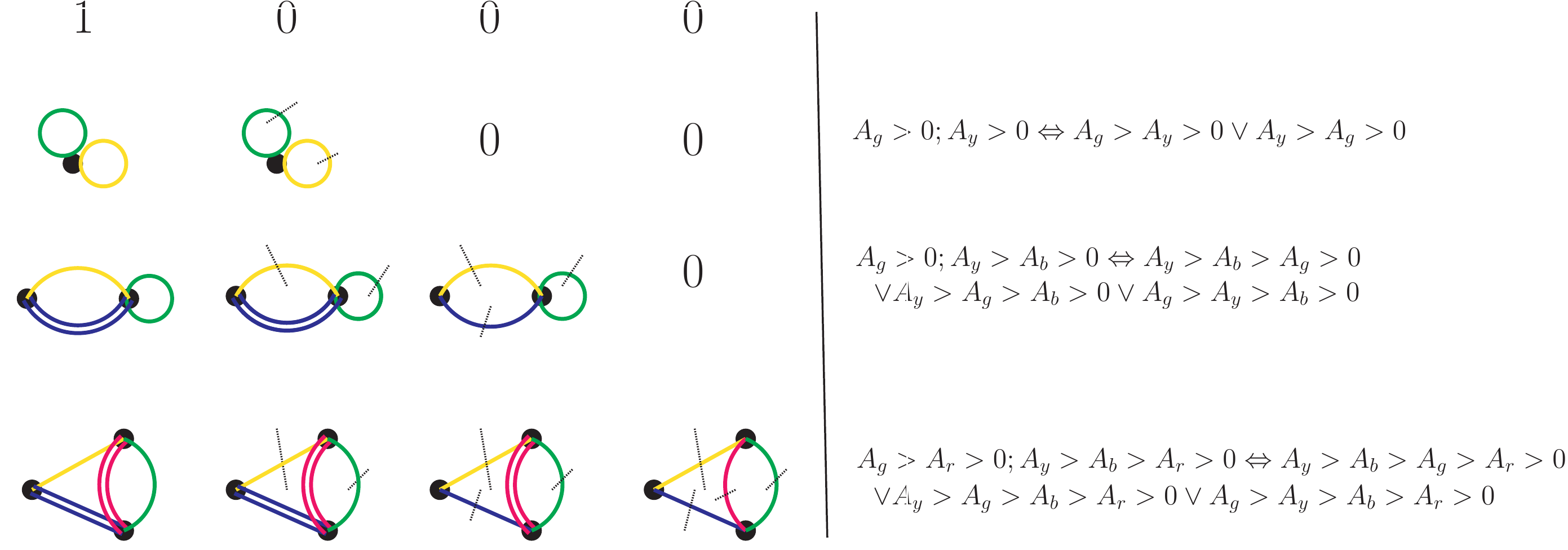}
\caption{The matrix $M^{(3)}$ which we had before. We have obviously four such matrices giving three sectors each.}
\label{dccoex}
\end{figure}
Applying $\Delta^M$ is in Fig.(\ref{dccoact}).
\begin{figure}[H]
\includegraphics[width=12cm]{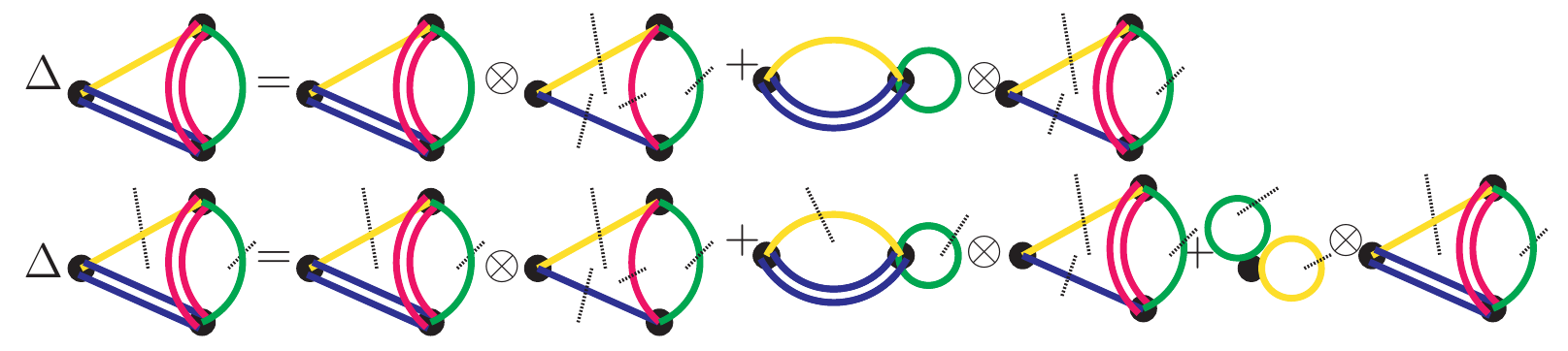}
\caption{In the upper row $\Delta^M$ acts as a coaction, in the lower as a coproduct.}
\label{dccoact}
\end{figure}
If we change the order to red before blue we get a different matrix:
\begin{figure}[H]
\includegraphics[width=11cm]{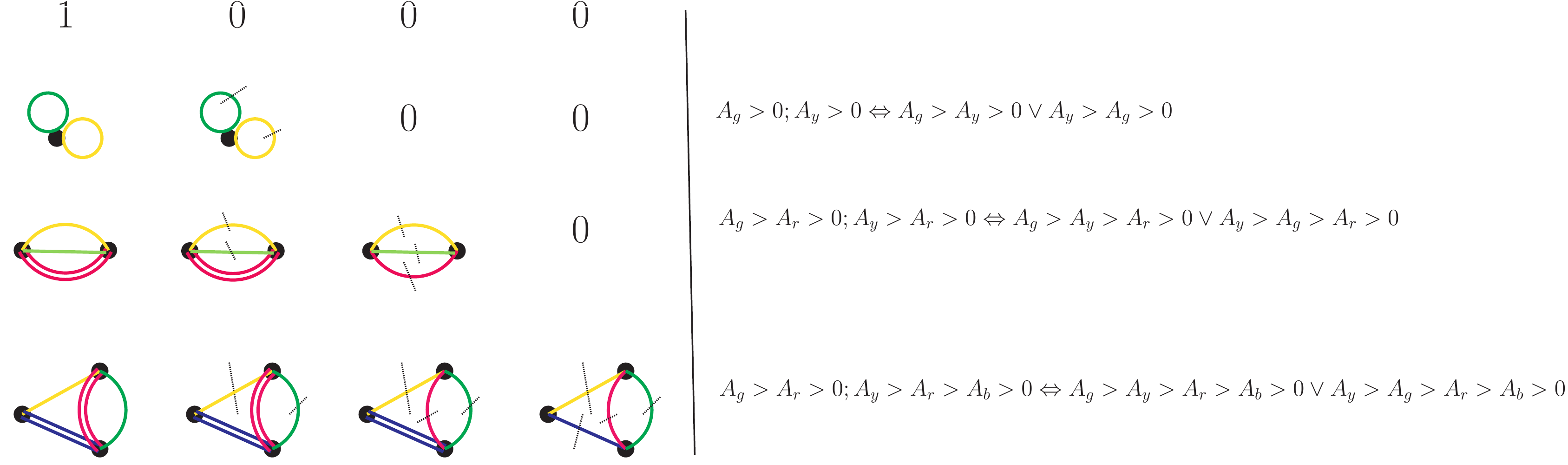}
\caption{The matrix $M^{(2)}$. We also give the sectors to which its entries contribute of which there are two and we have four such matrices.}
\label{dccoextwo}
\end{figure}
Finally the case of a spanning tree on the yellow and blue edge, with order yellow before blue:
\begin{figure}[H]
\includegraphics[width=11cm]{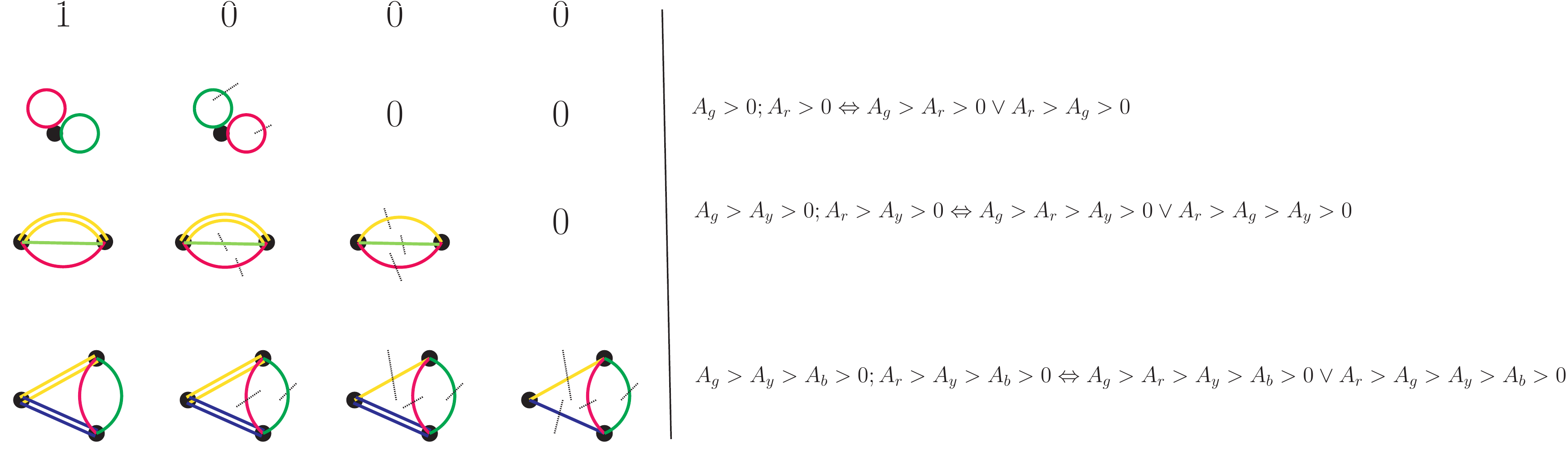}
\caption{The matrix $M^{(9)}$. We also give the sectors to which its entries contribute of which there are two and we have two such matrices from the two possible orders.}
\label{dccoexonine}
\end{figure}
Next we can get rid of dangling tadpole graphs using for example $M^{(3)}_{\mathsf{D}}$ in $M^{(3)}$ using the matrix of Fig.(\ref{dccoexD}):
\begin{figure}[H]
\includegraphics[width=4cm]{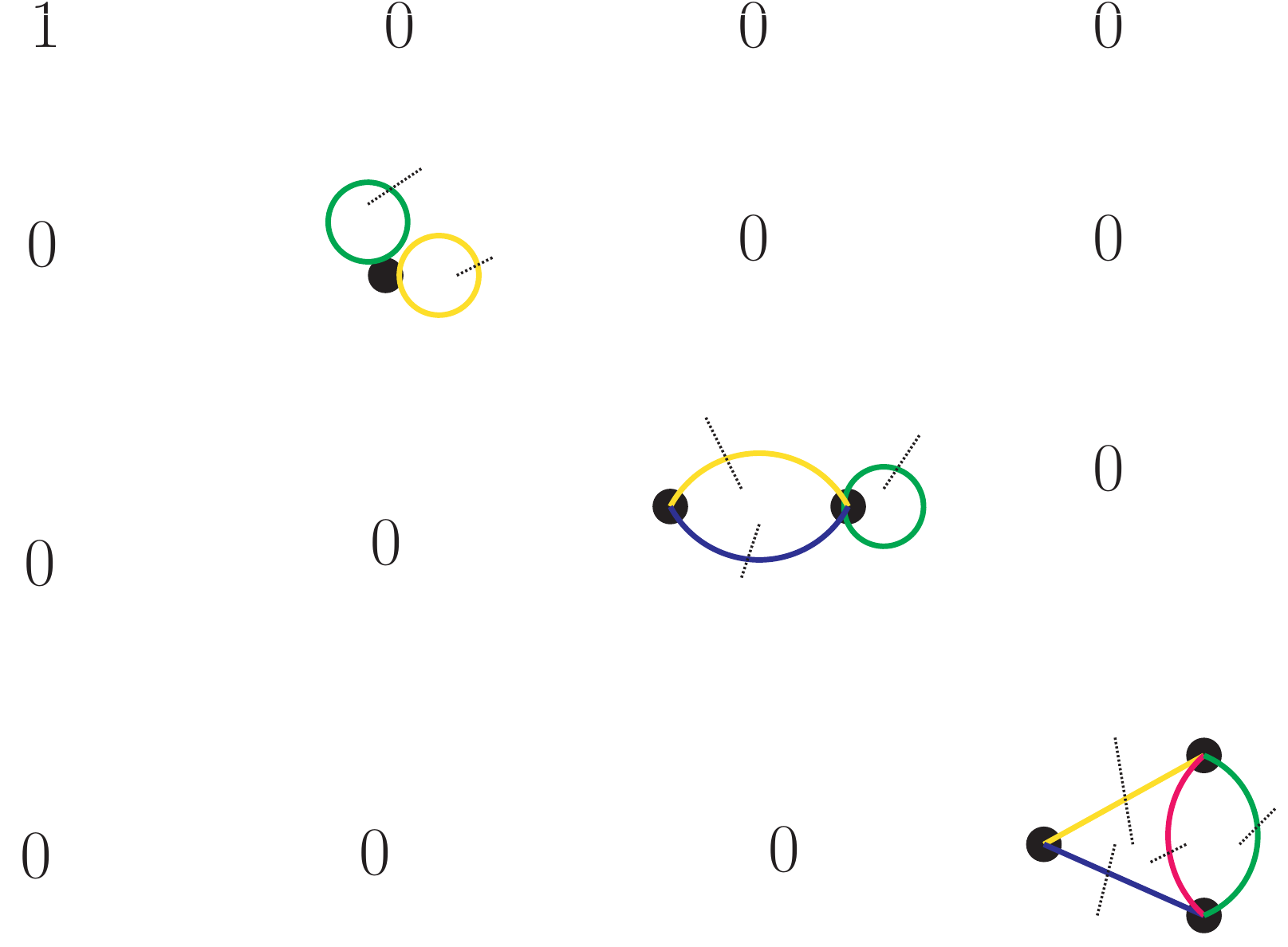}
\caption{The matrix $M^{(3)}_{\mathsf{D}}$. Multiplying from the left with its inverse unifies the diagonal and eliminates all tadpoles due to Eq.(\ref{tadpolevanish}).}
\label{dccoexD}
\end{figure}
and also use the matrix Fig.(\ref{dccoexDp}):
\begin{figure}[H]
\includegraphics[width=4cm]{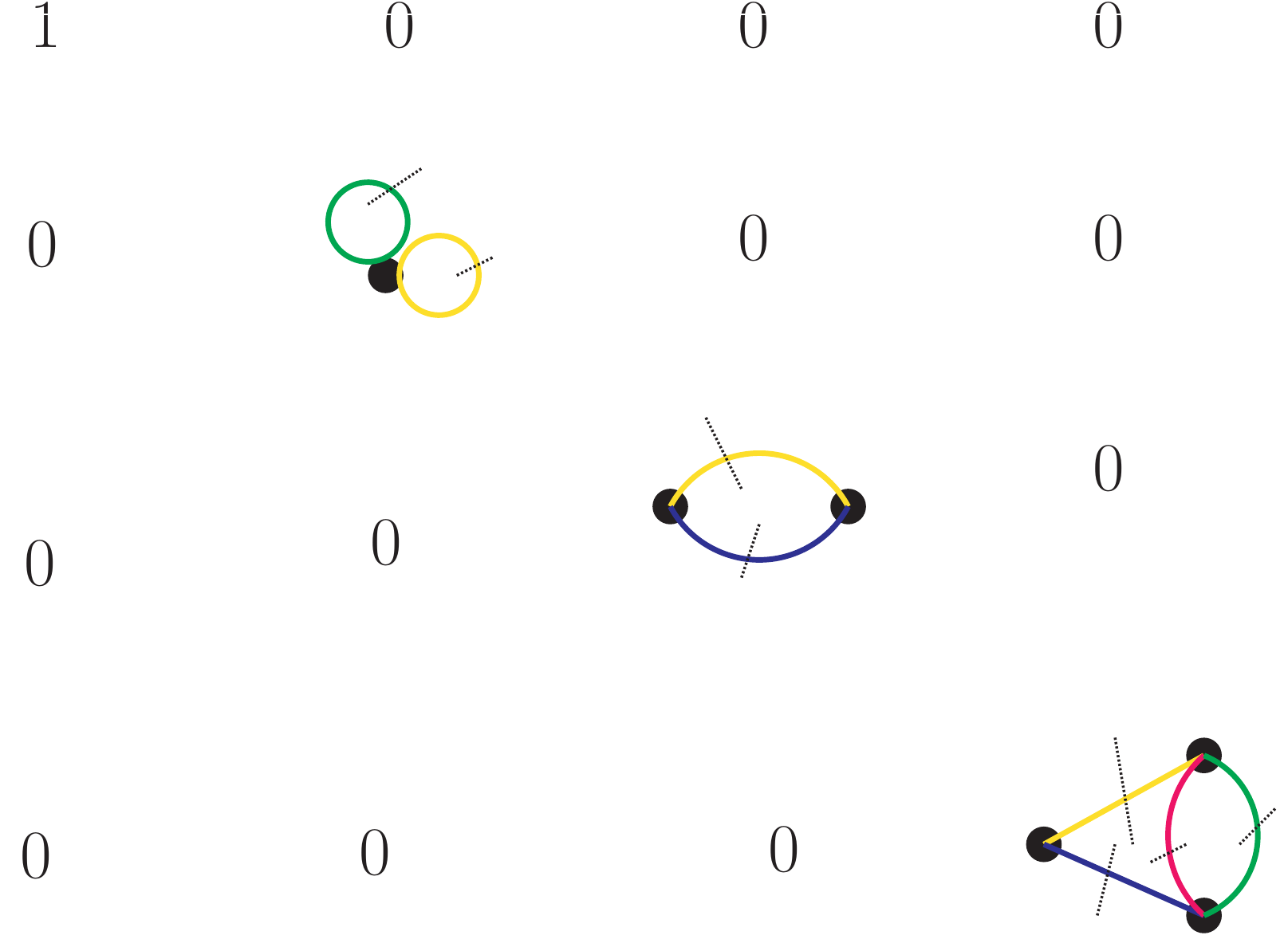}
\caption{The matrix $M^{(3)}_{\mathsf{D}^\prime}$. Multiplying it from the right reinserts all diagonal entries apart from tadpoles.}
\label{dccoexDp}
\end{figure}
We construct $\tilde{M}^{(3)}=M^{(3)}_{\mathsf{D}^\prime}\times
(M^{(3)}_{\mathsf{D}})^{-1}\times M^{(3)}$.
We now sum over orders and spanning trees for all $\tilde{M}^{(i)}$, and use hence kinematic renormalization schemes for which we have
\be\label{tadpolevanish}
\Phi_R\left(\tadpolegg\right)=\Phi_R\left(\tadpoleg\right)=0.
\ee
This then allows to eliminate the leftmost column and topmost row from the coaction matrices and allows to sum over spanning trees so that we can formulate the coaction on Cutkosky graphs.

Again we find a matrix $N=N(dc)$  which defines a coaction which only involves Cutkosky graphs as in Fig.(\ref{dccoN}):
\begin{figure}[H]
\includegraphics[width=8cm]{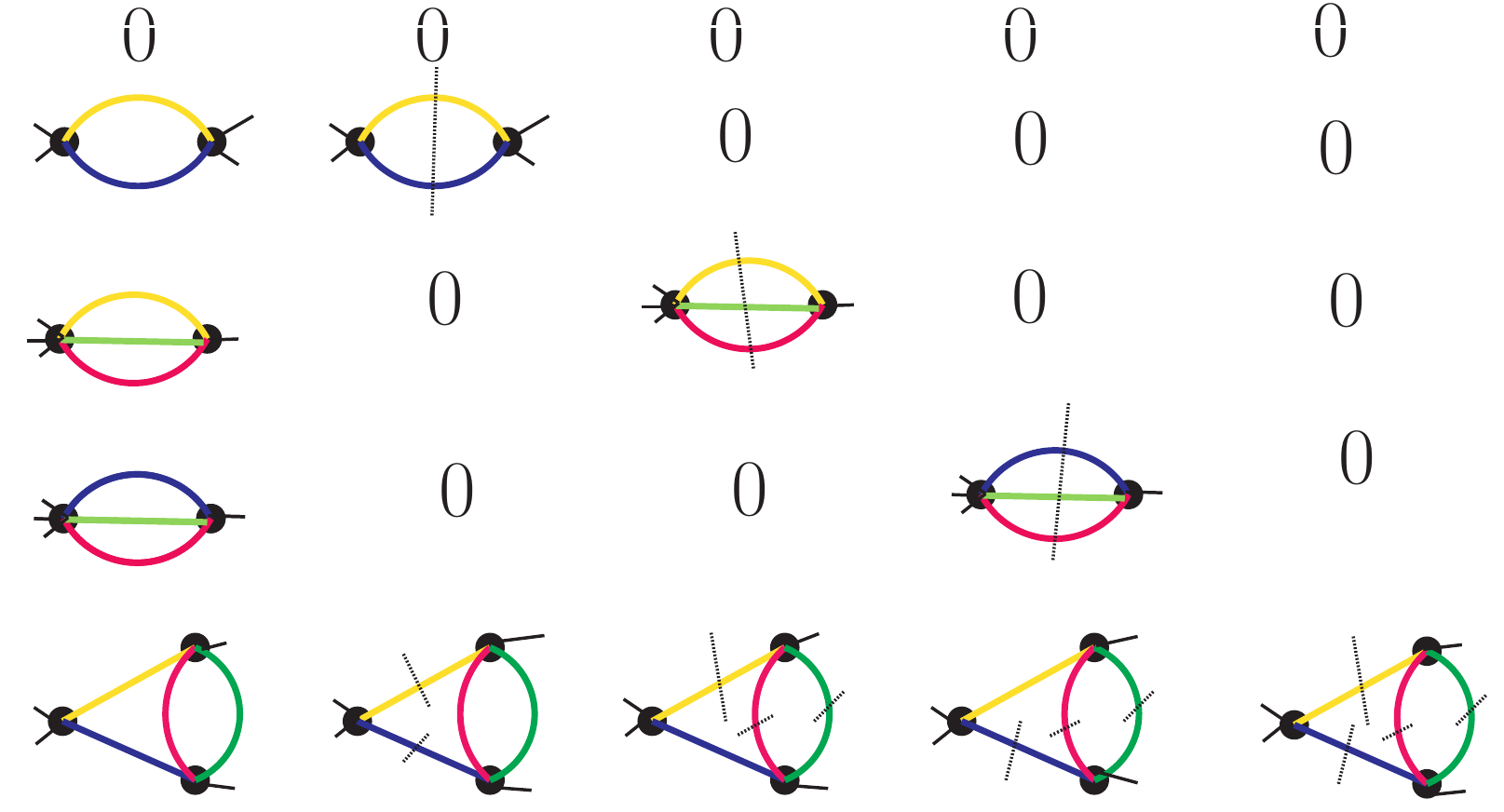}
\caption{The matrix $N(dc)$. The second entry in the lowest row is a shorthand given in Fig.(\ref{shorthand}).}
\label{dccoN}
\end{figure}
\begin{figure}[H]
\includegraphics[width=8cm]{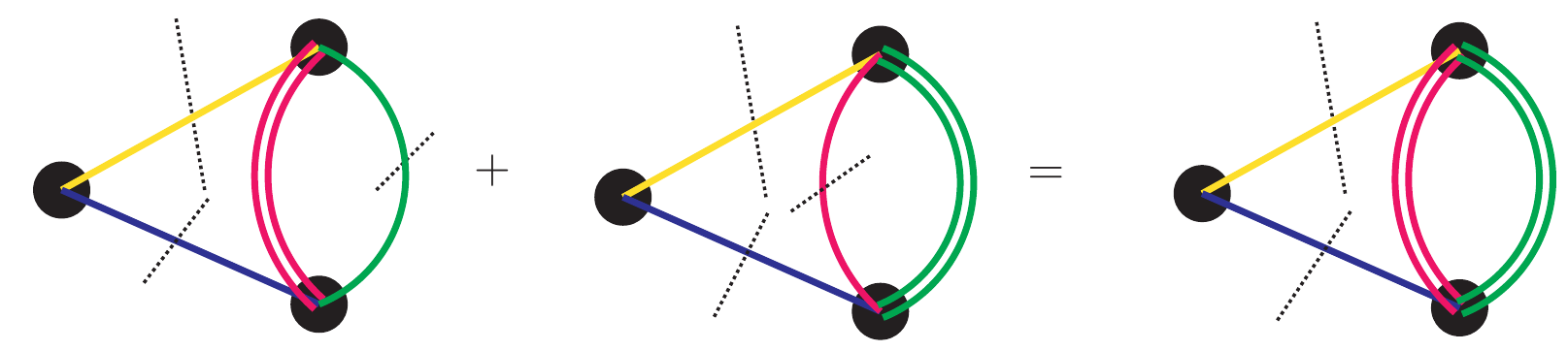}
\caption{Integrating the subloop sums two residues by putting either the red or green edge on shell. We can combine this into one entry in the matrix $N(dc)$ thanks to the fact that tadpoles vanish.}
\label{shorthand}
\end{figure}

\begin{rem} Deformed coactions. Pulling back the known coaction
of (elliptic) polylogs to a graphical coaction Britto et.al.\ find the need to deform their coaction in a systematic way using the parity of the number of edges. We can incorporate this in $\Delta^G$ in a similar fashion but attempt at an approach using the $\mathbb{Z}_2$ grading of graph homology in future work.
\end{rem}
\begin{rem} First entry condition. Steinman relations.
Note that to any entry $M_{j,2}$ belongs a 2-partition
$V_G=V_G^{(1)}\amalg V_G^{(2)}$. This defines a variable $s=(\sum_{v\in V_G^{(1)}}q(v))^2$. The Matrix $M$ then describes the monodromy 
of functions $\Phi_R(M_{j,1})$ in the leftmost column  through the entries
$M_{j,2}$ in the next column when varying this variable $s$.
$M_{j,2}$ are by construction the first entries which have a non-trivial cut
each originating from a distinct non-overlapping  sector.

One interpretation of the Steinmann relation is that two different 2-partitions which define two different variables $s,t$ indeed  do not interfere.
The monodromy in a chosen variable $s$ is solely determined by subdividing the associated 2-partition further.  
\end{rem} 
\section{Conclusions}
\begin{itemize}
\item The cell decomposition of OS together with the orresponding spine
provide a cubical complex for Feynman graphs organized by spanning trees.
\item Boundaries correspond to either reduced or Cutkosky-cut graphs.
\item Each cube has an accompanying simplex decomposition giving Hodge matrices according to a chosen order of edges in a spanning tree.
\item Each Hodge matrix defines a coaction.
\item Summing over trees and orders defines a coproduct and graphical coaction $\Delta^G$  for any Feynman graph $G$.
\item Only in simple cases it agrees with $\Delta_{\mathsf{Inc}}$.
For generic kinematics $\Delta_{\mathsf{Inc}}$ is maximally wrong.
\item The use of dimensional regularization is neither necessary nor sufficient to find a valid graphical coaction.
\item Task: interprete $\Delta^G$ in terms of Brown's approach \cite{BrownI,BrownII}  in particular on the possibly not so mysterious rhs (the 'de Rham side').
\item Question: What is Brown's  small graphs principle making out of the simplifications in kinematic renormalization?
\item This so far is a story on principal sheets and variations in the real domain. For a complete understanding in algebraic geometry one must make room for complex variations of masses and kinematics. Need to take into account finer structure of OS. Whilst here we worked with the spine of OS, 
one needs to consider markings and bordification of OS itself.
\end{itemize}

\begin{acknowledgement}
It is a pleasure to thank Marko Berghoff, Spencer Bloch, Karen Vogtmann and Karen Yeats for helpful advice. I thank Spencer in particular for numerous discussions on the 
matrices investigated here.  I am grateful to Johannes Bl\"umlein and all the organizers
for their efforts.
\end{acknowledgement}
\section*{Appendix 1: The cubical chain complex}\label{ccc}
\addcontentsline{toc}{section}{Appendix 1}
We assume the reader is familiar with the notion of a graph and of spanning trees and forests. See \cite{MarkoDirk} where these notions are reviewed. We follow the notation there. In particular $|G|$ is the number of independent cycles of $G$, $e_G$ the number of internal edges and $v_G$ the number of vertices of $G$. For a pair of a graph and a spanning forest we write 
$(G,F)$ or $G_F$. If a spanning forest has $k$ connected components we call it a $k$-forest. A spanning tree $T$ is a 1-forest. Its number of edges hence $e_T$. $\mathcal{T}(G)$ is the set of spanning trees of $G$.

Pairs $(G,F)$ are elements of a Hopf algebra $H_{GF}$ based on the core Hopf algebra $H_{core}$ of bridgefree graphs \cite{MarkoDirk}. 

As an algebra $H_{GF}$  is the free commutative $\mathbf{Q}$-algebra generated by such pairs. Product is disjoint union and the empty graph and empty tree provide
the unit. 

A $k$-cube is a $k$-dimensional cube $[0,1]^k\subsetneq \mathbb{R}^k$.

Consider $G_T$. We define a cube complex for $e_T$-cubes $\mathsf{Cub}_G^T$ 
assigned to $G$. There are $e_T!$ orderings $\mathfrak{o}=\mathfrak{o}(T)$
which we can assign to the internal  edges of $T$.

We define a boundary for any elements $G_F\equiv (G,F)$ of $H_{GF}$. For this consider such an ordering 
\[
\mathfrak{o}:E_F\to
[1,\ldots,e_F]
\]
of the $e_F$ edges of $F$.
There might be other labels assigned to the edges of $G$
and we assume that removing an edge or shrinking an edge will
not alter the labels of the remaining edges. In fact the whole Hopf algebra structure of $H_{core}$ and $H_{GF}$
is preserved for arbitrarily labeled graphs \cite{Turaev}.  

The (cubical) boundary map $d$ is defined by $d:=d_0+d_1$
where
\begin{equation}\label{eq:cubediff}
d_0(G_F^{\mathfrak{o}(F)}):= \sum_{j=1}^{e_F} (-1)^j(G_{F \setdiff e_j}^{\mathfrak{o}(F\setdiff e_j)}), \quad d_1 (G_F^{\mathfrak{o}(F)}) := \sum_{j=1}^{e_F} ({G/e_j}^{\mathfrak{o}(F/e_j)}_{F/e_j}).
\end{equation} 
We understand that all edges $e_k,k\gneq j$ on the right are
relabeled by $e_k\to e_{k-1}$ which defines the corresponding $\mathfrak{o}(T/e_j)$ or $\mathfrak{o}(T \setdiff e_j)$.
Similar if $T$ is replaced by $F$.

From \cite{rational} we know that $d$ is a boundary:
\begin{thm}\cite{rational}
\[
d\circ d=0,\,d_0\circ d_0=0,\, d_1\circ d_1=0.
\]
\end{thm}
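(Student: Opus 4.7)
The plan is a direct cubical-chain-complex computation. First I would record three elementary commutation lemmas for edge operations: for any two distinct edges $e, f$ of a spanning forest $F$ of $G$, (i) $(G\setdiff e)\setdiff f = (G\setdiff f)\setdiff e$ with induced forest $F\setdiff\{e,f\}$; (ii) $(G/e)/f = (G/f)/e$ with induced forest $F/\{e,f\}$; and (iii) $(G\setdiff f)/e = (G/e)\setdiff f$. Each is standard for graphs; crucially, the induced orderings on the remaining edges coincide on both sides of every such identity, because the paper's convention is that a deletion or contraction only removes one label and shifts the higher labels down by one, independent of whether the removed edge was deleted or contracted.

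Next, for $d_0\circ d_0$ I would expand on a generator $G_F^{\mathfrak{o}}$ as a double sum indexed by ordered pairs of distinct positions in $\mathfrak{o}$ and pair each summand with its transpose. Fix two edges at positions $a<b$ in $\mathfrak{o}$. Removing the $b$-th edge first contributes $(-1)^b$, and then the $a$-th edge (still at position $a$ in the reduced ordering) contributes $(-1)^a$, for total sign $(-1)^{a+b}$. In the other order, removing the $a$-th edge first contributes $(-1)^a$, while the edge formerly at position $b$ now sits at position $b-1$ and contributes $(-1)^{b-1}$, for total $(-1)^{a+b-1}$. By commutation lemma (i) the graph-forest pairs agree, so the two terms cancel. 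The same bookkeeping handles $d_1\circ d_1$ (using lemma (ii)) with the sign convention of the cubical complex as adopted in \cite{rational}.

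For $d\circ d=0$ it remains to show $d_0 d_1 + d_1 d_0 = 0$ on every generator. I would match the term "contract the $j$-th edge, then delete the edge at position $i$ of the new ordering" with "delete the $i$-th edge, then contract the edge at position $j$ of the new ordering"; lemma (iii) ensures that the resulting pairs $(G',F')$ coincide, and an identical sign argument shows that the two routes around a mixed square-face of the cube differ by a sign, so the sum collapses in pairs.

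The main obstacle is purely combinatorial: precisely tracking how the ordering $\mathfrak{o}$ is rewritten after a single face operation, so that the relative sign between the two routes around a square-face of the $e_F$-cube is reliably $-1$. The geometric content — that the boundary of a boundary of a cube vanishes — is classical; the work of the theorem is simply to verify that the graph-level definitions of $d_0$ and $d_1$ match the cubical sign rule. Once the three commutation lemmas and the sign convention are in place, each of the three identities follows from a one-line pairing argument.
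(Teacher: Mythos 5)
The paper offers no proof of this theorem; it is imported verbatim from Hatcher--Vogtmann \cite{rational}. There is therefore nothing in the text to compare against, and your direct verification --- elementary commutation of the face operations plus the transposition sign-pairing --- is the standard argument and is structurally the right one. However, the one point you defer to ``the sign convention of the cubical complex as adopted in \cite{rational}'' is not a formality: as displayed in Eq.~(\ref{eq:cubediff}), $d_1$ carries \emph{no} alternating sign, only $d_0$ does. With that literal definition the middle identity is false: for two forest edges at positions $a<b$ in $\mathfrak{o}$, the routes ``contract position $b$ then position $a$'' and ``contract position $a$ then position $b$'' land on the same pair $(G/\{e_a,e_b\},\,F/\{e_a,e_b\})$ with the same induced order, and both carry coefficient $+1$, so $d_1\circ d_1$ is twice the sum over unordered pairs rather than zero. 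The mixed terms $d_0d_1+d_1d_0$ likewise fail to cancel whenever the forest-deleted edge precedes the contracted edge in $\mathfrak{o}$ (both routes then contribute $(-1)^i$ with the same sign). Your pairing argument only closes once $d_1$ is rewritten with the same $(-1)^j$ as $d_0$ (an overall relative sign between $d_0$ and $d_1$ is immaterial, since the mixed terms then cancel among themselves). You should state this corrected $d_1$ explicitly at the outset instead of leaving it inside a citation, because it is the only nontrivial input to the entire theorem.

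A second, minor correction to your commutation lemmas: $d_0$ does not delete $e_j$ from the graph; it removes it from the spanning forest only, the image being the pair $(G,F\setdiff e_j)$ with $G$ unchanged. So your lemma (i) should read $(F\setdiff e)\setdiff f=(F\setdiff f)\setdiff e$ (trivial), and lemma (iii) should read $(F\setdiff f)/e=(F/e)\setdiff f$ inside the ambient graph $G/e$. Your sign bookkeeping --- in particular the observation that the relabelling $e_k\to e_{k-1}$ for $k>j$ is the same whether $e_j$ was contracted or dropped from the forest, so the two routes around any square face differ by exactly one transposition-induced sign --- is correct as written.
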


Starting from $G_T$ for any chosen $T\in\mathcal{T}(G)$
each chosen order $\mathfrak{o}$ defines one of $e_T!$ simplices of a $e_T$-cube $\mathsf{Cub}_G^T$. We write $T_{\mathfrak{o}}$ for a spanning tree $T$ with a chosen order $\mathfrak{o}$ of its edges. It identfies one such simplex.

Such simplices will each provide one of the lower triangular matrices defining our coactions. If $spt(G)$ is the number of spanning trees of a graph $G$, we get $spt(G)\times e_T!$ such matrices where we use that $e_T=e_G-|G|$ is the same for all spanning trees $T$ of $G$, as there are  $e_T!$ different matrices for each of the $spt(G)$ different $e_T$-cubes $\mathsf{Cub}_G^T$.

\section*{Appendix 2: The lower triangular matrices $M(G,T_{\mathsf{o}})$}
\addcontentsline{toc}{section}{Appendix 2}
Consider a pair $(G,T_{\mathfrak{o}})$ where $G$ is a bridgeless Feynman graph and $T_{\mathfrak{o}}$ a spanning tree $T$ of $G$ with an ordering $\mathfrak{o}$ of its edges $e\in E_T$. There are $e_T!$ such orderings
where $e_T$ is the number of edges of $T$\footnote{In the parametric representation $\mathfrak{o}$ orders them by length.}.

To such a pair we associate a $(e_T+1)\times (e_T+1)$ lower triangular square matrix 
\[
M=M(G,T_{\mathfrak{o}})
\] 
with
$M_{ij}\in H_{GF}$. 

More precisely, $M_{ij}\in \mathsf{Gal}(M)\equiv \mathsf{Gal}(G,T_{\mathfrak{o}})$,
where $\mathsf{Gal}(G,T_{\mathfrak{o}})\subsetneq H_{GF}$ is the set of Galois correspondents of $(G,T_{\mathfrak{o}})$, i.e.\
the graphs  which can be obtained from $G$ by removing or shrinking edges of $T$ in accordance with $\mathfrak{o}$.

As stated above for a pair $(G,T)$ there are $e_T!$ such matrices
$M(G,T_{\mathfrak{o}})$ generated by the corresponding $e_T$-cube of the cubical chain complex associated to any pair $(G;T)\in G_F$ \cite{MarkoDirk}.

$M$ is defined through its entries $M_{ij}\in \mathsf{Gal}(G,T_{\mathfrak{o}})$, $j\leq i$,
\[
M_{ij}:=(G/E_j,(T/E_j\setdiff E^i).
\]
Here $E_j$ is the set given by the first $(e_T-j+1)$-entries of the set 
\[
\{\emptyset, e_{e_T},e_{j-1},\ldots, e_1\}
\]
 and $E^i$ by the first $i$ entries of 
$\{\emptyset, e_1,\ldots,e_{e_T}\}$. We shrink edges in reverse order and remove them in order.

Define the map
$\Delta^M\equiv \Delta^{(G,T_{\mathfrak{o}})}:\mathsf{Gal}(G,T_{\mathfrak{o}})\to \mathsf{Gal}(G,T_{\mathfrak{o}})\otimes \mathsf{Gal}(G,T_{\mathfrak{o}})$,
\be\label{coact}
\Delta^{(G,T_{\mathfrak{o}})}\left(M\right)_{jk}=\sum_{i=1}^{e_T+1}
\left(M\right)_{ik}\otimes \left(M\right)_{ji},
\ee
as before. We often omit the superscript $\{\}^{(G,T_{\mathfrak{o}})}$ when not necessary.
 
Let us now define 
\[
V_1=: \mathsf{Gal}(G,T_{\mathfrak{o}})_/\subsetneq \mathsf{Gal}(G,T_{\mathfrak{o}}),
\]
as the $\mathbb{Q}$-span of elements $M_{j,1}$, $j\geq 2$.

Then we can regard the coproduct $\Delta^M$ as a coaction
\[
\rho_{\Delta^M}:\,\mathsf{Gal}(G,T_{\mathfrak{o}})_/\to
\mathsf{Gal}(G,T_{\mathfrak{o}})_/\otimes \mathsf{Gal}(G,T_{\mathfrak{o}}).
\]

Soon we will evaluate entries in $M_\One$ by Feynman rules.
\[
(M_{\One}){ij}\to \Phi_R(M_{ij})/\Phi_R(M_{jj}). 
\]
This normalization $M\to M_{\mathsf{D}}\times M_\One$ to the leading singularities is common \cite{Brittoetal}.
\section*{ Appendix 3: Summing orders and trees}
\addcontentsline{toc}{section}{Appendix 3}
Let us first consider the sectors we are integrating over.
A graph $G$ provides $e_G!$ sectors. We partition them as follows.
We have $e_G=e_T+|G|$. 
Then
\[
\frac{e_G!}{e_T!\times|G|!}\geq spt(G),
\]
with equality only for $|G|=1$ and the dual of one-loop graphs ('bananas') and $spt(G)$ is the number of spanning trees of $G$ (see also \cite{MarkoDirk}). We note that $e_T!\times |G|!$ is the number of sectors
\[
a_{e_i}\geq a_{e_f} \Leftrightarrow e_i\in E_G\setminus E_T\wedge e_f\in E_T,
\]
where each edge not in the spanning tree is larger than each edge in the spanning tree. This allows to shrink all $e_T$  edges in the spanning tree in any order in accordance with the spine being a deformation retract in the Culler--Vogtmann Outer Space \cite{CullerV}.

The difference 
\[
e_G!-spt(G)\times e_T!\times |G|!
\]
are the sectors where at least one loop shrinks. Any spanning tree $T$ defines a basis of $|G|$ loops $l_i$, $1\leq i\leq |G|$,  provided by a path $p_i$ in $T$ connecting the two ends of an edge
$e_i\in E_G\setminus E_T$. We say that $e_i$ generates $l_i$.

For any given $T$ the sectors where a loop shrinks fulfill two conditions\\
i) for any $l_i$, $a_{e_i}\geq a_e,\,\forall e\in E_{p_i}$,\\
ii) it is not a sector for which 
$a_{e_i}\geq a_{e_f} \Leftrightarrow e_i\in E_G\setminus E_T\wedge e_f\in E_T,
$ holds.\\ The latter condition ii) ensures that when shrinking $e_T$ edges at least one edge in $E_G\setminus E_T$ and hence a loop shrinks. The former condition i) ensures that each loop $l_i$ retracts to its generator $e_i$. 

\begin{ex}
As an example we consider the Dunce's cap and the wheel with three spokes graph.\\
The Dunce's cap: Each spanning tree $T$ gives rise to $2!\times 2!$ sectors $e_T!\times |G|!$. There are five spanning trees, so this covers twenty sectors where no loop shrinks. There are four edges in the Dunce's
cap so we get $4!$ sectors. 
For the four missing sectors four  spanning trees provide one each.\\
The wheel with three spoke graph:\\
$e_T!\times |G|!=3!\times 3!=36$ and there are $16$ spanning trees giving us 576 sectors. The $16$ spanning trees correspond to $16$ choices of three edges while there are $20=({6\atop 3})$ such choices  altogether. There are $6!=720$ sectors.
The missing $144=(20-16)\times 3!\times 3!$ sectors come from the four triangle subgraphs
providing $4\times 3!\times 3!$ sectors.\\
This ends our example.
\end{ex} 
As a result if we let $n(T_{\mathfrak{o}})$ be the number of sectors provided by an ordered spanning tree we have
\begin{lem}
\[
e_G!=\sum_{T\in\mathcal{T}}\sum_{\mathfrak{o}}n(T_{\mathfrak{o}}).
\]
\end{lem}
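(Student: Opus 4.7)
The plan is to exhibit an explicit map $\sigma\mapsto (T(\sigma),\mathfrak{o}(\sigma))$ from the $e_G!$ sectors of $G$ to ordered spanning trees such that the fiber over $(T,\mathfrak{o})$ has exactly $n(T_\mathfrak{o})$ elements. Summing fiber sizes then yields $e_G!=\sum_{T,\mathfrak{o}}n(T_\mathfrak{o})$, which is the claim.

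The map is the matroid greedy algorithm applied to the sector viewed as an edge weighting. Writing a sector as a total order $a_{e_{i_1}}<a_{e_{i_2}}<\cdots<a_{e_{i_{e_G}}}$, process edges in this increasing order (the order in which they are shrunk along the spine) and add each to a growing subforest whenever doing so does not complete a cycle. Since $G$ is bridge-free and connected, this produces a unique spanning tree $T(\sigma)$, and the tree edges inherit a linear order $\mathfrak{o}(\sigma)$ from the sector.

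The core of the argument is to characterize the fiber over a fixed $(T,\mathfrak{o})$. I claim it consists of those sectors $\sigma$ satisfying (a) the restriction of $\sigma$ to $E_T$ is $\mathfrak{o}$, and (b) for every $e_i\in E_G\setminus E_T$ one has $a_{e_i}\geq a_e$ for all $e\in E_{p_i}$, where $p_i$ is the unique path in $T$ connecting the endpoints of $e_i$. Necessity of (a) is immediate from the construction. For necessity of (b): if some $e\in E_{p_i}$ satisfied $a_e>a_{e_i}$, then at the greedy step treating $e_i$ the path $p_i$ would still be incomplete in the current subforest, so $e_i$ would not close a cycle and would be added, forcing $T(\sigma)\neq T$. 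Sufficiency is symmetric: under (a) and (b), whenever the algorithm encounters a non-tree edge $e_i$, the full path $p_i$ has already been added, so $e_i$ closes the cycle $l_i$ and is rejected; every tree edge is consequently added, in the prescribed order $\mathfrak{o}$.

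Conditions (a) and (b) together disjointly assemble the ``regular'' sectors (where every non-tree edge exceeds every tree edge, giving the $|G|!$ regular sectors per ordered spanning tree, that is $e_T!\,|G|!$ per unordered tree) and the ``loop-shrinking'' sectors isolated by conditions i) and ii) of the paragraph preceding the lemma. Hence $n(T_\mathfrak{o})$ equals the fiber cardinality, and the formula follows by partitioning the $e_G!$ sectors into fibers. The one subtlety I would treat carefully is ensuring that (b) really handles all non-tree edges simultaneously when the paths $p_i$ overlap: the greedy rejection criterion simply asks whether an edge closes a cycle with the current subforest, and closure along $l_i$ is detected precisely when $p_i$ itself has been fully added, so the loop-by-loop condition (b) is exactly compatible with the global greedy procedure.
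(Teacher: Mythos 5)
Your proof is correct. It arrives at the same decomposition of the $e_G!$ sectors that the paper uses --- your conditions (a) and (b) are exactly the union of the ``regular'' sectors attached to $T_{\mathfrak{o}}$ (all non-tree edges larger than all tree edges, which satisfy (b) automatically since each $p_i\subseteq E_T$) with the loop-shrinking sectors singled out by conditions i) and ii) in the paragraph preceding the lemma --- but you supply the one ingredient the paper leaves implicit. The paper asserts that these conditions distribute the sectors over the ordered spanning trees and then only verifies the resulting count on the Dunce's cap ($24=20+4$) and the wheel with three spokes; it never argues that every sector satisfies condition i) for exactly one spanning tree. Your greedy (Kruskal) map $\sigma\mapsto(T(\sigma),\mathfrak{o}(\sigma))$ makes disjointness and exhaustiveness of the assignment automatic, since the algorithm is deterministic, and your fiber characterization identifies each fiber with the sector class counted by $n(T_{\mathfrak{o}})$, so the identity follows by summing fiber cardinalities. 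Two small expository points, neither affecting correctness: in the necessity argument for (b) you should say explicitly that, under the hypothesis $T(\sigma)=T$, the subforest present when $e_i$ is processed is contained in $T$ (namely it is $\{e\in E_T:\,a_e<a_{e_i}\}$), so the only candidate path joining the ends of $e_i$ is $p_i$ --- this is what licenses the inference from ``$p_i$ incomplete'' to ``no cycle is closed''; and bridge-freeness is not needed for the greedy algorithm to return a spanning tree, connectedness suffices.
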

It thus makes sense to assign a union of sectors $\mathsf{sec}_{T_{\mathfrak{o}}}=\amalg_{j=1}^{n(T_{\mathfrak{o}})}\mathsf{sec}_j$ to each ordered spanning tree $T_{\mathfrak{o}}$. Here $\mathsf{sec}_j\in\mathcal{SEC}_T^{\mathfrak{o}}$, the set of sectors compatible with $T$ and its order of edges $\mathfrak{o}$.
 
We have a coaction $\rho_{\Delta^{T^{\mathfrak{o}}}}$ and coproduct $\Delta^{T^{\mathfrak{o}}}$ for each ordered spanning tree $T^{\mathfrak{o}}$ with a corresponding set $\mathsf{Gal}(G,T_{\mathfrak{o}})$ for each.

We define
\[
\mathsf{Gal}(G):=\amalg_T\amalg_{\mathfrak{o}}\mathsf{Gal}(G,T_{\mathfrak{o}}).
\]
This gives rise to a corresponding matrix $M_G$ formed from $M(G,T_\mathfrak{o})$ and corresponding coproduct and coaction $\Delta^G$.


\begin{thebibliography}{99.}%
\bibitem{Brittoetal} 
Samuel Abreu, Ruth Britto, Claude Duhr, Einan Gardi, {\em 
Diagrammatic Hopf algebra of cut Feynman integrals: the one-loop case},
Journal reference:	JHEP 1712 (2017) 090
DOI:	10.1007/JHEP12(2017)090
Cite as:	arXiv:1704.07931 [hep-th]
\bibitem{BlKr} Spencer Bloch and Dirk Kreimer (2010)
{\em Feynman amplitudes and Landau singularities
for one-loop graphs}
Communications in
number theory and physics
Volume 4, Number 4, 709-753, 2010.
\bibitem{BlKerr}
Spencer Bloch, Matt Kerr, Pierre Vanhove,
{\em A Feynman integral via higher normal functions},
Compositio Mathematica 151 (2015) 2329-2375,\\
DOI:	10.1112/S0010437X15007472\\
arXiv:1406.2664 [hep-th]
\bibitem{Weinz}
Christian Bogner, Stefan M\"uller-Stach, Stefan Weinzierl, {\em The unequal mass sunrise integral expressed through iterated integrals on $\overline{\mathcal M}_{1,3}$} ,\\
Nucl.Phys.B 954 (2020) 114991,\\ 
e-Print: 1907.01251 [hep-th]
\bibitem{BrownI} Francis Brown (2017)
{\em Feynman amplitudes, coaction principle, and cosmic Galois group}
Communications in Number Theory and Physics
Volume 11 (2017)
Number 3
Pages: 453 – 556
DOI:{$ https://dx.doi.org/10.4310/CNTP.2017.v11.n3.a1$}
\bibitem{BrownII} Francis Brown (2017)
{\em Notes on motivic periods}
Communications in Number Theory and Physics
Volume 11 (2017)
Number 3
Pages: 557 – 655
DOI:{$ https://dx.doi.org/10.4310/CNTP.2017.v11.n3.a2$}
\bibitem{BrownIII} Francis Brown (2014)
{\em Coaction structure for Feynman amplitudes and a
small graphs principle},
{\tt http://www.ihes.fr/~brown/OxfordCoaction.pdf}
\bibitem{Taskupovic} Matija Tapuskovic (2019)
{\em Motivic Galois coaction and one-loop Feynman graphs}
arXiv:1911.01540 [math.AG]
\bibitem{CullerV} Marc Culler and Karen Vogtmann (1986)
{\em Moduli of graphs and automorphisms of free groups}.
Invent. Math., 84(1):91–119.
\bibitem{Kontsevich} James Conant and Karen Vogtmann (2003) {\em On a theorem of Kontsevich}
Algebr.\ Geom.\ Topol.\
Volume 3, Number 2 (2003), 1167-1224.
\bibitem{CoHaKaVo} 
James Conant, Allen Hatcher, Martin Kassabov and Karen Vogtmann,
{\em Assembling homology classes in automorphism groups of free groups}
Comment. Math. Helv. 91 (2016), no.4, 751-806.
\bibitem{Steinmann}
O.\ Steinmann, {\em \"Uber den Zusammenhang zwischen den
Wightmanfunktionen und den retardierten
Kommutatoren,} Helv. Physica Acta 33 (1960) 257.\\
O.\ Steinmann, {\em Wightman-Funktionen und retardierten
Kommutatoren. II,} Helv. Physica Acta 33 (1960) 347.
\bibitem{SchmittI}
W.\ R.\ Schmitt, {\em Incidence Hopf Algebras}, Journal of Pure and Applied Algebra 96 (1994)
299–330.
\bibitem{SchmittII}
W.\ R.\ Schmitt, {\em Antipodes and Incidence Coalgebras}, J. Combin. Theory Ser. A 46 (1987)
264–290.
\bibitem{BlKrCut} Bloch, Kreimer, {\em Cutkosky rules and Outer Space} arXiv: 1512.01705.
\bibitem{MarkoDirk}
Marko Berghoff, Dirk Kreimer, 2020, {\em Graph complexes and Feynman rules},
arXiv:2008.09540 [hep-th].
\bibitem{BrownKreimer}
Francis Brown and Dirk Kreimer, {\em Angles, Scales and Parametric Renormalization},
Lett.Math.Phys.\ 103 (2013) 933-1007, e-Print: 1112.1180 [hep-th].
\bibitem{rational} Allen Hatcher and Karen Vogtmann (1998)
{\em Rational homology of $\mathrm{Aut}(F_n)$}
Mathematical Research Letters 5  759–780.
\bibitem{Polkinghorne} 
M. J. W. Bloxham, D. I. Olive, and J. C. Polkinghorne,
{\em S-Matrix Singularity Structure in the
Physical Region. III. General Discussion of
Simple Landau Singularities},
Journal of Mathematical Physics 10, 553 (1969); {$doi.org /10.1063/1.1664876$}
\bibitem{ELOP} R. Eden and P. Landshoff and D. Olive and J. Polkinghorne (1966) {\em The analytic S-matrix} Cambridge: University Press
\bibitem{Turaev} Vladimir Turaev, 
{\em Loops on surfaces, Feynman diagrams, and trees}
arXiv: hep-th/0403266.
\end{thebibliography}
\end{document}